
\documentclass[12pt,twoside,fleqn]{article}
  \newdimen\paravsp  \paravsp=1.3ex
\topmargin=-25mm  \oddsidemargin=-5mm \evensidemargin=-5mm
\textwidth=17cm \textheight=24cm
\usepackage{latexsym,amsmath,amssymb}
\usepackage{ntheorem}
\usepackage{tikz}
\sloppy\lineskip=0pt\allowdisplaybreaks[4]

\def\beq{\begin{equation}}    \def\eeq{\end{equation}}
\def\beqn{\begin{displaymath}}\def\eeqn{\end{displaymath}}
\newtheorem{theorem}{Theorem}
\newtheorem{corollary}[theorem]{Corollary}

\newtheorem{definition}[theorem]{Definition}
\newtheorem{example}[theorem]{Example}
\newtheorem{proposition}[theorem]{Proposition}

\newtheorem{axiom}[theorem]{Axiom}
\newtheorem{rule1}[theorem]{Rule}
\def\paradot#1{\paragraph{#1.}}
\newenvironment{keyword}{\centerline{\bf\small
Keywords}\begin{quote}\small}{\par\end{quote}\vskip 1ex}
\newenvironment{proof}{{\vskip 1ex \noindent\bf Proof.}}{\vskip 1ex}

\def\email#1{\rm\texttt{#1}}
\def\req#1{(\ref{#1})}
\def\nq{\hspace{-1em}}
\def\qed{\hspace*{\fill}\rule{1.4ex}{1.4ex}$\quad$\\}
\def\eoe{\hspace*{\fill} $\diamondsuit\quad$\\} 
\def\fr#1#2{{\textstyle{#1\over#2}}}
\def\SetR{\mathbb{R}}
\def\SetN{\mathbb{N}}
\def\a{\alpha}                  
\def\eps{\varepsilon}           
\def\model{\text{\it mod}}      
\def\sepmodel{\widehat{\text\it mo\smash d}} 
\def\tfn#1{\underline{#1}}      
\def\mexp#1{\exp\bigg\{\!#1\bigg\}} 
\def\ind#1{[\![ #1 ]\!]}        
\def\ph{\varphi}                
\def\va{{\scriptstyle\cal V}}   
\def\B{{\cal B}}                
\def\I{{\cal I}}                
\def\S{{\cal S}}                
\def\E{{\mathbb E}}             
\def\Sat{\text{\it Sat}}        
\def\Nat{\text{\it Nat}}        
\def\tB{o}                      
\def\KL{\text{KL}}              
\def\prior{\xi}                 
\def\D{\mathcal{D}}             
\def\V{{\cal V}}                

\begin{document}

\title{
\vskip 2mm\bf\Large\hrule height5pt \vskip 4mm
Probabilities on Sentences in an Expressive Logic
\vskip 4mm \hrule height2pt}
\author{
\begin{minipage}{0.49\textwidth}\centering
{\bf Marcus Hutter}\\[2mm]
\normalsize Research School of Computer Science\\
The Australian National University\\
\email{marcus.hutter@anu.edu.au}
\end{minipage}\hfill
\begin{minipage}{0.49\textwidth}\centering
{\bf John W.~Lloyd}\\[2mm]
\normalsize Research School of Computer Science \\
The Australian National University \\
\email{john.lloyd@anu.edu.au}
\end{minipage}
\\[6ex]
\begin{minipage}{0.49\textwidth}\centering
{\bf Kee Siong Ng}\\[2mm]
\normalsize EMC Greenplum and\\
The Australian National University \\
\email{keesiong.ng@emc.com\\}
\end{minipage}
\begin{minipage}{0.49\textwidth}\centering
{\bf William T.\;B.~Uther}\\[2mm]
\normalsize National ICT Australia and \\
University of New South Wales \\
\email{william.uther@nicta.com.au}
\end{minipage}\hfill
\\[4ex]
}

\date{12 September 2012}

\maketitle

\begin{abstract}
\footnote{Presented at the Fifth Workshop on Combining
Probability and Logic (Progic 2011) in New York.}
%
Automated reasoning about uncertain knowledge has many applications.
One difficulty when developing such systems is the lack
of a completely satisfactory integration of logic and probability.
We address this problem directly.
%
Expressive languages like higher-order logic are ideally suited
for representing and reasoning about structured knowledge.
Uncertain knowledge can be modeled by using graded probabilities
rather than binary truth-values.
%
The main technical problem studied in this paper is the following:
Given a set of sentences, each having some probability of being true,
what probability should be ascribed to other (query) sentences?
%
A natural wish-list, among others, is that the probability distribution
(i) is consistent with the knowledge base, %
(ii) allows for a consistent inference procedure and in particular %
(iii) reduces to deductive logic in the limit of probabilities being 0 and 1, %
(iv) allows (Bayesian) inductive reasoning and %
(v) learning in the limit and in particular %
(vi) allows confirmation of universally quantified hypotheses/sentences.
%
We translate this wish-list into technical requirements for a prior probability
and show that probabilities satisfying all our criteria exist.
We also give explicit constructions and several general
characterizations of probabilities that satisfy some or all of
the criteria and various (counter) examples.
%
We also derive necessary and sufficient conditions for
extending beliefs about finitely many sentences to suitable
probabilities over all sentences,
and in particular least dogmatic or least biased ones.
%
We conclude with a brief outlook on how the developed theory might
be used and approximated in autonomous reasoning agents.
%
Our theory is a step towards a globally consistent and
empirically satisfactory unification of probability and logic.
\newpage
\def\contentsname{\centering\normalsize Contents}
{\parskip=-2.7ex\tableofcontents}
\end{abstract}

\begin{keyword}
higher-order logic;
probability on sentences;
Gaifman;
Cournot;
Bayes;
induction;
confirmation;
learning;
prior;
knowledge;
entropy.
\end{keyword}

\begin{quote}\it
``The study of probability functions defined over the sentences
of a rich enough formal language yields interesting insights in
more than one direction.'' \par \hfill --- {\sl Haim Gaifman (1982)}
\end{quote}

\section{Introduction}

\paradot{Motivation}
Sophisticated computer applications generally require
expressive languages for knowledge representation and
reasoning. In particular, such languages need to be able to
represent both structured knowledge and uncertainty
\cite{nilsson86,halpern-uncertainty,muggleton95stochastic,DeRaedt-Kersting,richardson06,hajek,williamson02}.
%
A suitable language for this purpose is higher-order logic
\cite{church,henkin,andrews2,LogicforLearning,vanbenthem-doets,leivant,shapiroHOL},
which admits higher-order functions that can take
functions as arguments and/or return functions as results.
This facility is convenient for probabilistic modeling since
it means that theories can contain probability
densities \cite{farmer,pfeffer07,Church-language}.
In particular, many forms of probabilistic reasoning can be done
in higher-order logic using the traditional axiomatic method: a
theory can be written down which has the intended interpretation as a model
and then conventional proof and computation techniques can be
used to answer queries \cite{ng-lloyd-JAL,ng-lloyd-uther-AMAI}.
%
While such a computational approach is effective, it is sometimes
more natural to pose a problem as one where
the probability of some sentences in the
theory being true
may be strictly less than one and/or the query sentence (and its
negation) may not be a logical consequence of the theory. In
such cases, deductive reasoning
does not suffice for answering queries and it becomes necessary
to use probabilistic methods
\cite{paris-book,kersting07blp,richardson06,muggleton95stochastic,milch07}.

\paradot{Main aim}
These considerations lead to the main technical issue studied
in this paper:
\begin{quote}
  Given a set of sentences, each having some probability of being true,\\
  what probability should be ascribed to other (query) sentences?
\end{quote}

\noindent We build on the work of Gaifman
\cite{gaifman64} whose paper with Snir \cite{gaifman-snir}
develops a quite comprehensive theory of probabilities on sentences
in first-order Peano arithmetic.
We take up these ideas, using
non-dogmatic priors \cite{gaifman-snir} and additionally the
minimum relative entropy principle as in \cite{Williamson:08},
but for general theories and in a higher-order setting.
%
We concentrate on developing probabilities on sentences in a
higher-order logic. This sets the stage for
combining it with the probabilities inside sentences approach
\cite{ng-lloyd-JAL,ng-lloyd-uther-AMAI}.

\paradot{Summary of key concepts}
Section~\ref{Logic} introduces higher-order logic and its relevant properties.
We use the higher-order logic
(Definitions~\ref{def:type}, \ref{def:term}, and \ref{def:interpretation}) based on Church's simple theory of types
\cite{church,henkin,andrews2}.
We employ the Henkin semantics and make use of a particular class of
interpretations, called separating interpretations (Definition~\ref{def:separating}).

Section~\ref{Probabilities on Sentences} gives the definition
of probabilities on sentences in higher-order logic
(Definition~\ref{def:prob_phi}), introduces the Gaifman
condition, and develops some basic properties of such
probabilities.
Section~\ref{Probabilities on Interpretations} then introduces
probabilities on interpretations and shows their close
connection with probabilities on sentences.
Gaifman \cite{gaifman64} (generalized in
Definition~\ref{def:Gaifman} and Propositions~\ref{prop:Gaifman},~\ref{prop:LCA},~\ref{prop:GCA}) introduced a condition,
called Gaifman in \cite{scott-krauss}, that connects
probabilities of quantified sentences to limits of
probabilities of finite conjunctions.
In our case, it effectively restricts probabilities to
separating interpretations while maintaining countable
additivity.

While generally accepted in probability theory
(Definition~\ref{def:probmeasure}), some circles argue that countable additivity (CA)
does not have a good philosophical justification, and/or that
it is not needed since real experience is always finite, hence
only non-asymptotic statements are of practical relevance, for
which CA is not needed. On the other hand, it is usually much
easier to first obtain asymptotic statements which requires CA,
and then improve upon them. Furthermore we will show that CA
can guide us in the right direction to find good finitary prior
probabilities.

Another principle which has received much less attention than CA
but is equally if not more important is that of Cournot
\cite{Cournot:1843,Shafer:06}: An event of probability (close
to) zero singled out in advance is physically impossible; or
conversely, an event of probability 1 will physically happen for sure.
In short: zero probability means impossibility. The history of
the semantics of probability is stony \cite{Fine:73}. Cournot's
``forgotten'' principle is one way of giving meaning to
probabilistic statements like, ``the relative frequency of
heads of a fair coin converges to 1/2 {\em with probability
1}''. The contraposition of Cournot is that one must assign
non-zero probability to possible events. If ``events'' are
described by sentences and ``possible'' means it is possible to
satisfy these sentences, i.e. they possess a model, then
we arrive at the strong Cournot principle that satisfiable sentences should be
assigned non-zero probability (Definitions~\ref{def:stronglyCournot}
and~\ref{def:stronglyCournotmustar}).
This condition has been
appropriately called `non-dogmatic' in \cite{gaifman-snir}. As
long as something is not proven false, there is a (small)
chance it is valid in the intended interpretation. This
non-dogmatism is crucial in Bayesian inductive reasoning, since
no evidence (however strong) can increase a zero prior belief
to a non-zero posterior belief \cite{Hutter:11uiphil}.
The Gaifman condition is inconsistent with the strong Cournot
principle (Example~\ref{ex:StronglyCournotNotGaifman}),
but consistent with a weaker version (Definition~\ref{def:Cournot}).
Probabilities that are Gaifman and (plain, not strong)
Cournot allow learning in the limit (Theorem~\ref{thm:CUH} and
Corollary~\ref{cor:LIL}). 

A standard way to construct (general / Cournot / Gaifman)
probabilities on sentences is to construct (general /
non-dogmatic / separating) probabilities on
interpretations, and then transfer them to sentences
(Propositions~\ref{prop:mustartomu}, \ref{prop:mustartomuseparating},
and~\ref{prop:CournotMSeqM}).
At the same time we give model-theoretic characterizations
of the Gaifman condition (Corollary~\ref{cor:mueqmuseparating})
and the Cournot condition (Definition~\ref{def:Cournotmustar}).
In Section~\ref{Existence of Probabilities}, we give a
particularly simple construction of a probability that is
Cournot and Gaifman (Theorem~\ref{thm:CandG}) and a complete
characterization of general/Cournot/Gaifman probabilities
(Theorems~\ref{thm:charGandC} and \ref{thm:char} and
Corollary~\ref{cor:GandC}).
We also give various examples of (strong) (non)Cournot and/or
Gaifman probabilities and (non)separating interpretations for
countable domains (Examples~\ref{ex:SINat}, \ref{ex:NSINat},
and \ref{ex:iota}) and finite domains
(Examples~\ref{ex:NotGaifman},
\ref{ex:StronglyCournotNotGaifman}, \ref{ex:GaifmanNotCournot},
\ref{ex:CournotNotStronglyCournot}).

Section~\ref{sec:Extension} considers the
important practical situation of whether a real-valued function
on a set of sentences can be extended to a probability on all
sentences; a necessary and sufficient condition is given for
this, as is a method for determining such probabilities using
minimum relative entropy introduced in
Section~\ref{sec:entropy}.
Prior knowledge and data constrain our (belief) probabilities
in various ways, which  we need to take into account when
constructing probabilities. Prior knowledge is usually given in
the form of probabilities on sentences like ``the coin has head
probability 1/2'', or facts like ``all electrons have the same
charge'', or non-logical axioms like ``there are infinitely
many natural numbers''. They correspond to requiring their
probability to be 1/2, extremely close to 1, and 1,
respectively. It is therefore necessary to be able to go
from probabilities on sentences to probability on
interpretations (Proposition~\ref{prop:mutomustar}). This
allows us to prove various necessary and sufficient conditions
under which such partial probability specifications can be
completed and what properties they have
(Propositions~\ref{prop:ExtProb} and~\ref{prop:SAandE}). In
particular we show that hierarchical probabilistic knowledge
(Definitions~\ref{def:HierarchicalS}) is always
probabilistically consistent (Proposition~\ref{prop:ERC}).
%
Further, seldom does knowledge constrain the probability on all
sentences to be uniquely determined. In this case it is natural
to choose a probability that is least dogmatic or biased
\cite{nilsson86,Williamson:08}. The minimum relative entropy
(Definition~\ref{def:relentropy}) principle can be used to
construct such a unique minimally more
informative probability that is consistent with our prior
knowledge (Definition~\ref{def:mmimu} and
Propositions~\ref{prop:mmimu} and~\ref{prop:mmiGmu}).

Section~\ref{sec:Examples} is a brief outlook on how the
developed theory might be used and approximated in autonomous
reasoning agents. In particular, certain knowledge, learning in
the limit (\ref{cor:LIL}), the infamous black raven paradox,
and the Monty Hall problem are discussed, but only briefly.
The paper ends with a more detailed discussion in
Section~\ref{Discussion} of the broader context and motivation
of this work, as well as related results in the literature,
the outline of a framework for probabilistic reasoning and
modeling in higher-order logic, and future research
directions.

While some of the results presented in this paper are known in
the first-order case and their extension to the higher-order
case is straightforward, it nevertheless seems useful to
provide a survey of this material (with proofs included). Also,
many beautiful ideas in the long and technical paper by Gaifman
\cite{gaifman-snir} deserve wider attention than they have
received. We hope our exposition helps to rectify this
situation.

\section{Logic}\label{Logic}

We review here a standard formulation of higher-order logic \cite{andrews2}
that is based on Church's simple theory of types \cite{church}.
Other references on higher-order logic include
\cite{LogicforLearning,farmer,vanbenthem-doets,leivant,shapiroHOL}.
Some discussion of the interesting history of the simple theory of types
is given in \cite{andrews2,farmer}.

The best way to think about higher-order logic is that it is
the formalization of everyday informal mathematics: whatever
mathematical description one might give of some situation, the
formalization of that situation in higher-order logic is likely
to be a straightforward translation of the informal description.
In particular, higher-order logic provides a
suitable foundation for mathematics itself which has several
advantages over more traditional approaches that are based on
axiomatizing sets in first-order logic.
Furthermore, higher-order logic is the logical formalism of choice for much
of theoretical computer science and also applications areas
such as software and hardware verification.
For a convincing account of the advantages of higher-order over
first-order logic in computer science, see \cite{farmer}.

The logic presented here differs in a minor way from that in \cite{andrews2}
in that we omit the description operator $\iota$, for reasons that are discussed later.
All the results from \cite{andrews2} that are used here also hold for the logic
with $\iota$ omitted, by obvious changes to their proofs.
In addition the notation for the logic used here differs somewhat from that in \cite{andrews2},
but the correspondences will always be clear.
There are also a few differences in terminology here compared to \cite{andrews2}
that are noted along the way.

We begin with the definition of a type.

\begin{definition}[type $\a$]\label{def:type}
A {\em type} is defined inductively as follows.
\begin{enumerate}\itemsep1mm\parskip0mm
\item $\tB$ is a type.
\item $\imath$ is a type.
\item If $\alpha$ and $\beta$ are types, then $\alpha \to \beta$ is a type.
\end{enumerate}
\end{definition}

In this definition, $\tB$ is the type of the truth values, $\imath$ is the type of individuals,
and $\alpha \rightarrow \beta$ is the type of functions from elements of type $\alpha$ to elements of type $\beta$.
We use the convention that $\to$ is right associative. So, for
example, when we write $\a\to\beta\to\gamma\to\kappa$
we mean $\a\to(\beta\to(\gamma\to\kappa))$.
A {\em function type} is a type of the form $\alpha\to\beta$, for some $\alpha$ and $\beta$.

There is a denumerable list of variables of each type.
The logical constants are
$=_{\alpha \rightarrow \alpha \rightarrow\tB}$, for each type $\alpha$.
The denotation of equality $=_{\alpha \rightarrow \alpha \rightarrow\tB}$ is the identity relation between individuals
of type $\alpha$.
In addition, there may be other non-logical constants of various types.
The {\em alphabet} is the set of all variables and constants.

Next comes the definition of a term.

\begin{definition}[term $t$]\label{def:term}
A {\em term}, together with its type, is defined inductively as follows.
\begin{enumerate}\itemsep1mm\parskip0mm
\item A variable of type $\a$ is a term of type $\a$.
\item A constant of type $\a$ is a term of type $\a$.
\item If $t_\beta$ is a term of type $\beta$ and $x_\alpha$ a variable
    of type $\a$, then $\lambda x_\alpha.t_\beta$ is a term of type $\a\to\beta$.
\item If $s_{\a\to\beta}$ is a term of type $\a\to\beta$ and $t_\alpha$ a
    term of type $\a$, then $(s_{\a\to\beta} \; t_\alpha)$ is a term of type $\beta$.
\end{enumerate}
A {\em formula} is a term of type $\tB$.
A {\em closed} term is a term with no free variables.
A {\em sentence} is a closed formula.
A {\em theory} is a set of formulas.
\end{definition}

If the set of non-logical constants is countable, then the set of terms is denumerable.
As shown in \cite[p.212]{andrews2}, 
using equality, it is easy to define
$\top_\tB$ (truth), $\bot_\tB$ (falsity), $\land_{\tB \rightarrow\tB \rightarrow\tB}$ (conjunction),
$\lor_{\tB \rightarrow\tB \rightarrow\tB}$ (disjunction),
$\lnot_{\tB \rightarrow\tB}$ (negation), $\forall x_\alpha.t_\tB$ (universal quantification),
and $\exists x_\alpha.t_\tB$ (existential quantification).
The axioms for the logic are as follows \cite[p.213]{andrews2}: 

\begin{axiom}[logical axioms]\hfill\par
\begin{enumerate}
\item Truth values: $(g_{\tB \rightarrow \tB} \; \top_\tB) \wedge (g_{\tB \rightarrow \tB} \; \bot_\tB) = \forall x_\tB. (g_{\tB \rightarrow \tB} \; x_\tB)$ 
\item Leibniz' law: $(x_\alpha = y_\alpha) \rightarrow ((h_{\alpha \rightarrow \tB} \; x_\alpha) = (h_{\alpha \rightarrow \tB} \; y_\alpha))$ 
\item Extensionality: $(f_{\alpha \rightarrow \beta} = g_{\alpha \rightarrow \beta}) =
              \forall x_\alpha. ((f_{\alpha \rightarrow \beta} \; x_\alpha) = (g_{\alpha \rightarrow \beta} \; x_\alpha))$
\item $\beta$-reduction: $(\lambda {\mathbf x_\alpha}. {\mathbf t_\beta} \; {\mathbf s_\alpha)} = {\mathbf t_\beta} \{ {\mathbf x_\alpha}/{\mathbf s_\alpha} \}
\;\;\; \text{(provided that $\mathbf{s_\alpha}$ is free for $\mathbf{x_\alpha}$ in $\mathbf{t_\beta}$)}$
\end{enumerate}
\end{axiom}
In the above, $g_{\tB \rightarrow \tB}$, \ldots are variables of the indicated type,
$\mathbf{x_\alpha}$ is a syntactical variable for variables of type $\alpha$, and
${\mathbf t_\beta}$, \ldots are syntactical variables for terms of the indicated type.
Also ${\mathbf t_\beta} \{ {\mathbf x_\alpha}/{\mathbf s_\alpha} \}$ is the result of simultaneously substituting ${\mathbf s_\alpha}$
for all free occurrences of ${\mathbf x_\alpha}$ in ${\mathbf t_\beta}$.

Axiom (1) expresses the idea the truth and falsity are the only truth values;
Axioms (2) (for each type $\alpha$) express a basic property of equality;
Axioms (3) (for each type $\alpha \rightarrow \beta$) are the axioms of extensionality; and
Axiom schemata (4) is the axiom for $\beta$-reduction.

Here is the single rule of inference \cite[p.213]{andrews2}: 
\begin{rule1}[rule of inference; equality substitution]
From ${\mathbf t_\tB}$ and ${\mathbf s_\alpha} = {\mathbf r_\alpha}$, infer the result of replacing one occurrence
of ${\mathbf s_\alpha}$ in ${\mathbf t_\tB}$  by an occurrence of ${\mathbf r_\alpha}$, provided that the occurrence of
${\mathbf s_\alpha}$ in ${\mathbf t_\tB}$ is not (an occurrence of a variable) immediately preceded by a $\lambda$.
\end{rule1}

The logic also has an equational reasoning system that has been used as the computational basis
for a functional logic programming language \cite{LogicforLearning,ng-lloyd-JAL,ng-lloyd-uther-AMAI,lloyd-ng-JAAMAS}.

In the following, to simplify the notation, we usually omit the type subscripts on terms;
the type of a term will always either be unimportant or clear from the context.
We use $\ph,\chi,\psi$ for sentences and sometimes for formulas, and $t,r,s$ for terms.
With this notation,
$\forall x.\ph \equiv [\lambda x.\ph=\lambda x.\top]$ and
$\exists x.\ph \equiv [\lambda x.\ph\neq\lambda x.\bot]$.

The logic includes Church's $\lambda$-calculus:
a term of the form $\lambda x.t$ is an abstraction and a term of the form $(s \; t)$ is an application.

The logic is given a conventional Henkin semantics \cite{henkin}.

\begin{definition}[frame $\{ {\cal D}_\a \}_\a$]\label{def:frame}
A {\em frame} is a collection $\{ {\cal D}_\a \}_\a$ of non-empty sets, one
for each type $\a$, satisfying the following conditions.
\begin{enumerate}\itemsep1mm\parskip0mm
 \item ${\cal D}_\tB = \{ \mathsf{T}, \mathsf{F} \}$.
 \item ${\cal D}_{\beta\to\gamma}$ is some collection of functions from ${\cal D}_\beta$ to ${\cal D}_\gamma$.

\end{enumerate}
For each type $\a$, ${\cal D}_\a$ is a called a {\em domain}.
\end{definition}

The members of ${\cal D}_\tB$ are called the {\em truth values} and the members of ${\cal D}_\imath$ are called {\em individuals}.

\begin{definition}[valuation $V$]\label{def:valuation}
Given a frame $\{ {\cal D}_\alpha \}_\alpha$, a {\em valuation} $V$ is a function that maps each constant
having type $\a$ to an element of ${\cal D}_\a$ such that
$V(=_{\a \to \a \to\tB})$ is the function from ${\cal D}_\a$ into ${\cal D}_{\a\to\tB}$ defined by
        \begin{equation*}
        V(=_{\a \to \a \to\tB})\,x\,y = \begin{cases} \mathsf{T} & \text{if } x = y \\ \mathsf{F} & \text{otherwise,} \end{cases}
        \end{equation*}
        for $x,y \in {\cal D}_\a$.
\end{definition}

\begin{definition}[variable assignment $\va$]\label{def:var assignment}
A {\em variable assignment} $\va$ with respect to a frame $\{ {\cal D}_\a \}_\a$ is a function that maps each variable of type
$\a$ to an element of ${\cal D}_\a$.
\end{definition}

An interpretation can now be defined.

\begin{definition}[interpretation $\langle \{ {\cal D}_\alpha \}_\alpha, V \rangle$]\label{def:interpretation}
A pair $I \equiv \langle \{ {\cal D}_\alpha \}_\alpha, V \rangle$ is an {\em interpretation} if there is a function
$\V$ such that, for each variable assignment $\nu$ and for each term $t$ of type $\alpha$,
$\V(t,I,\nu) \in \D_\alpha$ and the following conditions are satisfied.
\begin{enumerate}\itemsep1mm\parskip0mm
 \item $\V(x,I,\va) = \va(x)$, where $x$ is a variable.
 \item $\V(C,I,\va) = V(C)$, where $C$ is a constant.
 \item $\V(\lambda x.s,I,\va) = $ the function whose value for each $d \in {\cal D}_\beta$ is
       $\V(s,I,\va')$, where $\lambda x.s$ has type $\beta\to\gamma$ and $\va'$ is $\va$
       except $\va'(x) = d$.
 \item $\V((r\;s),I,\va) = \V(r,I,\va)(\V(s,I,\va))$.
\end{enumerate}
\end{definition}

If $\langle \{ {\cal D}_\alpha \}_\alpha, V \rangle$ is an interpretation, then the function $\V$ is uniquely defined.
$\V(t,I,\va)$ is called the {\em denotation} of $t$ with respect to $I$ and $\nu$.
If $t$ is a closed term, then $\V(t,I,\va)$ is independent of $\va$ and we write it as $\V(t,I)$.
Not every pair $\langle \{ {\cal D}_\alpha \}_\alpha, V \rangle$ is an interpretation;
to be an interpretation, every term must have a denotation with respect to each variable assignment.

What is called an interpretation here is called a {\em general model} in \cite{andrews2}, following Henkin.
In \cite{andrews2}, a general model is called a {\em standard model} if, for each $\alpha$ and $\beta$, $\D_{\alpha \to \beta}$ is the
set of {\em all} functions from  $\D_\alpha$ to $\D_\beta$.
Moving from standard models to general models was the crucial step that allowed Henkin to prove
the completeness of the logic \cite{henkin}.

\begin{definition}[satisfiable]\label{def:satifiable valid}
Let $t$ be a formula, $I\equiv \langle \{ {\cal D}_\alpha \}_\alpha, V \rangle$ an interpretation,
and $\va$ a variable assignment with respect to $\{ {\cal D}_\a \}_\a$.
\begin{enumerate}\itemsep1mm\parskip0mm
 \item {\em $\va$ satisfies $t$ in $I$} if $\V(t,I,\va) = \mathsf{T}$.
 \item $t$ is {\em satisfiable in $I$} if there is a variable assignment which satisfies $t$ in $I$.
 \item $t$ is {\em valid in $I$} if every variable assignment satisfies $t$ in $I$.
 \item $t$ is {\em valid} if $t$ is valid in every interpretation.
 \item A {\em model} for a theory is an interpretation in which each formula in the theory is valid.
\end{enumerate}
\end{definition}

\begin{definition}[consistency]\label{def:consistent}
A theory is {\em consistent} if $\bot$ cannot be derived from the theory.
\end{definition}

\begin{definition}[logical consequence]\label{def:logical-consequence}
A formula $t$ is a {\em logical consequence} of a theory if
$t$ is valid in every model of the theory.
\end{definition}

We will have need for a particular class of interpretations, defined as follows.

\begin{definition}[separating interpretation/model]\label{def:separating}
An interpretation $I$ for an alphabet is {\em
separating}\index{separating
interpretation}\index{interpretation|separating} if, for every pair $r$, $s$ of closed terms
of the same function type,
say, $\alpha \rightarrow \beta$, such that $\V(r,I) \neq \V(s,I)$,
there exists a closed term $t$ of type $\alpha$ such that
$\V((r \; t), I) \neq \V((s \; t), I)$.

A {\em separating  model}\index{separating model} is a separating
interpretation that is a model (for some set of formulas).
\end{definition}

We emphasize that, in the definition of a separating
interpretation, the closed term $t$ is formed only from symbols {\em in the given alphabet}.
Intuitively, an interpretation is separating if, for every pair $r$, $s$ of closed terms of the same type
$\alpha \rightarrow \beta$, whose respective denotations in the interpretation are different,
there exists a closed term $t$ of type $\alpha$ for which the
respective denotations in the interpretation of $(r \; t)$ and $(s \; t)$ are different.
Thus, in a separating interpretation, closed terms that have distinct functions as denotations
must be distinct on an argument in the domain that is the denotation of some closed term
using the given alphabet and thus is `accessible' or `nameable' via that term.

The concept of a separating interpretation is closely related to the concept of an extensionally
complete theory that plays a crucial part in the proof of completeness \cite[p.248]{andrews2}.

\begin{definition}[extensionally complete]\label{defn:extensionally_complete}
A set $S$ of sentences is {\em extensionally complete} if, for every pair $r$, $s$ of closed terms
of the same function type, say, $\alpha \rightarrow \beta$, there exists a closed term $t$ of type $\alpha$
such that $r \neq s \rightarrow (r \; t) \neq (s \; t)$ is derivable from $S$.
\end{definition}

A connection with separating interpretations is provided by the following result.

\begin{proposition}[extensionally complete $\Rightarrow$ separating]\label{extensionally-complete-separating}
Every model of an extensionally complete set of sentences is separating.
\end{proposition}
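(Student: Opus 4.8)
The plan is to reduce the semantic separation property to a single syntactic fact supplied by extensional completeness, transported into the model $I$ by soundness of the proof calculus. Fix a model $I$ of the extensionally complete set $S$, and let $r,s$ be any pair of closed terms of the same function type $\a\to\beta$ with $\V(r,I)\neq\V(s,I)$. I must exhibit a closed term $t$ of type $\a$ with $\V((r\;t),I)\neq\V((s\;t),I)$. The natural candidate is exactly the term whose existence is guaranteed by Definition~\ref{defn:extensionally_complete}: since $r,s$ have the same function type, extensional completeness yields a closed $t$ of type $\a$ such that the sentence $r\neq s\rightarrow(r\;t)\neq(s\;t)$ is derivable from $S$.

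Next I would pass from derivability to truth in $I$. Because $I$ is a model of $S$ and the calculus is sound (the easy direction of Henkin's theory of general models; only soundness, not completeness, is needed here), every sentence derivable from $S$ is valid in $I$. In particular $\V\big(r\neq s\rightarrow(r\;t)\neq(s\;t),\,I\big)=\mathsf{T}$. Since this sentence is closed, its denotation is independent of the variable assignment, so validity in $I$ amounts to this denotation being $\mathsf{T}$, and I may argue with the single denotation rather than quantifying over assignments.

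It then remains to discharge the antecedent semantically. By Definition~\ref{def:valuation} the denotation of $=$ is the identity relation, so $\V(r,I)\neq\V(s,I)$ is precisely the statement that the antecedent $r\neq s$ has denotation $\mathsf{T}$ in $I$; the defined connectives $\neq$ and $\rightarrow$ obey the usual two-valued truth tables under $\V$. Hence the truth of the implication together with the truth of its antecedent forces the consequent $(r\;t)\neq(s\;t)$ to have denotation $\mathsf{T}$, i.e.\ $\V((r\;t),I)\neq\V((s\;t),I)$. As $r,s$ were arbitrary, $I$ is separating.

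The only non-routine step is the appeal to soundness, namely that derivability from $S$ implies validity in every model of $S$; everything else is bookkeeping with the truth tables of the defined logical constants and the identity denotation of $=$. I expect the main (minor) obstacle to be making the translation between the syntactic abbreviations $\lnot,\rightarrow,\neq$ and their semantic behaviour fully rigorous, which requires recalling how $\top,\bot,\land,\lnot$ are defined from equality in this logic and checking that $\V$ evaluates them as expected. Once that dictionary is in place, the argument is a direct modus ponens at the level of denotations.
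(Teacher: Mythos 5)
Your proposal is correct and follows essentially the same route as the paper's proof: invoke extensional completeness to obtain the closed term $t$ with $r \neq s \rightarrow (r \; t) \neq (s \; t)$ derivable from $S$, then use soundness and the truth of the antecedent in $I$ to conclude $\V((r \; t), I) \neq \V((s \; t), I)$. You merely spell out the semantic bookkeeping (assignment-independence of closed terms, truth tables of the defined connectives) that the paper leaves implicit.
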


\begin{proof}
Let $S$ be a set of sentences that is extensionally complete and $I$ be a model for $S$.
Suppose that $r$, $s$ is a pair of closed terms of the same function type,
say, $\alpha \rightarrow \beta$, such that $\V(r,I) \neq \V(s,I)$.
By extensional completeness, there exists a closed term $t$ such that
$r \neq s \rightarrow (r \; t) \neq (s \; t)$ is derivable from $S$.
Since $I$ is a model for $S$ and the proof system is sound,
it follows that $\V((r \; t), I) \neq \V((s \; t), I)$.
Hence $I$ is separating.
\qed
\end{proof}

Now we show that, if we are willing to expand the alphabet,
any set of sentences having a model also has a separating model in an expanded alphabet.

\begin{proposition}[existence of separating models]\label{prop:existseparating}
If a set $S$ of sentences has a model, then there exists an alphabet that includes the original alphabet
and an interpretation based on the expanded alphabet which is a separating model for $S$.
\end{proposition}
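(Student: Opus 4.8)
The plan is to reduce the statement to Proposition~\ref{extensionally-complete-separating}. That result says every model of an extensionally complete set of sentences is separating, so it suffices to enlarge the alphabet and extend $S$ to an \emph{extensionally complete} set $S'$ (Definition~\ref{defn:extensionally_complete}) that still possesses a model. Then any model of $S'$ is at once separating and, since $S\subseteq S'$, a model of $S$, which is exactly what is wanted.

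To build $S'$ I would adjoin Henkin-style witnesses. For each pair $r,s$ of closed terms of the same function type $\alpha\to\beta$, I introduce a fresh constant $c_{r,s}$ of type $\alpha$ and add the sentence $r\neq s\to(r\;c_{r,s})\neq(s\;c_{r,s})$. Adding constants creates new closed terms, which themselves need witnesses, so the construction must be iterated: starting from the original alphabet $A_0$ and $S_0=S$, at stage $n{+}1$ I add such a constant and defining sentence for every pair of closed terms over $A_n$, obtaining $A_{n+1}$ and $S_{n+1}$, and finally set $A=\bigcup_n A_n$ and $S'=\bigcup_n S_n$. Since every closed term over $A$ uses only finitely many constants, it already lives over some $A_n$, so its witness is supplied at stage $n{+}1$; hence $S'$ is extensionally complete, each witness sentence being literally an axiom of $S'$ and therefore derivable.

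The crux is to verify that this process preserves the existence of a model. I would argue stage by stage that a model $I_n$ of $S_n$ extends to a model $I_{n+1}$ of $S_{n+1}$ by keeping the same frame $\{\D_\alpha\}_\alpha$ and only interpreting the new constants. The key point is that in \emph{every} interpretation the members of $\D_{\alpha\to\beta}$ are genuine functions, so extensionality holds automatically: if $\V(r,I_n)\neq\V(s,I_n)$ there is some $d\in\D_\alpha$ on which these functions disagree, and setting $\V(c_{r,s},I_{n+1})=d$ (invoking choice) validates the witness sentence; if $\V(r,I_n)=\V(s,I_n)$ the sentence holds vacuously and $c_{r,s}$ may be sent to any element of the non-empty domain. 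Because the frame never changes and the valuation is only extended, the denotations of terms already present are unaffected, so the union $V=\bigcup_n V_n$ yields a genuine interpretation $I$ over $A$ in which every sentence of $S'$—each belonging to some $S_n$—remains valid. Thus $I\models S'$, and a fortiori $I\models S$.

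The main obstacle is precisely this preservation argument: one must be certain that naming hitherto-anonymous domain elements can never destroy a model, and that the limit $I$ is still an interpretation in the sense of Definition~\ref{def:interpretation}, i.e.\ that every term has a denotation. Both facts reduce to the observation that adjoining constants interpreted by existing domain elements leaves the frame and all previous denotations intact, together with a routine structural induction showing that the denotation of a term over $A_n$ computed in $I$ agrees with the one computed in $I_n$. Once this is in hand, applying Proposition~\ref{extensionally-complete-separating} to $I\models S'$ gives that $I$ is separating, completing the proof.
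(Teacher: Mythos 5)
Your proof is correct, but it takes a genuinely different route from the paper's. Both arguments funnel through Proposition~\ref{extensionally-complete-separating}, but the paper stays entirely on the syntactic side: from the existence of a model it infers that $S$ is consistent, invokes \cite[Theorem 5500]{andrews2} to extend $S$ to a consistent, extensionally complete set $T$ in an expanded alphabet, and then applies Henkin's completeness theorem \cite[Theorem 5501]{andrews2} to manufacture a model of $T$, which is separating by Proposition~\ref{extensionally-complete-separating}. You instead work semantically: you keep the given model, adjoin Henkin-style witness constants interpreted by elements of the \emph{existing} domains (choosing, via the axiom of choice, a point of disagreement whenever the two closed terms denote distinct functions), and iterate through $\omega$ stages. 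This buys two things the paper's proof does not give: you avoid the completeness theorem altogether, and you obtain a stronger conclusion --- your separating model is an expansion of the original model (same frame, same denotations of old terms), rather than some unrelated model produced by the completeness proof. The price is the verification that adjoining constants naming existing elements keeps the pair $\langle \{\D_\alpha\}_\alpha, V\rangle$ an interpretation in the sense of Definition~\ref{def:interpretation}; your ``routine structural induction'' hides the one real point there: a $\lambda$-term containing a new constant $c_{r,s}$ must still denote an element \emph{of the given domain} $\D_{\beta\to\gamma}$, which holds because its denotation coincides with that of the old term obtained by replacing $c_{r,s}$ by a fresh variable, evaluated under the assignment sending that variable to the chosen element --- and old terms have denotations in the domains for \emph{all} assignments. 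With that standard lemma spelled out, your argument is complete.
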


\begin{proof}
Since $S$ has a model, $S$ is consistent.
By \cite[Theorem 5500]{andrews2}, there is an expansion of the original alphabet and a set $T$ of sentences
such that $S \subseteq T$, $T$ is consistent, and $T$ is extensionally complete in the expanded alphabet.
Since $T$ is consistent, by Henkin's Theorem \cite[Theorem 5501]{andrews2}, it has a model (based on the expanded alphabet).
By Proposition~\ref{extensionally-complete-separating}, this model must be a separating one,
and it is also a model for $S$.
\qed
\end{proof}

The most important property of the logic that we will need is compactness \cite[Theorem 5503]{andrews2}.

\begin{theorem}[compactness]\label{prop:compactness}
If every finite subset of a set $S$ of sentences has a model, then $S$ has a model.
\end{theorem}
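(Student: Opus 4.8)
The plan is to derive compactness from the completeness of the logic---specifically Henkin's Theorem, already invoked in the proof of Proposition~\ref{prop:existseparating}---together with the finitary character of derivations. I would argue by contraposition: assume that $S$ has no model, and show that then some finite subset of $S$ also has no model. The contrapositive of this statement is exactly the theorem.

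First I would translate the semantic hypothesis into a syntactic one. By Henkin's Theorem every consistent theory has a model; contrapositively, since $S$ has no model, $S$ must be inconsistent, so by Definition~\ref{def:consistent} the sentence $\bot$ is derivable from $S$. Next comes the key finitary step: a derivation of $\bot$ from $S$ is, by construction, a finite sequence of formulas, each of which is a logical axiom, a member of $S$, or a consequence of earlier lines by the single rule of inference. Hence only finitely many sentences of $S$ can occur as premises in this derivation. Collecting exactly those premises yields a finite subset $S_0 \subseteq S$ from which $\bot$ is already derivable, so $S_0$ is inconsistent.

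Finally I would close the loop using soundness: an inconsistent theory can have no model, since $\bot$ is valid in no interpretation. Therefore $S_0$ has no model, contradicting the hypothesis that every finite subset of $S$ has a model. This contradiction establishes that $S$ has a model.

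I do not anticipate a genuine obstacle here: essentially all of the real content is hidden inside Henkin's completeness theorem, which we are entitled to assume. The only point needing care is the finitary observation that any single derivation uses only finitely many premises---routine, but essential, since it is precisely what converts the global inconsistency of $S$ into the inconsistency of a concrete finite subtheory. A purely semantic alternative via an ultraproduct (or ultralimit) construction would bypass the proof theory altogether, but it is considerably more delicate in the Henkin higher-order setting, where one must verify that the resulting frame is again a legitimate general model; I would therefore favour the completeness-based argument as both shorter and more robust.
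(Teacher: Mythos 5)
Your proof is correct and is essentially the argument the paper relies on: the paper states Theorem~\ref{prop:compactness} without proof, citing \cite[Theorem 5503]{andrews2}, and the proof there is exactly your contrapositive route via Henkin's completeness theorem, the finiteness of derivations (so only finitely many premises from $S$ occur in a derivation of $\bot$), and soundness. The one micro-caveat—Henkin's theorem may yield a model over an expanded alphabet, whose reduct to the original alphabet is still an interpretation modelling $S$—is routine and does not affect the argument.
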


In fact, most of the development in the paper can be carried out in any logic that has the compactness property.

While the version of higher-order logic introduced in this section generally provides much more direct and succinct
formalisations than first-order logic, for practical applications a number of extensions are highly desirable.
Some of these extensions are nothing more than abbreviations, such as those used to introduce the connectives
and quantifiers, and some are deeper.
These extensions include many-sortedness, which allows more than one domain of individuals;
tuples and product types; and type constructors and polymorphism.
The logic of \cite{LogicforLearning}, which is also used in \cite{ng-lloyd-JAL,ng-lloyd-uther-AMAI},
includes all these extensions.
These and other extensions are discussed in \cite{farmer}.

\section{Probabilities on Sentences}\label{Probabilities on Sentences}

We now define probabilities on sentences.
They are not probabilities in the conventional sense of probability theory (on $\sigma$-algebras);
however, a connection between probabilities on sentences and (conventional) probabilities on a
$\sigma$-algebra on the set of interpretations will be made below.

\begin{definition}[probability on sentences]\label{def:prob_phi}
Let $\S$ be the set of all sentences (for some alphabet).
A {\em probability (on sentences)}\index{probability!on sentences} is a non-negative function
$\mu:\S\to\SetR$ satisfying the following conditions:
\begin{enumerate}
\item If $\ph$ is valid, then $\mu(\ph) = 1$.
\item If $\neg(\ph \wedge \psi)$ is valid, then $\mu(\ph \vee \psi) = \mu(\ph) + \mu(\psi)$.
\end{enumerate}
\end{definition}

For a sentence $\psi$, where $\mu(\psi) > 0$, one can define
the conditional probability\index{conditional probability}
$\mu(\cdot | \psi)$ by
\beqn
  \mu(\ph | \psi) \;=\; \frac{\mu(\ph \wedge \psi)}{\mu(\psi)},
\eeqn
for each sentence $\ph$.

A probability $\mu:\S\to\SetR$ on sets of
sentences has the following intended meaning:
\begin{quote}
For a sentence $\ph$, $\mu(\ph)$ is the degree of belief that $\ph$ is true.
\end{quote}

\begin{definition}[pairwise disjoint sentences]\label{def:disjoint}
The sentences $\ph_1, ..., \ph_n$ are {\em pairwise
disjoint}\index{pairwise disjoint sentences} if, for each $i,j
= 1, ..., n$ such that $i \neq j$, $\neg (\ph_i \wedge
\ph_j)$ is valid.
\end{definition}

\begin{proposition}[properties of probability on sentences]\label{prop:PPS}
Let $\mu:\S\to\SetR$ be a probability
on sentences. Then the following hold:
\begin{enumerate} \itemsep1mm\parskip0mm
\item $\mu(\neg \ph) = 1 - \mu(\ph)$, for each $\ph \in \S$.
\item $\mu(\ph) \leq 1$, for each $\ph \in \S$.
\item If $\ph$ is unsatisfiable, then $\mu(\ph) = 0$.
\item If $\ph \rightarrow \psi$ is valid, then $\mu(\ph) \leq \mu(\psi)$.
\item If $\ph = \psi$ is valid, then $\mu(\ph) = \mu(\psi)$.
\item If $\{\ph_i\}_{i=1}^n$ is a finite subset of pairwise disjoint \\ sentences in $\S$,
      then $\mu(\bigvee_{i=1}^n \ph_i) = \sum_{i=1}^n \mu(\ph_i)$.
\item If $\{\ph_i\}_{i=1}^n$ is a finite subset of $\S$,
      then $\mu(\bigvee_{i=1}^n \ph_i) \leq \sum_{i=1}^n \mu(\ph_i)$.
\item The following are equivalent: \\
      (a) For each $\ph \in \S$, $\mu(\ph) = 1$ implies $\ph$ is valid. \\
      (b) For each $\ph \in \S$, $\mu(\ph) = 0$ implies $\ph$ is unsatisfiable.
\item If $\mu(\psi) > 0$, then $\mu(\cdot | \psi)$ is a probability.
\item $\mu(\ph\vee\psi)+\mu(\ph\wedge\psi) = \mu(\ph)+\mu(\psi)$.
\end{enumerate}
\end{proposition}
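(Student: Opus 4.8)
The plan is to prove the ten items essentially in the stated order, since each is a routine consequence of Definition~\ref{def:prob_phi} together with the preceding items; the only real work is choosing the right provably-disjoint decomposition at each step. The foundational step is item~1. Applying condition~2 of Definition~\ref{def:prob_phi} to $\ph$ and $\neg\ph$ (which are disjoint, as $\neg(\ph\wedge\neg\ph)$ is valid) gives $\mu(\ph\vee\neg\ph)=\mu(\ph)+\mu(\neg\ph)$, and since $\ph\vee\neg\ph$ is valid, condition~1 forces the left-hand side to be $1$; hence $\mu(\neg\ph)=1-\mu(\ph)$.

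Items~2 and~3 then fall out immediately: non-negativity of $\mu(\neg\ph)$ with item~1 yields $\mu(\ph)\le 1$, and if $\ph$ is unsatisfiable then $\neg\ph$ is valid, so $\mu(\ph)=1-\mu(\neg\ph)=1-1=0$. For monotonicity (item~4) I would use the decomposition $\psi\equiv\ph\vee(\psi\wedge\neg\ph)$, whose two disjuncts are provably disjoint; when $\ph\rightarrow\psi$ is valid this equivalence is valid, so condition~2 gives $\mu(\psi)=\mu(\ph)+\mu(\psi\wedge\neg\ph)\ge\mu(\ph)$ by non-negativity. Item~5 is then just item~4 applied in both directions, since $\ph=\psi$ valid makes both $\ph\rightarrow\psi$ and $\psi\rightarrow\ph$ valid.

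The additivity items~6, 7, and~10 are next. Item~6 is an induction on $n$: the inductive step writes $\bigvee_{i=1}^{n}\ph_i$ as $(\bigvee_{i=1}^{n-1}\ph_i)\vee\ph_n$ and uses that $(\bigvee_{i=1}^{n-1}\ph_i)\wedge\ph_n$ distributes into $\bigvee_{i=1}^{n-1}(\ph_i\wedge\ph_n)$, each conjunct being unsatisfiable by pairwise disjointness, so condition~2 applies. For the union bound (item~7) I would disjointify by setting $\psi_i:=\ph_i\wedge\neg\ph_1\wedge\cdots\wedge\neg\ph_{i-1}$; then the $\psi_i$ are pairwise disjoint, $\bigvee_i\psi_i\equiv\bigvee_i\ph_i$ is valid, and $\psi_i\rightarrow\ph_i$ is valid, so items~4, 5, and~6 combine to give $\mu(\bigvee_i\ph_i)=\sum_i\mu(\psi_i)\le\sum_i\mu(\ph_i)$. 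Item~10 (inclusion--exclusion) uses the two disjoint decompositions $\ph\vee\psi\equiv\ph\vee(\psi\wedge\neg\ph)$ and $\psi\equiv(\psi\wedge\ph)\vee(\psi\wedge\neg\ph)$ to eliminate $\mu(\psi\wedge\neg\ph)$.

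Finally, item~8 is a complementation argument via item~1: $\mu(\ph)=1$ iff $\mu(\neg\ph)=0$, while ``$\ph$ is valid'' iff ``$\neg\ph$ is unsatisfiable'', and since $\ph$ and $\neg\ph$ both range over all of $\S$, statements (a) and (b) are equivalent. Item~9 verifies the two conditions of Definition~\ref{def:prob_phi} for $\mu(\cdot\mid\psi)$ directly: if $\ph$ is valid then $\ph\wedge\psi\equiv\psi$, so item~5 gives $\mu(\ph\mid\psi)=1$; and if $\neg(\ph_1\wedge\ph_2)$ is valid then $(\ph_1\wedge\psi)$ and $(\ph_2\wedge\psi)$ are disjoint with $(\ph_1\vee\ph_2)\wedge\psi\equiv(\ph_1\wedge\psi)\vee(\ph_2\wedge\psi)$, so dividing the additivity from condition~2 by $\mu(\psi)>0$ gives the result. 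I do not expect a genuine obstacle here; the main pitfall is purely bookkeeping --- keeping the provably-disjoint and provably-equivalent reformulations correct and respecting the dependency order so that each item may invoke the earlier ones. If anything is error-prone it is the disjointification in item~7 and the axiom-checking in item~9.
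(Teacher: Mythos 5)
Your overall plan matches the paper's: the same dependency order, essentially the same arguments for items 1--3, 6, 8 and 9, and the same disjointification ideas for 7 and 10. However, there is a genuine gap at the heart of the chain: your proofs of items 4 and 5 are circular. In item 4 you apply condition 2 of Definition~\ref{def:prob_phi} to the disjoint pair $\ph$ and $\psi\wedge\neg\ph$; what that condition actually yields is $\mu(\ph\vee(\psi\wedge\neg\ph))=\mu(\ph)+\mu(\psi\wedge\neg\ph)$, not $\mu(\psi)=\mu(\ph)+\mu(\psi\wedge\neg\ph)$. To replace the left-hand side by $\mu(\psi)$ you need that $\mu$ assigns equal values to logically equivalent sentences --- which is exactly item 5, and in your plan item 5 is obtained only afterwards, as a consequence of item 4. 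Since $\mu$ is just a function on syntactic sentences and nothing in Definition~\ref{def:prob_phi} stipulates invariance under logical equivalence, this invariance has to be earned first, and your proposal never earns it non-circularly. The flaw propagates: your items 7 and 10 lean on item 5.

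The paper avoids the loop by proving item 4 without any equivalence rewriting: $\ph\rightarrow\psi$ is valid iff $\neg(\ph\wedge\neg\psi)$ is valid, so condition 2 applied literally to the disjunction $\ph\vee\neg\psi$ gives $\mu(\ph\vee\neg\psi)=\mu(\ph)+1-\mu(\psi)$, and then $\mu(\ph\vee\neg\psi)\le 1$ (item 2) forces $\mu(\ph)\le\mu(\psi)$; item 5 then follows from item 4. Alternatively, you could keep your decomposition for item 4 but first prove item 5 directly from the definition: if $\ph=\psi$ is valid, then $\neg(\ph\wedge\neg\psi)$ is valid, so $\mu(\ph\vee\neg\psi)=\mu(\ph)+1-\mu(\psi)$, while $\ph\vee\neg\psi$ is itself valid, so this quantity equals $1$, giving $\mu(\ph)=\mu(\psi)$. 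Either repair is short, but as written the 4--5 circularity is a real logical flaw, not mere bookkeeping.
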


\begin{proof}
The proof is elementary and standard, and only included for completeness.

1. Since $\neg(\ph \wedge \neg \ph)$ is valid,
   $\mu(\ph \vee \neg \ph) = \mu(\ph) + \mu(\neg \ph)$.
   Also, since $\ph \vee \neg \ph$ is valid, $\mu(\ph \vee \neg \ph) = 1$.
   Thus $\mu(\neg \ph) = 1 - \mu(\ph)$.

2. Since $1 - \mu(\ph) = \mu(\neg \ph) \geq 0$, we have that $\mu(\ph) \leq 1$.

3. Note that $\ph$ is unsatisfiable iff $\neg \ph$ is valid.
   Thus $\mu(\neg \ph) = 1 - \mu(\ph) = 1$, so that $\mu(\ph) = 0$.

4. Note first that $\ph \rightarrow \psi$ is valid iff $\neg(\ph \wedge \neg \psi)$ is valid.
   Thus $\mu(\ph \vee \neg \psi) = \linebreak \mu(\ph) + \mu(\neg \psi) = \mu(\ph) + 1 - \mu(\psi)$.
   Hence $\mu(\ph) = \mu(\psi) + \mu(\ph \vee \neg \psi) -1 \leq \mu(\psi)$.

5. This follows immediately from Part 4.

6. The proof is by induction on $n$.
   When $n = 1$ the result is obvious.
   Assume now the result is true for $n-1$.
   Note that $\bigwedge_{i=2}^n \neg (\ph_1 \wedge \ph_i)$ is valid and so
   $\neg(\ph_1 \wedge \bigvee_{i=2}^n \ph_i)$ is valid.
   Then
   \begin{align*}
     & \; \textstyle\mu(\bigvee_{i=1}^n \ph_i) \\
   = & \; \textstyle\mu(\ph_1 \vee \bigvee_{i=2}^n \ph_i) \\
   = & \; \textstyle\mu(\ph_1) + \mu(\bigvee_{i=2}^n \ph_i) & \text{[$\textstyle\neg(\ph_1 \wedge \bigvee_{i=2}^n \ph_i)$ is valid]} \\
   = & \; \textstyle\mu(\ph_1) +  \sum_{i=2}^n \mu(\ph_i) & \text{[induction hypothesis]} \\
   = & \; \textstyle\sum_{i=1}^n \mu(\ph_i).
   \end{align*}

7. The proof is by induction on $n$.
   When $n = 1$ the result is obvious.
   Assume now the result is true for $n-1$.
   Then
   \begin{align*}
        & \; \textstyle\mu(\bigvee_{i=1}^n \ph_i) \\
      = & \; \textstyle\mu((\ph_1 \wedge \neg \bigvee_{i=2}^n \ph_i) \vee \bigvee_{i=2}^n \ph_i) \\
      = & \; \textstyle\mu(\ph_1 \wedge \neg \bigvee_{i=2}^n \ph_i) + \mu(\bigvee_{i=2}^n \ph_i) \\
   \leq & \; \textstyle\mu(\ph_1) +  \sum_{i=2}^n \mu(\ph_i) & \text{[Part 4 and induction hypothesis]} \\
      = & \; \textstyle\sum_{i=1}^n \mu(\ph_i).
   \end{align*}

8. Suppose that, for each $\ph \in \S$, $\mu(\ph) = 1$
   implies $\ph$ is valid. Now let $\psi \in \S$
   satisfy $\mu(\psi) = 0$. By Part 1, $\mu(\neg \psi) = 1$.
   Thus $\neg \psi$ is valid and so $\psi$ is unsatisfiable.

   Conversely, suppose that, for each $\ph \in \S$,
   $\mu(\ph) = 0$ implies $\ph$ is unsatisfiable. Now let $\psi
   \in \S$ satisfy $\mu(\psi) = 1$. By Part 1,
   $\mu(\neg \psi) = 0$. Thus $\neg \psi$ is unsatisfiable and
   so $\psi$ is valid.

9. Suppose that $\ph$ is valid.
   Then $\mu(\ph | \psi) = \frac{\mu(\ph \wedge \psi)}{\mu(\psi)} = \frac{\mu(\psi)}{\mu(\psi)} = 1$.

   Suppose that $\neg(\ph \wedge \chi)$ is valid.
   Then
   \begin{align*}
     & \; \mu(\ph \vee \chi | \psi) \\
   = & \; \mu((\ph \vee \chi) \wedge \psi)~/~\mu(\psi) \\
   = & \; \mu((\ph \wedge \psi) \vee (\chi \wedge \psi))~/~\mu(\psi) \\
   = & \; [\mu(\ph \wedge \psi) + \mu(\chi \wedge \psi)]~/~\mu(\psi)
                                    & \text{[$\neg((\ph \wedge \psi) \wedge (\chi \wedge \psi))$ is valid]} \\
   = & \; \mu(\ph | \psi) + \mu(\chi | \psi).
   \end{align*}
   Thus $\mu(\cdot | \psi)$ is a probability.

10. Let $\chi:=\neg\ph\wedge\psi$. Then
   \begin{align*}
        & \; \mu(\ph\vee\psi) + \mu(\ph\wedge\psi) \\
      = & \; \mu(\ph\vee\chi) + \mu(\ph\wedge\psi) & \text{[elementary logic]} \\
      = & \; \mu(\ph)+\mu(\chi) + \mu(\ph\wedge\psi) & \text{[$\neg(\ph\wedge\chi)$ is valid and Def.~\ref{def:prob_phi}.2]} \\
      = & \; \mu(\ph)+\mu(\chi\vee(\ph\wedge\psi)) & \text{[$\neg(\chi\wedge(\ph\wedge\psi))$ is valid and Def.~\ref{def:prob_phi}.2]} \\
      = & \; \mu(\ph)+\mu(\psi) & \text{[elementary logic]}
   \end{align*}
\qed
\end{proof}

Next we introduce Gaifman probabilities.

\begin{definition}[Gaifman probability]\label{def:Gaifman}
Let $\mu:\S\to\SetR$ be a probability on sentences.
Then $\mu$ is {\em Gaifman} if
\beqn
  \mu(r = s) = \inf_{\{t_1, ..., t_n\}} \mu(\bigwedge_{i=1}^n ((r \; t_i) = (s \; t_i))),
\eeqn
for every pair $r$ and $s$ of closed terms having the same function type, say, $\alpha \rightarrow \beta$,
and where $\{t_1, ..., t_n\}$ ranges over all finite sets of closed terms of type $\a$.
\end{definition}

\begin{proposition}[Gaifman probability]\label{prop:Gaifman}
Let $\mu:\S\to\SetR$ be a probability on sentences.
Then the following are equivalent.
\begin{enumerate}

\item $\mu$ is Gaifman.

\item $\displaystyle\mu(r \neq s) = \sup_{\{t_1, ..., t_n\}} \mu(\bigvee_{i=1}^n ((r \; t_i) \neq (s \; t_i)))$, \\[0.75em]
      for every pair $r$ and $s$ of closed terms having the same function type, say, $\alpha \rightarrow \beta$,
      and where $\{t_1, ..., t_n\}$ ranges over all finite sets of closed terms of type $\a$.

\item $\displaystyle\mu(\exists x.\ph) = \sup_{\{t_1, ..., t_n\}} \mu(\bigvee_{i=1}^n\ph\{x/t_i\})$, \\[0.75em]
      for every formula $\ph$ having a single free variable $x$ of type $\a$, say,
      and where $\{t_1, ..., t_n\}$ ranges over all finite sets of closed terms of type $\a$.

\item $\displaystyle\mu(\forall x.\ph) = \inf_{\{t_1, ..., t_n\}} \mu(\bigwedge_{i=1}^n\ph\{x/t_i\})$,  \\[0.75em]
      for every formula $\ph$ having a single free variable $x$ of type $\a$, say,
      and where $\{t_1, ..., t_n\}$ ranges over all finite sets of closed terms of type $\a$.

\end{enumerate}

\end{proposition}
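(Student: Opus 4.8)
The plan is to establish the two ``vertical'' equivalences (1)$\Leftrightarrow$(2) and (3)$\Leftrightarrow$(4) by complementation, and then to bridge them with the single ``horizontal'' equivalence (1)$\Leftrightarrow$(4); together these yield the equivalence of all four statements.

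First I would handle (1)$\Leftrightarrow$(2). Negating the equation $r=s$ gives $r\neq s$, and by De Morgan the negation of $\bigwedge_{i=1}^n((r\;t_i)=(s\;t_i))$ is valid-equivalent to $\bigvee_{i=1}^n((r\;t_i)\neq(s\;t_i))$. Applying $\mu(\neg\chi)=1-\mu(\chi)$ (Proposition~\ref{prop:PPS}.1) to both sides of the Gaifman identity, and using $1-\inf_a f(a)=\sup_a(1-f(a))$, turns the infimum characterization into the supremum characterization and back. The equivalence (3)$\Leftrightarrow$(4) is identical in spirit: replacing $\ph$ by $\neg\ph$ in (3), noting that $\exists x.\neg\ph$ is valid-equivalent to $\neg\forall x.\ph$ and $\bigvee_i\neg\ph\{x/t_i\}$ to $\neg\bigwedge_i\ph\{x/t_i\}$, and again passing complements through $\mu$ while swapping $\sup\leftrightarrow\inf$.

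The crux is the bridge (1)$\Leftrightarrow$(4), which is where the restriction of the Gaifman condition to equalities of closed function terms must be shown to capture arbitrary quantified formulas. For (4)$\Rightarrow$(1), given closed $r,s$ of type $\a\to\beta$, I would apply (4) to the formula $\ph:=((r\;x)=(s\;x))$ with free variable $x$ of type $\a$: then $\ph\{x/t_i\}=((r\;t_i)=(s\;t_i))$, while the Extensionality axiom makes $\forall x.((r\;x)=(s\;x))$ valid-equivalent to $r=s$, so that $\mu(\forall x.\ph)=\mu(r=s)$ by Proposition~\ref{prop:PPS}.5, and (4) becomes exactly (1). For the converse (1)$\Rightarrow$(4), given a formula $\ph$ with single free variable $x$ of type $\a$, I would invoke the representation $\forall x.\ph\equiv[\lambda x.\ph=\lambda x.\top]$ and set $r:=\lambda x.\ph$ and $s:=\lambda x.\top$, closed terms of type $\a\to\tB$. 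By $\beta$-reduction, $(r\;t_i)$ is valid-equal to $\ph\{x/t_i\}$ and $(s\;t_i)$ to $\top$, so each conjunct $(r\;t_i)=(s\;t_i)$ is valid-equivalent to $\ph\{x/t_i\}$; hence $\mu(\bigwedge_i((r\;t_i)=(s\;t_i)))=\mu(\bigwedge_i\ph\{x/t_i\})$ and $\mu(r=s)=\mu(\forall x.\ph)$ by Proposition~\ref{prop:PPS}.5, turning the instance of (1) at this $r,s$ into (4) for $\ph$.

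I expect the main obstacle to be bookkeeping rather than depth: one must verify that the substituted closed terms $t_i$ are always free for $x$ (automatic, since closed terms contain no capturable variables), that the $\beta$-reduction and De Morgan equivalences are genuinely valid so that Proposition~\ref{prop:PPS}.5 applies both conjunct-by-conjunct and to the whole conjunction or disjunction, and that the index families $\{t_1,\dots,t_n\}$ (finite sets of closed terms of type $\a$) range over the same collection on both sides, so that the infima and suprema are taken over identical sets. Granting these routine checks, the chain (1)$\Leftrightarrow$(2), (3)$\Leftrightarrow$(4), and (1)$\Leftrightarrow$(4) closes the argument.
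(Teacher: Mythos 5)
Your proposal is correct and essentially matches the paper's proof: the paper establishes the cycle $1\Rightarrow 2\Rightarrow 3\Rightarrow 4\Rightarrow 1$ using exactly your ingredients --- complementation with De Morgan for the sup/inf passages, the Extensionality axiom for the step from statement 4 back to statement 1, and a $\lambda$-representation of a quantifier together with $\beta$-reduction for the remaining bridge. The only cosmetic difference is organizational: you prove three biconditionals with the bridge taken at the universal level via $\forall x.\ph \equiv [\lambda x.\ph = \lambda x.\top]$, whereas the paper arranges the same facts as a directed cycle and bridges dually at the existential level via $\exists x.\ph \equiv [\lambda x.\ph \neq \lambda x.\bot]$ in its step $2\Rightarrow 3$.
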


\begin{proof}
1. implies 2.
Suppose that the probability $\mu$ is Gaifman.
Then
\begin{align*}
  & \; \mu(r \neq s) \\
= & \; 1 - \mu(r = s) \\
= & \; 1 - \textstyle\inf_{\{t_1, ..., t_n\}} \mu(\bigwedge_{i=1}^n ((r \; t_i) = (s \; t_i))) \\
= & \; 1 - \textstyle\inf_{\{t_1, ..., t_n\}} \mu(\neg \bigvee_{i=1}^n ((r \; t_i) \neq (s \; t_i))) \\
= & \; 1 - \textstyle\inf_{\{t_1, ..., t_n\}} (1 - \mu(\bigvee_{i=1}^n ((r \; t_i) \neq (s \; t_i))) \\
= & \; \textstyle\sup_{\{t_1, ..., t_n\}} \mu(\bigvee_{i=1}^n ((r \; t_i) \neq (s \; t_i))).
\end{align*}
Hence 2. holds.

2. implies 3.
Suppose that 2. holds.
Then
\begin{align*}
  & \; \mu(\exists x.\ph) \\
= & \; \mu(\lambda x.\ph \neq \lambda x.F) \\
= & \; \textstyle\sup_{\{t_1, ..., t_n\}} \mu(\bigvee_{i=1}^n ((\lambda x.\ph \; t_i) \neq (\lambda x.F \; t_i))) \\
= & \; \textstyle\sup_{\{t_1, ..., t_n\}} \mu(\bigvee_{i=1}^n \ph\{x/t_i\}).
\end{align*}
Hence 3. holds.

3. implies 4.
Suppose that 3. holds.
Then
\begin{align*}
  & \; \mu(\forall x.\ph) \\
= & \; \mu(\neg \exists x.\neg \ph) \\
= & \; 1 - \mu(\exists x.\neg \ph) \\
= & \; 1 - \textstyle\sup_{\{t_1, ..., t_n\}} \mu(\bigvee_{i=1}^n \neg\ph\{x/t_i\}) \\
= & \; 1 - \textstyle\sup_{\{t_1, ..., t_n\}} \mu(\neg \bigwedge_{i=1}^n \ph\{x/t_i\}) \\
= & \; 1 - \textstyle\sup_{\{t_1, ..., t_n\}} (1 - \mu(\bigwedge_{i=1}^n \ph\{x/t_i\})) \\
= & \; \textstyle\inf_{\{t_1, ..., t_n\}} \mu(\bigwedge_{i=1}^n \ph\{x/t_i\}).
\end{align*}
Hence 4. holds.

4. implies 1.
Suppose that 4. holds.
Then
\begin{align*}
  & \; \mu(r = s) \\
= & \; \mu(\forall x. ((r \; x) = (s \; x))) & \text{[Axioms of Extensionality]} \\
= & \; \textstyle\inf_{\{t_1, ..., t_n\}} \mu(\bigwedge_{i=1}^n ((r \; x) = (s \; x))\{x/t_i\}) \\
= & \; \textstyle\inf_{\{t_1, ..., t_n\}} \mu(\bigwedge_{i=1}^n ((r \; t_i) = (s \; t_i))).
\end{align*}
Hence 1. holds.
\qed
\end{proof}

\begin{proposition}[limits for countable alphabet]\label{prop:LCA}
Let the alphabet be countable, $\mu:\S\to\SetR$ a probability on sentences, and
$\ph$ a formula having a single free variable $x$ of type $\a$.

\begin{enumerate}
\item $\displaystyle\sup_{\{t_1, ..., t_n\}} \mu(\bigvee_{i=1}^n\ph\{x/t_i\}) =
                        \displaystyle\lim_{n\to\infty} \mu(\bigvee_{i=1}^n \ph\{x/t_i\})$
\item $\displaystyle\inf_{\{t_1, ..., t_n\}} \mu(\bigwedge_{i=1}^n\ph\{x/t_i\}) =
             \displaystyle\lim_{n\to\infty} \mu(\bigwedge_{i=1}^n \ph\{x/t_i\})$,
\end{enumerate}
where, on the LHS, $\{t_1, ..., t_n\}$ ranges over all finite sets of closed terms of type $\a$
and, on the RHS, $t_1, t_2, ...$ is an enumeration of all closed terms of type $\a$.
\end{proposition}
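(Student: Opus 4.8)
The plan is to reduce both identities to the elementary fact that a bounded monotone sequence of reals converges to its supremum (resp.\ infimum), and then to match that limit against the supremum (resp.\ infimum) taken over \emph{all} finite sets of closed terms. The key structural input is that, since the alphabet is countable, the set of closed terms of type $\a$ is denumerable (as noted just after Definition~\ref{def:term}), so a genuine enumeration $t_1,t_2,\dots$ exists and, crucially, every finite set of closed terms of type $\a$ is contained in some initial segment $\{t_1,\dots,t_N\}$.

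For Part 1, first set $a_n := \mu(\bigvee_{i=1}^n \ph\{x/t_i\})$ along the fixed enumeration. Since $\bigvee_{i=1}^n\ph\{x/t_i\}\to\bigvee_{i=1}^{n+1}\ph\{x/t_i\}$ is valid, Proposition~\ref{prop:PPS}.4 gives $a_n\le a_{n+1}$, and Proposition~\ref{prop:PPS}.2 gives $a_n\le 1$; hence $(a_n)$ converges with $\lim_n a_n=\sup_n a_n$. It remains to identify $\sup_n a_n$ with the left-hand side $S:=\sup_F \mu(\bigvee_{t\in F}\ph\{x/t\})$, the supremum over all finite sets $F$ of closed terms of type $\a$. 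The inequality $S\ge\sup_n a_n$ is immediate, since each initial segment is itself such a finite set. For the reverse inequality, given any finite $F$ I would choose $N$ with $F\subseteq\{t_1,\dots,t_N\}$; then $\bigvee_{t\in F}\ph\{x/t\}\to\bigvee_{i=1}^N\ph\{x/t_i\}$ is valid, so Proposition~\ref{prop:PPS}.4 yields $\mu(\bigvee_{t\in F}\ph\{x/t\})\le a_N\le\sup_n a_n$, and taking the supremum over $F$ gives $S\le\sup_n a_n$. Hence $S=\sup_n a_n=\lim_n a_n$.

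For Part 2 the cleanest route is to apply Part 1 to the formula $\neg\ph$ (which again has the single free variable $x$) and dualize by complementation. Using Proposition~\ref{prop:PPS}.1 together with the validity of De Morgan's law (and Proposition~\ref{prop:PPS}.5), one has $\mu(\bigvee_{i=1}^n\neg\ph\{x/t_i\})=1-\mu(\bigwedge_{i=1}^n\ph\{x/t_i\})$, and likewise over any finite set $F$. Part 1 applied to $\neg\ph$ then reads $\sup_F(1-\mu(\bigwedge_{t\in F}\ph\{x/t\}))=\lim_n(1-\mu(\bigwedge_{i=1}^n\ph\{x/t_i\}))$, i.e.\ $1-\inf_F\mu(\bigwedge_{t\in F}\ph\{x/t\})=1-\lim_n\mu(\bigwedge_{i=1}^n\ph\{x/t_i\})$, which is exactly the desired identity after cancelling the constant $1$. (Alternatively one argues directly that the conjunction sequence is non-increasing and bounded below by $0$, hence converges to its infimum, and repeats the cofinality argument verbatim.)

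The only genuinely nontrivial point — and the step I would watch most carefully — is the cofinality argument in Part 1: that the supremum over \emph{all} finite subsets coincides with the supremum taken along a single fixed enumeration. This is precisely where countability of the alphabet, and hence denumerability of the closed terms of type $\a$, is essential; without it the left-hand ``$\sup$'' over arbitrary finite sets need not be attained along any countable increasing chain, and the monotone-convergence reduction would break down. Everything else is routine bookkeeping with the monotonicity and normalization properties of $\mu$ already established in Proposition~\ref{prop:PPS}.
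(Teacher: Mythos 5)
Your proof is correct and follows essentially the same route as the paper's: the same cofinality argument (every finite set of closed terms lies in some initial segment of the enumeration, giving a valid implication to which Proposition~\ref{prop:PPS}.4 applies) combined with monotonicity and boundedness to identify the supremum over finite sets with the limit along the enumeration. The only cosmetic difference is in Part 2, where the paper just dualizes the Part 1 argument directly (``the proof is similar''), while you deduce it from Part 1 applied to $\neg\ph$ via Proposition~\ref{prop:PPS}.1 and De Morgan --- a reduction the paper itself employs elsewhere (e.g.\ in the proof of Proposition~\ref{prop:Gaifman}), and you note the direct dual argument as well.
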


\begin{proof}
Since the alphabet is countable, the set of all closed terms of
type $\a$ is countable and hence can be enumerated.

1. Let $\{t'_1, ..., t'_m\}$ be a subset of closed terms of type $\a$.
Let $n$ be sufficiently large so that each $t'_j$, for $j = 1, ..., m$, appears in the enumeration
$t_1, ..., t_n$ of the first $n$ terms of an enumeration of all closed terms of type $\a$.

Then $\bigvee_{j=1}^m\ph\{x/t'_j\} \rightarrow \bigvee_{i=1}^n \ph\{x/t_i\}$ is valid,
so that
\beqn
  \textstyle\mu(\bigvee_{j=1}^m\ph\{x/t'_j\}) \;\leq\; \mu(\bigvee_{i=1}^n \ph\{x/t_i\}),
\eeqn
by Proposition~\ref{prop:PPS}.4.
By first taking the supremum on the RHS and then the supremum on the LHS we get
\beqn
  \textstyle\sup_{\{t'_1, ..., t'_m\}} \mu(\bigvee_{j=1}^m\ph\{x/t'_j\})
  \;\leq\;  \sup_{n} \mu(\bigvee_{i=1}^n\ph\{x/t_i\}).
\eeqn
Conversely we have
\beqn
  \textstyle\sup_{\{t'_1, ..., t'_m\}} \mu(\bigvee_{j=1}^m\ph\{x/t'_j\})
  \;\geq\;  \mu(\bigvee_{i=1}^n\ph\{x/t_i\}).
\eeqn
since the sup on the LHS includes $\{t_1,...,t_n\}$.
Now taking the limit $n\to\infty$ and combining both inequalities gives equality.
Proposition~\ref{prop:PPS}.4 gives that $\mu(\ph \vee \psi)\geq\mu(\ph)$;
hence $\mu(\bigvee_{i=1}^n\ph\{x/t_i\})$ is monotone non-decreasing in $n$,
which allows the replacement of $\sup_n$ by $\lim_{n\to\infty}$.

2.  The proof is similar.
\qed
\end{proof}

We can reduce the class of terms that is necessary to ``browse'' through even further,
by considering only one term from each equivalence class,
where two terms $t$ and $t'$ are equivalent iff $t=t'$ is valid.

\begin{proposition}[Gaifman for countable alphabet]\label{prop:GCA}
Let the alphabet be countable and $\mu:\S\to\SetR$ a probability on sentences.
Then the following are equivalent.
\begin{enumerate}

\item $\mu$ is Gaifman.

\item $\displaystyle\mu(r = s) = \lim_{n \rightarrow \infty} \mu(\bigwedge_{i=1}^n ((r \; t_i) = (s \; t_i)))$, \\[0.75em]
      for every pair $r$ and $s$ of closed terms having the same function type, say, $\alpha \rightarrow \beta$,
      and where $t_1, t_2, \ldots$ is an enumeration of all closed terms of type $\a$.

\item $\displaystyle\mu(r \neq s) = \lim_{n \rightarrow \infty} \mu(\bigvee_{i=1}^n ((r \; t_i) \neq (s \; t_i)))$, \\[0.75em]
      for every pair $r$ and $s$ of closed terms having the same function type, say, $\alpha \rightarrow \beta$,
      and where $t_1, t_2, \ldots$ is an enumeration of all closed terms of type $\a$.

\item $\displaystyle\mu(\exists x.\ph) = \lim_{n \rightarrow \infty} \mu(\bigvee_{i=1}^n\ph\{x/t_i\})$, \\[0.75em]
      for every formula $\ph$ having a single free variable $x$ of type $\a$, say,
      and where $t_1, t_2, \ldots$ is an enumeration of all closed terms of type $\a$.

\item $\displaystyle\mu(\forall x.\ph) = \lim_{n \rightarrow \infty} \mu(\bigwedge_{i=1}^n\ph\{x/t_i\})$,  \\[0.75em]
      for every formula $\ph$ having a single free variable $x$ of type $\a$, say,
      and where $t_1, t_2, \ldots$ is an enumeration of all closed terms of type $\a$.

\end{enumerate}
In each case, the enumeration $t_1, t_2, \ldots$ of closed terms of type $\a$ can be reduced
to one where a single representative is chosen from each equivalence class under the equivalence relation
$t$ and $t'$ are equivalent if $t = t'$ is valid.

\end{proposition}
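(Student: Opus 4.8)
The plan is to obtain this proposition as a near-immediate corollary of the two preceding results. Proposition~\ref{prop:Gaifman} expresses the Gaifman condition equivalently through suprema and infima over finite sets of closed terms, and Proposition~\ref{prop:LCA} shows, for a countable alphabet, that each such supremum or infimum coincides with the limit along a fixed enumeration of all closed terms of the relevant type. So the strategy is to apply Proposition~\ref{prop:LCA} to each of the forms appearing in Proposition~\ref{prop:Gaifman} (together with the defining equation for $r=s$), turning each $\sup$/$\inf$ statement term-for-term into the corresponding limit statement; the closing remark about equivalence classes is then handled separately using Proposition~\ref{prop:PPS}.

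First I would establish $1 \Leftrightarrow 2$ directly from the definition. Applying Proposition~\ref{prop:LCA}.2 to the formula $\ph \equiv ((r\;x)=(s\;x))$, which has the single free variable $x$ of type $\a$ and satisfies $\ph\{x/t_i\} = ((r\;t_i)=(s\;t_i))$, gives
\beqn
  \inf_{\{t_1,\ldots,t_n\}} \mu\big(\textstyle\bigwedge_{i=1}^n ((r\;t_i)=(s\;t_i))\big)
  \;=\; \lim_{n\to\infty} \mu\big(\textstyle\bigwedge_{i=1}^n ((r\;t_i)=(s\;t_i))\big).
\eeqn
Hence the infimum in Definition~\ref{def:Gaifman} may be replaced by the limit along any enumeration, and statement~2 is literally the definition of Gaifman rewritten this way. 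The remaining equivalences $1 \Leftrightarrow 3$, $1 \Leftrightarrow 4$, $1 \Leftrightarrow 5$ are treated identically, now routing through Proposition~\ref{prop:Gaifman}: for statement~3 I apply Proposition~\ref{prop:LCA}.1 to $\ph \equiv ((r\;x)\neq(s\;x))$ and invoke Proposition~\ref{prop:Gaifman}.2; for statement~4 I apply Proposition~\ref{prop:LCA}.1 directly to $\ph$ and invoke Proposition~\ref{prop:Gaifman}.3; for statement~5 I apply Proposition~\ref{prop:LCA}.2 to $\ph$ and invoke Proposition~\ref{prop:Gaifman}.4. In each case the supremum/infimum form of the Gaifman condition and the limit form are the same equation, so each of statements~2--5 is individually equivalent to statement~1, whence all five are equivalent.

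For the reduction to one representative per equivalence class, I would argue that every finite conjunction or disjunction has a $\mu$-value depending only on which classes are represented among the substituted terms. If $t=t'$ is valid then $\ph\{x/t\}=\ph\{x/t'\}$ is valid, so adjoining a term equivalent to one already present yields a provably equal (hence, by duplicate elimination, logically equivalent) conjunction/disjunction whose $\mu$-value is unchanged by Proposition~\ref{prop:PPS}.5. Consequently the family of values attained by finite sets of arbitrary closed terms equals the family attained by finite sets of representatives, so the two suprema (respectively infima) agree; and since the partial conjunctions/disjunctions along the reduced enumeration are monotone (Proposition~\ref{prop:PPS}.4), exactly the argument of Proposition~\ref{prop:LCA} shows their limit equals that common supremum/infimum. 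As there are at most countably many terms, there are at most countably many classes, so such a reduced enumeration exists; if it happens to be finite the sequence is eventually constant and the limit is its final value.

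The proof is therefore mostly assembly, and I expect no real obstacle in the equivalences themselves. The one place demanding care is the final reduction: one must verify that passing to representatives neither alters any $\mu$-value (this is where Proposition~\ref{prop:PPS}.5 enters) nor disturbs the monotone convergence underlying Proposition~\ref{prop:LCA}, since Proposition~\ref{prop:LCA} as stated presumes an enumeration of \emph{all} closed terms rather than of representatives. Spelling out that the supremum over finite sets of representatives matches the supremum over all finite sets is the only step that is not purely mechanical.
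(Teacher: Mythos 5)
Your proof is correct and takes essentially the same route as the paper: the paper's (very terse) proof likewise combines Proposition~\ref{prop:Gaifman} with Proposition~\ref{prop:LCA}, and handles the representative reduction by relaxing ``appears'' to ``is equivalent to some term in'' within the proof of Proposition~\ref{prop:LCA}, resting on exactly your observation that validity of $t = t'$ implies validity of $\ph\{x/t\} = \ph\{x/t'\}$. Your write-up merely spells out in full the assembly steps (including the matching of the supremum over finite sets of representatives with the supremum over all finite sets) that the paper leaves as a sketch.
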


\begin{proof}
Two terms $t$ and $t'$ are said to be equivalent iff $t=t'$ is
valid, which implies $\ph\{x/t\} = \ph\{x/t'\}$ is valid.
This allows us to relax in the proof of Proposition~\ref{prop:LCA} `appears' by
`is equivalent to some term in' and `includes' by `includes a term equivalent to some term in'.
Finally combine this with Proposition~\ref{prop:Gaifman} and Definition~\ref{def:Gaifman}.
\qed
\end{proof}

While these forms of the Gaifman condition closely resemble the
continuity condition (countable additivity (CA) axiom) in measure
theory, we will see that CA over (general) interpretations
is derived from the compactness theorem and not from the Gaifman
condition (see Definition~\ref{def:probmeasure} and
Proposition~\ref{prop:FAvsCA} in the next section).
But the Gaifman condition confines probabilities to separating interpretations
while preserving CA (Propositions~\ref{prop:mustartomu} and \ref{prop:mutomustar}).

\begin{example}[natural numbers $\Nat$]\label{ex:Nat}
Consider the standard type $\Nat$ of natural numbers, as the type of individuals,
and the usual Peano axioms. Let $\tfn{0}$ be the constant of
type $\Nat$ whose denotation is the natural number 0, and $\tfn{n}\equiv
S^n(\tfn{0})=(S\;(S\;(S \cdots(S\;\tfn{0}))))$ be the term of type $\Nat$
whose denotation is the natural number $n$, where $S$ is a
constant of type $\Nat\to\Nat$ whose denotation is the
successor function.
In practice one usually defines denumerably many constants
$\tfn{1},\tfn{2},\tfn{3},...$, one for
each natural number, directly.
Further, let $+,\times:\Nat\to\Nat\to\Nat$ be functions with
their usual axioms and meaning.
Now there are many closed terms that represent the same
natural number. For instance $\tfn{8}$, $(\lambda x.x~\tfn{8})$,
$(\tfn{3}+\tfn{5})$, $(\tfn{2}\times\tfn{4})$ are different terms, all
having the number $8$ as denotation.
For type $\Nat$, it is sufficient to choose $t_n=\tfn{n}$ in
Proposition~\ref{prop:GCA}.4, and so the
condition in Definition~\ref{def:Gaifman} (indeed) reduces to
the one used by Gaifman \cite{gaifman-snir}.
\eoe\end{example}

Of particular interest are probabilities that are strictly
positive on satisfiable sentences since this is a desirable
property of a prior.
This suggests the following definition.

\begin{definition}[strongly Cournot probability]\label{def:stronglyCournot}
A probability $\mu:\S\to\SetR$ is
{\em strongly Cournot} if, for each $\ph \in \S$, $\ph$ is
satisfiable implies $\mu(\ph) > 0$.
\end{definition}

By Part 8 of Proposition~\ref{prop:PPS}, a probability is
strongly Cournot iff, for each $\ph \in \S$, $\ph$ is not valid
implies $\mu(\ph)<1$, or, by contraposition, $\mu(\ph)=1$
implies $\ph$ is valid. This is akin to Cournot's principle as
discussed in the introduction that an event of probability 1
singled out in advance will happen for sure in the real world.
We will see this general idea plays an important role for inductive
inference.

However, the following weaker form of the Cournot principle will turn out to be more useful.

\begin{definition}[Cournot probability]\label{def:Cournot}
A probability $\mu:\S\to\SetR$ is
{\em Cournot} if, for each $\ph \in \S$, $\ph$ has a
separating model implies $\mu(\ph) > 0$.
\end{definition}

Clearly a strongly Cournot probability is Cournot.
It will be the Cournot probabilities (not the strongly Cournot ones) that will
be of most interest in the subsequent development.
The major reasons for this are as follows.
First, Theorem~\ref{thm:CandG} below shows that, if the alphabet is countable,
there exists a probability on sentences that is Cournot and Gaifman.
Such a probability makes a good prior.
Second, the Cournot and Gaifman conditions are necessary and sufficient to do learning
in the limit of universal hypotheses as the following theorem
shows and as discussed in more detail in
Section~\ref{sec:Examples}.

\begin{theorem}[confirming universal hypotheses]\label{thm:CUH}
Let the alphabet be countable, $\mu:\S\to\SetR$ a probability on sentences,
$\ph$ a formula having a single free variable $x$ of some type $\a$,
$t_1,t_2,...$ an enumeration of (representatives of) all closed terms of type $\a$. Then
\beqn
  \mu(\forall x.\ph \,|\, \bigwedge_{i=1}^n\ph\{x/t_i\})\stackrel{n\to\infty}\longrightarrow 1
  \quad\Leftrightarrow\quad
  \mu(\bigwedge_{i=1}^n\ph\{x/t_i\}) \stackrel{n\to\infty}\longrightarrow \mu(\forall x.\ph)>0
\eeqn
If the left hand side (hence also the r.h.s.) holds, we say that $\mu$ can confirm
universal hypothesis $\forall x.\ph$.
It also holds that
\beqn
  {\text{$\mu $ can confirm all universal hypotheses}\atop\text{ that have a separating model}}
  \quad\Leftrightarrow\quad \mu \text{ is Gaifman and Cournot}
\eeqn
\end{theorem}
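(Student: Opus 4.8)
The plan is to treat the two displayed equivalences in turn, abbreviating $\psi_n := \bigwedge_{i=1}^n \ph\{x/t_i\}$ throughout. For the first equivalence I would start from three elementary facts. Since $\forall x.\ph \rightarrow \psi_n$ is valid, Proposition~\ref{prop:PPS}.4 makes $\mu(\psi_n)$ non-increasing in $n$, so $L := \lim_{n}\mu(\psi_n)$ exists and $\mu(\forall x.\ph)\le L$; moreover $(\forall x.\ph \wedge \psi_n)=\forall x.\ph$ is valid, whence $\mu(\forall x.\ph \wedge \psi_n)=\mu(\forall x.\ph)$ by Proposition~\ref{prop:PPS}.5. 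Combining these yields the single clean identity $\mu(\forall x.\ph\mid\psi_n)=\mu(\forall x.\ph)/\mu(\psi_n)$ whenever $\mu(\psi_n)>0$. The right-to-left direction is then immediate: if $\mu(\psi_n)\to\mu(\forall x.\ph)=:p>0$ then $\mu(\psi_n)\ge p>0$ for all $n$, the conditional is defined, and $p/\mu(\psi_n)\to 1$. For left-to-right, the hypothesis forces $\mu(\psi_n)>0$ for large $n$; if $p=0$ the ratio is identically $0$, contradicting convergence to $1$, so $p>0$, and $p/\mu(\psi_n)\to 1$ then forces $\mu(\psi_n)\to p$, which is the right-hand side.

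For the second equivalence the easy direction is Gaifman-and-Cournot $\Rightarrow$ confirmation. Given a universal hypothesis $\forall x.\ph$ with a separating model, Proposition~\ref{prop:GCA}.5 gives $L=\mu(\forall x.\ph)$, while the Cournot property gives $\mu(\forall x.\ph)>0$ (it is a sentence possessing a separating model); so the right-hand side of the first equivalence holds and $\mu$ confirms $\forall x.\ph$.

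Conversely, assume $\mu$ confirms every universal hypothesis with a separating model. Cournot follows by a dummy-variable encoding: for an arbitrary sentence $\chi$ possessing a separating model, apply the hypothesis to $\ph := \chi \wedge (x = x)$, for which $\forall x.\ph \equiv \chi$ and $\psi_n \equiv \chi$, so confirming $\forall x.\ph$ forces $\mu(\forall x.\ph)=\mu(\chi)>0$, i.e. $\mu(\chi)>0$. For Gaifman I would invoke Proposition~\ref{prop:GCA}.5 and aim to show $\mu(\forall x.\ph)=L$ for every $\ph$. The bridge is the lemma that in a separating interpretation $I$ the sentence $\forall x.\ph\equiv[\lambda x.\ph=\lambda x.\top]$ is valid iff every instance $\ph\{x/t\}$ is valid in $I$: if the two $\lambda$-terms have distinct denotations, separation (Definition~\ref{def:separating}) supplies a closed $t$ on which they differ, i.e. an instance failing in $I$; the converse is universal instantiation. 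Consequently $\forall x.\ph$ has a separating model iff the instance set $\{\ph\{x/t\}\}$ does. When $\forall x.\ph$ does have a separating model, confirmation together with the first equivalence gives exactly $\mu(\forall x.\ph)=L$.

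The main obstacle is the remaining case, where $\forall x.\ph$ has \emph{no} separating model: here the confirmation hypothesis says nothing directly, yet Gaifman still demands $\mu(\forall x.\ph)=L$. My approach would be to suppose the gap $\delta := L-\mu(\forall x.\ph)>0$ and seek a contradiction. From $\mu(\psi_n\wedge\neg\forall x.\ph)=\mu(\psi_n)-\mu(\forall x.\ph)\ge\delta>0$ (using the identity above with Proposition~\ref{prop:PPS}) each $\psi_n\wedge\neg\forall x.\ph$ is satisfiable, so by compactness (Theorem~\ref{prop:compactness}) the set $\{\ph\{x/t\}\}\cup\{\neg\forall x.\ph\}$ has a model, which by the bridging lemma is necessarily non-separating. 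Converting this non-separating witness into a contradiction with ``$\mu$ confirms all separating-model hypotheses'' is the genuinely delicate step, since Cournot only bounds probabilities from below on separating-model sentences and does not by itself annihilate such a witness. I expect to close it using the model-theoretic reading of the Gaifman condition (that Gaifman concentrates $\mu$ on separating interpretations, as developed around Corollary~\ref{cor:mueqmuseparating}), which should force interpretations of this non-separating kind to carry no mass and hence $\delta=0$.
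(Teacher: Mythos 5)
Most of your proposal tracks the paper's proof: the first equivalence via the identity $\mu(\forall x.\ph\mid\psi_n)=\mu(\forall x.\ph)/\mu(\psi_n)$, the easy direction of the second equivalence from Gaifman plus Cournot, the dummy-variable encoding $\ph:=\chi\wedge(x=x)$ to extract Cournot, and the separating-model case of the Gaifman claim are all essentially the paper's argument. The genuine gap is exactly where you flag it: the case where $\forall x.\ph$ has \emph{no} separating model. Your proposed closing move is circular. Corollary~\ref{cor:mueqmuseparating} states that $\mu^*(\I\setminus\widehat\I)=0$ \emph{if and only if} $\mu$ is Gaifman; the direction you want to use (non-separating interpretations carry no mass) is a \emph{consequence} of Gaifman-ness, proved in Proposition~\ref{prop:mutomustarseparating} by invoking the Gaifman condition itself. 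You cannot appeal to it while Gaifman-ness is the very thing being established, so your contradiction with $\delta>0$ never materializes; the compactness step only re-derives the existence of a non-separating model of $\{\ph\{x/t_i\}\}_i\cup\{\neg\forall x.\ph\}$, which is not in tension with anything assumed.

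The paper closes this case by a purely syntactic reduction to the case already handled. Since $\forall x.\ph$ has no separating model, every separating interpretation, in particular some $\widehat I$, is a model of $\neg\forall x.\ph\equiv\exists x.\neg\ph$; by Definition~\ref{def:separating} (applied to $\lambda x.\neg\ph$ and $\lambda x.\bot$, whose denotations differ in $\widehat I$) there is a \emph{closed term} $t$ with $\chi:=\neg\ph\{x/t\}$ valid in $\widehat I$. Now $\forall x.(\ph\vee\chi)$ is valid in $\widehat I$ (because $\chi$ is), so the formula $\ph\vee\chi$ \emph{does} fall under your already-proved case, giving $\mu(\forall x.(\ph\vee\chi))=\lim_n\mu(\bigwedge_{i=1}^n(\ph\vee\chi)\{x/t_i\})$. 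Since $x$ is not free in $\chi$, the left side equals $\mu(\forall x.\ph\vee\chi)=\mu(\forall x.\ph)+\mu(\chi)$ by disjointness, while for $n$ large enough that $t=t_i$ for some $i\le n$ the right side equals $\lim_n\mu(\bigwedge_{i=1}^n\ph\{x/t_i\})+\mu(\chi)$, again by disjointness. Cancelling $\mu(\chi)$ yields $\mu(\forall x.\ph)=\lim_n\mu(\bigwedge_{i=1}^n\ph\{x/t_i\})$, i.e.\ the Gaifman identity for $\ph$, with no appeal to the measure-theoretic characterization. Your bridging lemma about separating interpretations is correct and is in fact the key ingredient here too, but it must be used to \emph{produce the witness instance} $\chi$ and transfer the problem to $\ph\vee\chi$, not to constrain the measure $\mu^*$ directly.
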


\begin{proof}
{\boldmath{$(top\Leftarrow)$}}\vspace{-4ex}
\begin{align*}
  & \qquad\qquad\textstyle\lim_{n\to\infty} \mu(\forall x.\ph\,|\,\bigwedge_{i=1}^n \ph\{x/t_i\}) \\
= & \; \frac{\mu(\forall x.\ph)}{\lim_{n\to\infty}\mu(\bigwedge_{i=1}^n \ph\{x/t_i\})}
  & \textstyle \text{[$\forall x.\ph\to\bigwedge_{i=1}^n \ph\{x/t_i\}$]} \\
= & \; \frac{\mu(\forall x.\ph)}{\mu(\forall x.\ph)}
  & \textstyle\text{[$\bigwedge_{i=1}^n\ph\{x/t_i\})\stackrel{n\to\infty}\longrightarrow \mu(\forall x.\ph)$]} \\
= & \; 1 & \text{[$\mu(\forall x.\ph)>0$]}
\end{align*}
{\boldmath{$(top\Rightarrow)$}} As can be seen from the
$\Leftarrow$ proof, if one or both of the conditions fail, then
$\mu(\forall x.\ph\,|\,\bigwedge_{i=1}^n \ph\{x/t_i\})$ does not converge to 1.

For the bottom$\Leftrightarrow$ we abbreviate the statements
\begin{align*}
  L(\ph) \;&:=\; \textstyle[\mu(\forall x.\ph \,|\, \bigwedge_{i=1}^n\ph\{x/t_i\})\stackrel{n\to\infty}\longrightarrow 1] \\
  G(\ph) \;&:=\; \textstyle[\mu(\bigwedge_{i=1}^n\ph\{x/t_i\}) \stackrel{n\to\infty}\longrightarrow \mu(\forall x.\ph)] \\
  S(\ph) \;&:=\; [\forall x.\ph \text{ has a separating model}] \\
  A(\ph) \;&:=\; [\mu(\forall x.\ph)>0]
\end{align*}
In this notation, the top$\Leftrightarrow$ reads
$L(\ph)$ {\em iff} $G(\ph)$ and $A(\ph)$.

{\boldmath{$(bottom\Leftarrow)$}} Assume $\mu$ is Gaifman and Cournot and $S(\ph)$.
This implies $G(\ph)$ and $A(\ph)$. By $top\Leftarrow$ we get $L(\ph)$.
We have shown that for any $\ph$, if $\mu$ is Gaifman and Cournot, then $S(\ph)$ implies $L(\ph)$.

{\boldmath{$(bottom\Rightarrow)$}} Case 1 [$S(\ph)$ is true]
Then by assumption, $L(\ph)$. Then by $top\Rightarrow$ we get
$G(\ph)$ and $A(\ph)$.
Note that every sentence $\psi$ can be written as $\psi=\forall
x.\ph$ with $\ph:=[\psi\wedge(x=x)]$ being a formula having a
single free variable $x$. Therefore, $\mu(\psi)=\mu(\forall
x.\ph)>0$ for all $\psi$ that have a separating model. Hence
$\mu$ is Cournot.
\\
Case 2 [$S(\ph)$ is false] That is, $\forall x.\ph$ has no
separating model, therefore $\neg\forall x.\ph$ must have (at
least one) separating model, say $\widehat I$. Since $\widehat
I$ is a {\em separating} model of $\exists x.\neg\ph$,
Definition~\ref{def:separating} implies that there exists a
closed term $t$ such that $\widehat I$ is also a separating
model of $\chi:=\neg\ph\{x/t\}$. Now
\begin{align*}
  & \; \mu(\forall x.\ph)+\mu(\chi) \\
= & \; \mu(\forall x.\ph \vee \chi)
     & \text{[$\forall x.\ph$ and $\chi$ are disjoint]} \\
= & \; \mu(\forall x.(\ph \vee \chi))
     & \text{[$x$ is not free in $\chi$]} \\
= & \; \textstyle\lim_n\mu(\bigwedge_{i=1}^n(\ph \vee \chi)\{x/t_i\})
     & \text{[since $S(\ph \vee \chi)$, Case 1 implies $G(\ph \vee \chi)$]} \\
= & \; \textstyle\lim_n\mu(\bigwedge_{i=1}^n\ph\{x/t_i\} \vee \chi))
     & \text{[$x$ is not free in $\chi$]} \\
= & \; \textstyle\lim_n\mu(\bigwedge_{i=1}^n\ph\{x/t_i\})+\mu(\chi)
     & \text{[$t=t_i$ for some $i$, and $\ph\{x/t\}\wedge\chi$ false]}
\end{align*}
This proves $G(\ph)$ for $S(\ph)$ false.

Case 1 and 2 together prove $G(\ph)$ for all $\ph$, hence $\mu$ is Gaifman.
\qed\end{proof}

\section{Probabilities on Interpretations}\label{Probabilities on Interpretations}

We now study probabilities defined on sets of interpretations.

Consider the set $\I$ of interpretations (for the alphabet).
A Borel $\sigma$-algebra can be defined on $\I$.
For that, a topology needs to be defined first.
Given some alphabet, let $\S$ denote the set of sentences based on the alphabet.
For each sentence $\ph$, let $\model(\ph)$ denote the set
\beqn
  \{ I \in \I \; | \; \ph \text{ is valid in } I \}.
\eeqn
Consider the set $\B_\S = \{\model(\ph) \; | \; \ph\in\S\}$.
Since $\B_\S$ is closed under finite intersections, it is a
basis for a topology $\mathcal{T}$ on $\I$. $\B_\S$ is also an
algebra, since it is closed under complementation and finite
unions, and $\I \in \B_\S$. Let $\B$ be the Borel
$\sigma$-algebra formed from the topology $\mathcal{T}$ on
$\I$. In the following, probabilities on $\B$ will be
considered.

Suppose that the alphabet is countable (equivalently, the set of constants is countable).
Then the set of terms and, in particular, the set $\S$ is countable.
In this case, $\B_\S$ is countable and hence the $\sigma$-algebra generated by $\B_\S$
is the same as the Borel $\sigma$-algebra $\B$ generated by $\mathcal{T}$.

\begin{definition}[probability on interpretations]\label{def:probmeasure}
A function $\mu^*:\B\to \SetR$ is a
{\em finitely additive probability} on algebra $\B$
if $\mu^*(\emptyset)=0$ and $\mu^*(\I)=1$
and
$\mu^*(A\cap C)+\mu^*(A\cup C) = \mu^*(A)+\mu^*(C)$
for all $A,C\in\B$.
It is called a {\em Countably Additive (CA) probability} or simply a {\em probability}
if additionally for all countable collections $\{A_i\}_{i\in I}\subset\B$
of pairwise disjoint sets with $\bigcup_{i\in I}A_i\in\B$ it holds
that $\mu^*(\bigcup_{i\in I}A_i) =
\sum_{i\in I}\mu^*(A_i)$.
\end{definition}
For CA-probabilities, $\B$ is usually assumed to be a Borel $\sigma$-algebra,
i.e.\ $\bigcup_{i\in I}A_i\in\B$ always holds.
Countable additivity is equivalent to finite additivity and {\em continuity}:
\beqn
  \quad \lim_{n\to\infty}\textstyle\mu^*(\bigcap_{i=1}^n A_i) =  \mu^*(\lim_{n\to\infty}\bigcap_{i=1}^n A_i)
  \quad\text{for all } A_i\in\B.
\eeqn

First we show that a probability on the algebra gives a probability on sentences.

\begin{proposition}[$\mu^*\Rightarrow\mu$]\label{prop:mustartomu}
Let $\S$ be the set of sentences,
$\I$ the set of interpretations,
$\B_\S = \{ \model(\ph) \; | \; \ph \in \S \}$ the algebra on $\I$,
and $\mu^* : \B_\S \rightarrow \SetR$ a finitely additive probability on  $\B_\S$.
Define $\mu : \S \rightarrow \SetR$ by
\beqn
  \mu(\ph) = \mu^*(\model(\ph)),
\eeqn
for each $\ph \in \S$.
Then $\mu$ is a probability on $\S$.
\end{proposition}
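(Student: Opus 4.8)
The plan is to verify the two defining conditions of Definition~\ref{def:prob_phi} directly, by translating each logical statement about sentences into a set-theoretic statement about their model sets $\model(\cdot)$ and then invoking the hypotheses on $\mu^*$. The whole argument is the observation that $\ph \mapsto \model(\ph)$ is a Boolean homomorphism from sentences (modulo logical equivalence) into the algebra $\B_\S$, so that finite additivity of $\mu^*$ transports to finite additivity of $\mu$.

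First I would record the elementary semantic dictionary between connectives and set operations. Since $\D_\tB = \{\mathsf{T}, \mathsf{F}\}$ and a sentence is closed (so its denotation is independent of the variable assignment, whence ``valid in $I$'' is equivalent to $\V(\ph,I) = \mathsf{T}$), the classical truth tables of the connectives in each interpretation give, for any sentences $\ph,\psi$, the identities $\model(\ph \wedge \psi) = \model(\ph) \cap \model(\psi)$, $\model(\ph \vee \psi) = \model(\ph) \cup \model(\psi)$, and $\model(\neg \ph) = \I \setminus \model(\ph)$; moreover $\ph$ is valid iff $\model(\ph) = \I$, and $\ph$ is unsatisfiable (equivalently $\neg\ph$ valid) iff $\model(\ph) = \emptyset$. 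All the sets that appear lie in $\B_\S$, the domain of $\mu^*$, because $\B_\S$ is an algebra (as noted just before Definition~\ref{def:probmeasure}), so every application of $\mu^*$ below is legitimate.

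Next I would check the axioms. Non-negativity of $\mu$ is inherited directly from non-negativity of $\mu^*$ (part of $\mu^*$ being a probability). For condition~(1), if $\ph$ is valid then $\model(\ph) = \I$, so $\mu(\ph) = \mu^*(\I) = 1$. For condition~(2), the hypothesis that $\neg(\ph \wedge \psi)$ is valid converts into genuine set-disjointness: $\model(\ph \wedge \psi) = \emptyset$, i.e. $\model(\ph) \cap \model(\psi) = \emptyset$. Applying the modular law of Definition~\ref{def:probmeasure} with $A = \model(\ph)$ and $C = \model(\psi)$, together with $\mu^*(\emptyset) = 0$, yields $\mu^*(\model(\ph) \cup \model(\psi)) = \mu^*(\model(\ph)) + \mu^*(\model(\psi))$; since $\model(\ph \vee \psi) = \model(\ph) \cup \model(\psi)$, this reads $\mu(\ph \vee \psi) = \mu(\ph) + \mu(\psi)$, as required.

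There is essentially no obstacle here; the only point needing a moment's care is precisely the conversion of the logical disjointness hypothesis $\neg(\ph\wedge\psi)$ valid into the set-level disjointness $\model(\ph)\cap\model(\psi)=\emptyset$, which is where the equivalence ``$\neg\chi$ valid iff $\model(\chi)=\emptyset$'' is used. Everything else is the routine verification of the homomorphism identities for $\model$, which rests only on $\D_\tB$ being two-valued and on the connectives having their classical semantics in every interpretation.
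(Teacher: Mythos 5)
Your proof is correct and follows essentially the same route as the paper's: verify the two conditions of Definition~\ref{def:prob_phi} by translating validity of $\neg(\ph\wedge\psi)$ into disjointness of $\model(\ph)$ and $\model(\psi)$ and then invoking finite additivity of $\mu^*$. Your explicit derivation of disjoint additivity from the modular law $\mu^*(A\cap C)+\mu^*(A\cup C)=\mu^*(A)+\mu^*(C)$ together with $\mu^*(\emptyset)=0$ is a slightly more careful spelling-out of the step the paper labels simply ``[$\mu^*$ is finitely additive]''.
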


\begin{proof}
The two conditions of Definition~\ref{def:prob_phi} have to be established.
Note that $\mu$ is non-negative because $\mu^*$ is.

Suppose that $\ph$ is valid.
   Then $\model(\ph) = \I$, so that
   $\mu(\ph) = \mu^*(\model(\ph)) = \mu^*(\I) = 1$.

Suppose that $\neg(\ph \wedge \psi)$ is valid.
   Hence $\model(\ph) \cap \model(\psi) = \emptyset$.
   Thus
   \begin{align*}
     & \; \mu(\ph \vee \psi) \\
   = & \; \mu^*(\model(\ph \vee \psi)) \\
   = & \; \mu^*(\model(\ph) \cup \model(\psi)) \\
   = & \; \mu^*(\model(\ph)) + \mu^*(\model(\psi))
                             \;\;\; \text{[$\mu^*$ is finitely additive]} \\
   = & \; \mu(\ph) + \mu(\psi).
   \end{align*}
Hence $\mu$ is a probability.
\qed
\end{proof}

Note that only the finite additivity of $\mu^*$ is needed in Proposition~\ref{prop:mustartomu}.

Next we show that a probability on sentences gives a probability on interpretations.
For this, a useful property of probabilities on $\B_\S$ is needed.

\begin{proposition}[finite $\Leftrightarrow$ countable additivity]\label{prop:FAvsCA}
Let $\S$ be the set of sentences, $\I$ the set of interpretations,
and $\B_\S = \{ \model(\ph) \; | \; \ph \in \S \}$ the algebra on $\I$.
Then every finitely additive probability on $\B_\S$ is countably additive on $\B_\S$.
\end{proposition}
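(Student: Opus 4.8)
The plan is to reduce countable additivity to a continuity-at-$\emptyset$ statement, and then to observe that, thanks to compactness (Theorem~\ref{prop:compactness}), this continuity is not a genuine limiting condition at all: the relevant decreasing sequences in $\B_\S$ actually terminate at $\emptyset$ after finitely many steps. This matches the remark preceding the proposition that CA over interpretations is inherited from compactness rather than from the Gaifman condition.

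First I would isolate the key combinatorial lemma about $\B_\S$: if $A_1 \supseteq A_2 \supseteq \cdots$ is a decreasing sequence in $\B_\S$ with $\bigcap_n A_n = \emptyset$, then $A_N = \emptyset$ for some finite $N$. Writing $A_n = \model(\ph_n)$, the hypothesis $\bigcap_n \model(\ph_n) = \emptyset$ says precisely that no single interpretation validates every $\ph_n$, i.e.\ the set of sentences $\{\ph_n : n \in \SetN\}$ has no model. By the contrapositive of compactness, some finite subset $\{\ph_{n_1}, \dots, \ph_{n_k}\}$ already has no model, so $\bigcap_{j=1}^k \model(\ph_{n_j}) = \emptyset$; taking $N = \max_j n_j$ and using that the sequence is decreasing gives $A_N = \model(\ph_N) \subseteq \bigcap_{j=1}^k \model(\ph_{n_j}) = \emptyset$. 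Here I would note that $\B_\S$ is indeed an algebra of clopen sets, with $\model(\ph)\cap\model(\psi)=\model(\ph\wedge\psi)$ and $\I\setminus\model(\ph)=\model(\neg\ph)$, so all the set operations below stay inside $\B_\S$.

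Next I would deduce the definition of countable additivity directly (the finite-index case being immediate from finite additivity). Given a pairwise disjoint family $\{A_i\}_{i\in\SetN}\subset\B_\S$ whose union $B := \bigcup_{i} A_i$ lies in $\B_\S$, set the tails $C_n := B \setminus \bigcup_{i=1}^n A_i \in \B_\S$. These decrease and satisfy $\bigcap_n C_n = \emptyset$ by disjointness, so the lemma yields $C_N = \emptyset$ for some $N$, whence $A_i \subseteq C_N = \emptyset$ for all $i > N$. Finite additivity for disjoint sets follows from the modularity identity of Definition~\ref{def:probmeasure} by taking $A\cap C=\emptyset$ and $\mu^*(\emptyset)=0$; applying it to the disjoint decomposition $B = \big(\bigcup_{i=1}^N A_i\big)\cup C_N$ gives $\mu^*(B) = \sum_{i=1}^N \mu^*(A_i) + \mu^*(C_N) = \sum_{i=1}^N \mu^*(A_i) = \sum_{i=1}^\infty \mu^*(A_i)$, the last equality holding because every remaining term vanishes.

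The only real content is the lemma, and that is the step I expect to be the crux: everything else is the standard bookkeeping converting finite additivity plus continuity at $\emptyset$ into countable additivity. The lemma is exactly where compactness does all the work, expressing that $\B_\S$ is a \emph{compact class} of clopen sets, so that a decreasing sequence with empty intersection cannot shrink to $\emptyset$ merely in the limit but must already be empty at a finite stage. I would take care to state the compactness application in its contrapositive form (``has no model $\Rightarrow$ some finite subset has no model'') so that the translation between the set-theoretic intersection and the syntactic satisfiability of $\{\ph_n\}$ is transparent.
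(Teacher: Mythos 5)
Your proof is correct, and its crux coincides with the paper's: both arguments rest on the observation that, by compactness (Theorem~\ref{prop:compactness}), a decreasing sequence $\model(\ph_1)\supseteq\model(\ph_2)\supseteq\cdots$ in $\B_\S$ with empty intersection must already be empty at a finite stage; the paper proves this by contradiction (assuming every $\ph_n$ satisfiable and producing a model of the whole family), you by contraposition, which is the same reasoning. Where you genuinely diverge is in the final bookkeeping. The paper stops at continuity at the empty set, $\lim_{n\to\infty}\mu^*(\model(\ph_n))=0$, and then invokes the standard measure-theoretic fact \cite[Theorem 3.1.1]{dudley} that finite additivity plus continuity at $\emptyset$ on an algebra yields countable additivity on that algebra. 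You instead carry out that conversion by hand and, in doing so, exploit the stronger form of the lemma: since the tails $C_N=B\setminus\bigcup_{i=1}^N A_i$ are \emph{exactly} empty rather than merely of vanishing measure, every countable pairwise disjoint family in $\B_\S$ whose union lies in $\B_\S$ has only finitely many nonempty members, so countable additivity degenerates outright to finite additivity (which you correctly extract from the modularity identity in Definition~\ref{def:probmeasure}). Your route is thus self-contained and makes explicit the structural reason why CA is automatic on this algebra; the paper's route is shorter but delegates the last step to an external theorem. Both are sound.
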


\begin{proof}
Let $\mu^*$ be a finitely additive probability on $\B_\S$.
Suppose that $\{ \ph_n \}_{n=1}^\infty$ is a sequence of sentences such that
$\model(\ph_n) \supseteq \model(\ph_{n+1})$,
for $n = 1, 2, \ldots$ , and $\bigcap_{n=1}^\infty \model(\ph_n) = \emptyset$.
Clearly $\ph_{n+1} \longrightarrow \ph_n$ is valid, for $n = 1, 2, \ldots $.
Next we claim that $\ph_{n_0}$ is unsatisfiable, for some $n_0$.
To prove this, suppose on the contrary that $\ph_n$ is satisfiable, for $n = 1, 2, \ldots$.
Since $\ph_{n+1} \longrightarrow \ph_n$ is valid, for $n = 1, 2, \ldots $,
it follows that $\{ \ph_1, \ldots, \ph_n \}$ is satisfiable, for $n = 1, 2, \ldots $.
By the compactness theorem, $\{ \ph_n \}_{n =1}^\infty$ is satisfiable,
which contradicts the assumption that $\bigcap_{n=1}^\infty \model(\ph_n) = \emptyset$.
Thus the claim that $\ph_{n_0}$ is unsatisfiable, for some $n_0$, is proved.
Since the $\model(\ph_n)$ are decreasing, we have that
$\model(\ph_n) = \emptyset$, for $n \geq n_0$.
It thus follows that
$\lim_{n \rightarrow \infty} \mu^*(\model(\ph_n)) = \mu^*(\emptyset) = 0$.
Hence, by \cite[Theorem 3.1.1]{dudley}, $\mu^*$ is countably additive on $\B_\S$.
\qed
\end{proof}

\begin{proposition}[$\mu\Rightarrow\mu^*$]\label{prop:mutomustar}
Let the alphabet be countable, $\S$ the set of sentences, $\I$ the set of interpretations,
and $\B$ the Borel $\sigma$-algebra on $\I$.
Let $\mu : \S \rightarrow \SetR$ be a probability on sentences.
Then there exists a unique probability $\mu^* : \B \rightarrow \SetR$ such that
\beqn
  \mu^*(\model(\ph)) = \mu(\ph),
\eeqn
for each $\ph \in \S$.
\end{proposition}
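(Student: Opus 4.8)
The plan is to first define $\mu^*$ on the generating algebra $\B_\S$ by the prescribed formula $\mu^*(\model(\ph)) = \mu(\ph)$, verify that this is a well-defined finitely additive probability there, upgrade finite additivity to countable additivity using Proposition~\ref{prop:FAvsCA}, and finally invoke the Carath\'eodory extension theorem to obtain a unique countably additive extension to the $\sigma$-algebra generated by $\B_\S$, which equals $\B$ because the alphabet --- and hence $\B_\S$ --- is countable.

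The first thing I would check --- and the one genuinely delicate point --- is that $\mu^*$ is well defined on $\B_\S$: since many syntactically different sentences determine the same set of models, one must confirm that $\model(\ph) = \model(\psi)$ forces $\mu(\ph) = \mu(\psi)$. By the valuation of equality on type $\tB$, the sentence $\ph = \psi$ is valid in $I$ exactly when $\V(\ph,I) = \V(\psi,I)$, and since $\D_\tB = \{\mathsf{T},\mathsf{F}\}$ the condition $\model(\ph) = \model(\psi)$ says precisely that $\V(\ph,I) = \V(\psi,I)$ for every interpretation $I$, i.e.\ that $\ph = \psi$ is valid. Proposition~\ref{prop:PPS}.5 then gives $\mu(\ph) = \mu(\psi)$, so the value $\mu^*(\model(\ph))$ depends only on the set $\model(\ph)$ and not on the chosen representative $\ph$.

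Next I would verify the three conditions of Definition~\ref{def:probmeasure} on $\B_\S$. Since $\emptyset = \model(\bot)$ with $\bot$ unsatisfiable, Proposition~\ref{prop:PPS}.3 gives $\mu^*(\emptyset) = 0$; since $\I = \model(\top)$ with $\top$ valid, Definition~\ref{def:prob_phi}.1 gives $\mu^*(\I) = 1$. For the modularity identity, write $A = \model(\ph)$ and $C = \model(\psi)$; because $A \cap C = \model(\ph \wedge \psi)$ and $A \cup C = \model(\ph \vee \psi)$, the required equation $\mu^*(A \cap C) + \mu^*(A \cup C) = \mu^*(A) + \mu^*(C)$ is exactly Proposition~\ref{prop:PPS}.10. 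Thus $\mu^*$ is a finitely additive probability on the algebra $\B_\S$, and Proposition~\ref{prop:FAvsCA} immediately promotes it to a countably additive one.

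It then remains to extend this countably additive premeasure from the algebra $\B_\S$ to all of $\B$. Here I would appeal to the Carath\'eodory extension theorem: a countably additive probability on an algebra extends to a countably additive probability on the generated $\sigma$-algebra, and because the premeasure is finite (a probability) the extension is unique. The generated $\sigma$-algebra coincides with the Borel $\sigma$-algebra $\B$ precisely by the countability remark preceding Proposition~\ref{prop:mustartomu}: when $\B_\S$ is countable, the $\sigma$-algebra it generates as an algebra is the same as the Borel $\sigma$-algebra generated by the topology $\mathcal{T}$. The substantive analytic work is already discharged by Proposition~\ref{prop:FAvsCA} --- which is where the compactness theorem does the heavy lifting --- so the only remaining obstacle is the careful bookkeeping needed to apply the extension theorem: confirming that $\B_\S$ is an algebra generating $\B$ (supplied by the discussion preceding this proposition) and that finiteness of $\mu^*$ yields uniqueness of the extension.
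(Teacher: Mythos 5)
Your proposal is correct and follows essentially the same route as the paper's proof: define $\mu^*$ on the algebra $\B_\S$, check well-definedness via validity of $\ph = \psi$, upgrade finite to countable additivity by Proposition~\ref{prop:FAvsCA} (where compactness does the work), and extend uniquely to $\B$ by Carath\'eodory's theorem, using countability of the alphabet to identify the $\sigma$-algebra generated by $\B_\S$ with the Borel $\sigma$-algebra $\B$. The only cosmetic difference is that you verify finite additivity through the modularity identity of Proposition~\ref{prop:PPS}.10, whereas the paper checks additivity on pairwise disjoint sets via Proposition~\ref{prop:PPS}.6 (after translating disjointness of $\model(\ph_i)$ and $\model(\ph_j)$ into validity of $\neg(\ph_i \wedge \ph_j)$); the two verifications are equivalent given $\mu^*(\emptyset)=0$.
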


\begin{proof}
Consider the algebra $\B_\S = \{ \model(\ph) \; | \; \ph \in \S \}$.
Define $\mu^* : \B_\S \rightarrow \SetR$ by
\beqn
  \mu^*(\model(\ph)) = \mu(\ph),
\eeqn
for each $\ph \in \S$.
Suppose that $\ph$ and $\psi$ are sentences such that $\model(\ph) = \model(\psi)$.
Then $\ph = \psi$ is valid, and so $\mu(\ph) = \mu(\psi)$.
This shows that $\mu^*$ is well-defined on basic sets.

Clearly $\mu^*(\I) = \mu^*(\model(T)) = \mu(T) = 1$.

Next it is shown that $\mu^*$ is finitely additive on the algebra $\B_\S$.
Let $\{ \model(\ph_i) \}_{i=1}^n$ be a finite collection of pairwise disjoint sets in $\B_\S$.
Suppose that, for some $i$ and $j$, $\neg (\ph_i \wedge \ph_j)$ is not valid.
Hence $\ph_i \wedge \ph_j$ has a  model,
and so $\model(\ph_i) \cap \model(\ph_j) \neq \emptyset$.
Thus $\model(\ph_i) \cap \model(\ph_j) = \emptyset$ implies
$\neg (\ph_i \wedge \ph_j)$ is valid.
Then
\beqn
  \mu^*(\bigcup_{i=1}^n \model(\ph_i))
\;=\;\mu^*(\model(\bigvee_{i=1}^n \ph_i))
\;=\; \mu(\bigvee_{i=1}^n \ph_i)
\;=\; \sum_{i=1}^n \mu(\ph_i)
\;=\; \sum_{i=1}^n \mu^*(\model(\ph_i)),
\eeqn
where the second last equality follows from
Part 6 of Proposition~\ref{prop:PPS}.
Thus $\mu^*$ is finitely additive on $\B_\S$.

Now, by Proposition~\ref{prop:FAvsCA}, $\mu^*$ is countably additive on $\B_\S$.
Since the alphabet is countable, $\B_\S$
is countable, and so the Borel $\sigma$-algebra $\B$ generated by the topology on $\I$
is the same as the $\sigma$-algebra generated by $\B_\S$.
By Caratheodory's theorem \cite[Theorem 3.1.4]{dudley}, there is a unique extension of $\mu^*$ to
the Borel $\sigma$-algebra $\B$ on $\I$.
\qed
\end{proof}

A probability $\mu^* : \B \rightarrow \SetR$ on sets of interpretations
has the following intended meaning:
\begin{quote}
For a Borel set $B\in\B$, $\mu^*(B)$ is the degree of belief
that the intended interpretation is a member of $B$.
\end{quote}

We now consider probabilities defined on sets of {\em separating} interpretations.
Let $\widehat\I$ be the set of separating interpretations (for the alphabet).
A Borel $\sigma$-algebra can be defined on $\widehat\I$.
For that, a topology needs to be defined first.
For each sentence $\ph$, let $\sepmodel(\ph)$ denote the set
\beqn
  \{ I \in \widehat\I \; | \; \ph \text{ is valid in } I \}.
\eeqn
Consider the set $\widehat\B_\S = \{ \sepmodel(\ph) \; |\; \ph \in \S \}$.
Since $\widehat\B_\S$ is
closed under finite intersections, it is a basis for a topology
$\widehat{\mathcal{T}}$ on $\widehat\I$. $\widehat\B_\S$ is
also an algebra, since it is closed under complementation and
finite unions, and $\widehat\I \in \widehat\B_\S$.
Let $\widehat\B$ be the Borel $\sigma$-algebra formed from the
topology $\widehat{\mathcal{T}}$ on $\widehat\I$. In the following,
probabilities on $\widehat\B$ will be considered.
The Gaifman condition is crucial for them to be CA, since
$\widehat\B$ is not compact unlike $\B$.

Suppose that the alphabet is countable. Then the set of terms
and, in particular, the set $\S$ is countable. In this
case, $\widehat\B_\S$ is countable and hence the
$\sigma$-algebra generated by $\widehat\B_\S$ is the
same as the Borel $\sigma$-algebra $\widehat\B$ generated by
$\widehat{\mathcal{T}}$.

Note that there is a one-to-one correspondence between the set of probabilities on $\widehat\B$
and the set of probabilities on $\B$ which give measure 0 to the set of non-separating interpretations.
(The set of non-separating interpretations, and hence the set of separating interpretations,
are shown to be $\B$-measurable in the proof of Proposition~\ref{prop:mutomustarseparating} below.)
A probability $\widehat\mu^*:\widehat\B\to\SetR$ can be extended to a probability $\mu^*:\B\to\SetR$
defined by $\mu^*(B) = \widehat\mu^*(B\cap\widehat\I)$, for each $B\in\B$.
Note that $\mu^*(\I\setminus\widehat\I) = 0$.
Conversely, a probability $\mu^*:\B\to\SetR$ having the property that
$\mu^*(\I \setminus \widehat\I) = 0$
can be restricted to a probability $\mu^*|_{\widehat\B} :\widehat\B\to\SetR$ defined by
$\mu^*|_{\widehat\B}(B) = \mu^*(B)$, for each $B \in \widehat\B$.

The next result shows that a probability on the set of
separating interpretations gives a Gaifman probability on sentences.

\begin{proposition}[separating $\mu^*\Rightarrow\mu$ Gaifman]\label{prop:mustartomuseparating}
Let the alphabet be countable, $\S$ the set of sentences,
$\widehat\I$ the set of separating  interpretations,
and $\mu^* : \widehat\B \to \SetR$ a probability
on the Borel $\sigma$-algebra $\widehat\B$ on $\widehat\I$.
Define $\mu: \S \to \SetR$ by
\beqn
  \mu(\ph) = \mu^*(\sepmodel(\ph)),
\eeqn
for each $\ph \in \S$.
Then $\mu$ is a Gaifman probability on $\S$.
\end{proposition}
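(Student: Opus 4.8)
The plan is to verify the two defining conditions of a probability on sentences and then to establish the Gaifman property by checking the countable-alphabet characterization of Proposition~\ref{prop:GCA}. The whole argument runs by translating statements about $\mu$ into statements about $\mu^*$ via $\mu(\ph)=\mu^*(\sepmodel(\ph))$, exploiting that $\sepmodel$ turns disjunctions and conjunctions of sentences into unions and intersections of sets in $\widehat\B$.

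First I would show $\mu$ is a probability on $\S$, which is essentially the proof of Proposition~\ref{prop:mustartomu} with $\model$ replaced by $\sepmodel$ and $\I$ by $\widehat\I$. Non-negativity is inherited from $\mu^*$. If $\ph$ is valid, it is valid in every separating interpretation, so $\sepmodel(\ph)=\widehat\I$ and $\mu(\ph)=\mu^*(\widehat\I)=1$. If $\neg(\ph\wedge\psi)$ is valid, then no separating interpretation validates both $\ph$ and $\psi$, so $\sepmodel(\ph)\cap\sepmodel(\psi)=\emptyset$; since $\sepmodel(\ph\vee\psi)=\sepmodel(\ph)\cup\sepmodel(\psi)$ for sentences, finite additivity of $\mu^*$ gives $\mu(\ph\vee\psi)=\mu(\ph)+\mu(\psi)$.

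The heart of the argument is the Gaifman condition. Because the alphabet is countable, Proposition~\ref{prop:GCA} reduces this to verifying, for every pair $r,s$ of closed terms of the same function type $\alpha\to\beta$ and an enumeration $t_1,t_2,\ldots$ of all closed terms of type $\alpha$, that
\[
  \mu(r=s)=\lim_{n\to\infty}\mu\Big(\bigwedge_{i=1}^n((r\;t_i)=(s\;t_i))\Big)
  =\lim_{n\to\infty}\mu^*\Big(\bigcap_{i=1}^n \sepmodel((r\;t_i)=(s\;t_i))\Big).
\]
The key step, and the place where the separating hypothesis is indispensable, is the set identity
\[
  \sepmodel(r=s)=\bigcap_{i=1}^\infty \sepmodel((r\;t_i)=(s\;t_i)).
\]
The inclusion $\subseteq$ is immediate from extensionality: if a separating interpretation $I$ satisfies $\V(r,I)=\V(s,I)$, then applying this common function to each $\V(t_i,I)$ gives $\V((r\;t_i),I)=\V((s\;t_i),I)$ (using Definition~\ref{def:interpretation}.4), so $I$ lies in every set on the right. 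For $\supseteq$, suppose $I$ lies in the right-hand intersection but, for contradiction, $\V(r,I)\neq\V(s,I)$; since $I$ is separating and $r,s$ are closed terms of the same function type, Definition~\ref{def:separating} yields a closed term $t$ of type $\alpha$ with $\V((r\;t),I)\neq\V((s\;t),I)$, and because $t_1,t_2,\ldots$ exhausts all such terms we have $t=t_j$ for some $j$, contradicting $I\in\sepmodel((r\;t_j)=(s\;t_j))$. Hence $\V(r,I)=\V(s,I)$, i.e.\ $I\in\sepmodel(r=s)$.

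Finally, the sets $\bigcap_{i=1}^n \sepmodel((r\;t_i)=(s\;t_i))$ decrease in $n$ with intersection equal to $\sepmodel(r=s)$ by the identity just proved. Since $\mu^*$ is a countably additive probability on $\widehat\B$, its continuity along decreasing sequences (Definition~\ref{def:probmeasure}) gives $\mu^*(\sepmodel(r=s))=\lim_{n\to\infty}\mu^*\big(\bigcap_{i=1}^n \sepmodel((r\;t_i)=(s\;t_i))\big)$, which is exactly the displayed limit, so $\mu$ is Gaifman. I expect the main obstacle to be the $\supseteq$ inclusion of the set identity: this is the only point where one must invoke the separating property together with the fact that the enumeration exhausts all closed terms of type $\alpha$, guaranteeing that a witnessing argument separating $r$ from $s$ actually appears among the $t_i$. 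Everything else is routine translation between sentences and their separating-model sets, plus the standard continuity of countably additive measures (note that here $\mu^*$ is \emph{given} to be countably additive, so the failure of the finite-to-countable additivity passage on the non-compact $\widehat\B$ is not an issue).
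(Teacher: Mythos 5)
Your proof is correct and takes essentially the same route as the paper's: both verify the probability axioms through the correspondence $\mu(\ph)=\mu^*(\sepmodel(\ph))$, establish the key identity $\sepmodel(r=s)=\bigcap_{i=1}^\infty\sepmodel((r\;t_i)=(s\;t_i))$ using the separating property for the reverse inclusion, and conclude via continuity of the countably additive $\mu^*$ together with Proposition~\ref{prop:GCA}. The only cosmetic difference is that the paper verifies the universally quantified form (Proposition~\ref{prop:GCA}.5, via $\forall x.\ph\equiv[\lambda x.\ph=\lambda x.\top]$), whereas you check the $r=s$ limit form (Proposition~\ref{prop:GCA}.2) directly.
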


\begin{proof}
First, the two conditions of Definition~\ref{def:prob_phi} have to be established.
Note that $\mu$ is non-negative because $\mu^*$ is.

1. Suppose that $\ph$ is valid.
   Then $\sepmodel(\ph) = \widehat\I$, so that
   $\mu(\ph) = \mu^*(\sepmodel(\ph)) = \mu^*(\widehat\I) = 1$.

2. Suppose that $\neg(\ph \wedge \psi)$ is valid.
   Hence $\sepmodel(\ph) \cap \sepmodel(\psi) = \emptyset$.
   Thus
   \begin{align*}
     & \; \mu(\ph \vee \psi) \\
   = & \; \mu^*(\sepmodel(\ph \vee \psi)) \\
   = & \; \mu^*(\sepmodel(\ph) \cup \sepmodel(\psi)) \\
   = & \; \mu^*(\sepmodel(\ph)) + \mu^*(\sepmodel(\psi))
                             & \text{[$\mu^*$ is finitely additive]} \\
   = & \; \mu(\ph) + \mu(\psi).
   \end{align*}
Hence $\mu$ is a probability.

Let $r$ and $s$ be closed terms of type $\alpha \rightarrow \beta$ and  $t_1, t_2, ... $ an
enumeration of all closed terms of type $\a$.
Then
\beqn
  \sepmodel(r = s) \;=\; \bigcap_{i=1}^\infty \sepmodel((r \; t_i) = (s \; t_i)).
\eeqn
To see this, suppose first that $I \in \sepmodel(r = s)$.
Then clearly $I \in \sepmodel((r \; t_i) = (s \; t_i))$, for each $t_i$.
Conversely, suppose that $I$ is a separating interpretation such that $I \notin \sepmodel(r = s)$.
Since $I$ is separating, there exists a closed term $t_j$ such that
$I \notin \sepmodel((r \; t_j) = (s \; t_j))$, for some $j$.
Hence $I \notin \bigcap_{i=1}^\infty \sepmodel((r \; t_i) = (s \; t_i))$.
[Note, by the way, that $\model(r = s) \neq \bigcap_{i=1}^\infty \model((r \; t_i) = (s \; t_i))$.]

Since $\forall x.\ph$ is logically equivalent to
$\lambda x. \ph = \lambda x. T$,
it follows immediately from the remark of the preceding paragraph that
\beqn
  \sepmodel(\forall x. \ph) \;=\; \bigcap_{i=1}^\infty \sepmodel(\ph\{x/t_i\}).
\eeqn
Thus\vspace{-5ex}
\begin{align*}
\mu(\forall x.\ph) = & \; \mu^*(\sepmodel(\forall x. \ph)) \\
= & \; \textstyle\mu^*(\bigcap_{i=1}^\infty \sepmodel(\ph\{x/t_i\})) \\
= & \; \textstyle\lim_{n\to\infty} \mu^*(\bigcap_{i=1}^n \sepmodel(\ph\{x/t_i\}))
               & \text{[$\mu^*$ is countably additive]} \\
= & \; \textstyle\lim_{n\to\infty} \mu^*(\sepmodel(\bigwedge_{i=1}^n \ph\{x/t_i\})) \\
= & \; \textstyle\lim_{n\to\infty} \mu(\bigwedge_{i=1}^n\ph\{x/t_i\}),
\end{align*}
and so $\mu$ is Gaifman, by Proposition~\ref{prop:GCA}.
\qed
\end{proof}

A probability $\mu^* : \widehat\B \to \SetR$ on sets of separating interpretations
has the following intended meaning:
\begin{quote}
For a Borel set $B\in\widehat\B$, $\mu^*(B)$ is the degree of
belief that the intended (separating) interpretation is a
member of $B$.
\end{quote}

Next we show that a Gaifman probability on sentences gives a
probability on separating interpretations.

\begin{proposition}[Gaifman $\mu\Rightarrow\mu^*$ separating]\label{prop:mutomustarseparating}
Let the alphabet be countable, $\S$ the set of sentences,
$\widehat\I$ the set of separating interpretations,
and $\widehat\B$ the Borel $\sigma$-algebra on $\widehat\I$.
Let $\mu : \S \rightarrow \SetR$ be a Gaifman probability on sentences.
Then there exists a unique probability
$\widehat\mu^* : \widehat\B \rightarrow \SetR$ such that
\beqn
  \widehat\mu^*(\sepmodel(\ph)) = \mu(\ph),
\eeqn
for each $\ph \in \S$.
\end{proposition}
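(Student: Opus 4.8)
The plan is to reduce the statement to Proposition~\ref{prop:mutomustar} rather than rebuild a Carath\'eodory extension from scratch. First I would apply Proposition~\ref{prop:mutomustar} to the probability $\mu$ (ignoring, for the moment, that it is Gaifman) to obtain the unique countably additive probability $\mu^*:\B\to\SetR$ on the full interpretation space with $\mu^*(\model(\ph))=\mu(\ph)$ for all $\ph\in\S$. The entire burden then shifts to showing that this $\mu^*$ is concentrated on the separating interpretations, i.e.\ $\mu^*(\I\setminus\widehat\I)=0$. Once that holds, the desired $\widehat\mu^*$ is just the trace of $\mu^*$ on $\widehat\B$ supplied by the correspondence recorded just before the statement: setting $\widehat\mu^*(B)=\mu^*(B)$ for $B\in\widehat\B$ yields a countably additive probability with $\widehat\mu^*(\sepmodel(\ph))=\mu^*(\model(\ph)\cap\widehat\I)=\mu^*(\model(\ph))=\mu(\ph)$, the middle equality using $\mu^*(\model(\ph)\setminus\widehat\I)\le\mu^*(\I\setminus\widehat\I)=0$ together with the identity $\sepmodel(\ph)=\model(\ph)\cap\widehat\I$.

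Before that I must check that $\I\setminus\widehat\I$ is $\B$-measurable. By Definition~\ref{def:separating}, an interpretation $I$ fails to be separating exactly when some pair $r,s$ of closed terms of a common function type $\a\to\beta$ witnesses the failure, i.e.\ $\V(r,I)\ne\V(s,I)$ while $\V((r\;t),I)=\V((s\;t),I)$ for every closed term $t$ of type $\a$. For one fixed pair this witnessing set is
\[
  N_{r,s}\;=\;\model(r\ne s)\cap\bigcap_{i=1}^\infty\model((r\;t_i)=(s\;t_i)),
\]
a countable intersection of members of $\B_\S$ (where $t_1,t_2,\dots$ enumerates the closed terms of type $\a$), hence in $\B$. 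Since the alphabet is countable there are only countably many such pairs, so $\I\setminus\widehat\I=\bigcup_{(r,s)}N_{r,s}$ is a countable union of $\B$-sets and therefore $\B$-measurable.

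The crux is to show $\mu^*(N_{r,s})=0$ for each pair, and this is precisely where the Gaifman hypothesis does the work that compactness cannot (recall $\widehat\B$ is not compact). Because $r=s\to(r\;t_i)=(s\;t_i)$ is valid we have $\model(r=s)\subseteq\bigcap_i\model((r\;t_i)=(s\;t_i))$, so $N_{r,s}=\bigl(\bigcap_i\model((r\;t_i)=(s\;t_i))\bigr)\setminus\model(r=s)$. By continuity from above of the countably additive $\mu^*$ and then the limit form of the Gaifman condition (Proposition~\ref{prop:GCA}),
\[
  \mu^*\Bigl(\bigcap_{i=1}^\infty\model((r\;t_i)=(s\;t_i))\Bigr)
  =\lim_{n\to\infty}\mu\bigl(\textstyle\bigwedge_{i=1}^n((r\;t_i)=(s\;t_i))\bigr)
  =\mu(r=s)=\mu^*(\model(r=s)).
\]
Since the smaller set $\model(r=s)$ already carries all of this mass, its complement inside the intersection is null, i.e.\ $\mu^*(N_{r,s})=0$. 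Countable subadditivity then gives $\mu^*(\I\setminus\widehat\I)\le\sum_{(r,s)}\mu^*(N_{r,s})=0$, completing the reduction.

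Finally, uniqueness follows because $\widehat\B_\S$ is an algebra that generates $\widehat\B$ (countable alphabet), so any countably additive probability on $\widehat\B$ agreeing with $\mu$ on the generators $\sepmodel(\ph)$ must coincide with $\widehat\mu^*$ by the uniqueness half of Carath\'eodory's theorem. The main obstacle is exactly the null computation $\mu^*(N_{r,s})=0$: it is the point at which the Gaifman identity on \emph{sentences} is converted into a statement about the \emph{measure} of a nameably-witnessed non-separation event, and it is what plays the role here that the compactness argument played in Proposition~\ref{prop:FAvsCA} for the non-separating case.
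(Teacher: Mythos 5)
Your proposal is correct and follows essentially the same route as the paper's proof: invoke Proposition~\ref{prop:mutomustar} to get $\mu^*$ on $\B$, express $\I\setminus\widehat\I$ as a countable union of sets $\model(r\neq s)\cap\bigcap_{i=1}^\infty\model((r\;t_i)=(s\;t_i))$, show each is $\mu^*$-null by combining countable additivity (continuity) with the limit form of the Gaifman condition from Proposition~\ref{prop:GCA}, and then restrict $\mu^*$ to $\widehat\B$. The only difference is cosmetic: you make the uniqueness step explicit via Carath\'eodory on the algebra $\widehat\B_\S$, which the paper leaves implicit.
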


\begin{proof}
Let $\I$ be the set of interpretations.
Then $\I \setminus \widehat\I$ is the set of all non-separating interpretations.
First we show that $\I \setminus \widehat\I$ is $\B$-measurable.
Let $r$ and $s$ be closed terms of the same function type, say,  $\alpha \rightarrow \beta$, and
$t_1, t_2, ... $ an enumeration of all closed terms of type $\alpha$.
Then
\beqn
\model(r \neq s) \cap \bigcap_{i=1}^\infty \model((r \; t_i) = (s \; t_i))
\eeqn
is a measurable set of non-separating interpretations.
Since there are countably many such pairs $r$ and $s$, and since
\beqn
\I \setminus \widehat\I =
    \bigcup_{r,s} \left(\model(r \neq s) \cap
                \bigcap_{i=1}^\infty \model((r \; t_i) = (s \; t_i))\right),
\eeqn
it follows immediately that $\I \setminus \widehat\I$ is measurable.

According to Proposition~\ref{prop:mutomustar}, there is a unique probability
$\mu^* : \B \rightarrow \SetR$ such that
\beqn
  \mu^*(\model(\ph)) = \mu(\ph),
\eeqn
for each $\ph \in \S$.
We now show that $\mu^*(\I \setminus \widehat\I) = 0$:
\begin{align*}
  & \; \textstyle\mu^*(\model(r \neq s) \cap \bigcap_{i=1}^\infty \model((r \; t_i) = (s \; t_i))) \\
= & \; \textstyle\mu^*(\bigcap_{i=1}^\infty \model((r \; t_i) = (s \; t_i))) - \mu^*(\model(r = s)) \\
= & \; \textstyle\lim_{n \rightarrow \infty} \mu^*(\bigcap_{i=1}^n \model((r \; t_i) = (s \; t_i)))
                   - \mu(r = s) & \text{[$\mu^*$ is countably additive]} \\
= & \; \textstyle \lim_{n \rightarrow \infty} \mu^*(\model(\bigwedge_{i=1}^n((r \; t_i) = (s \; t_i)))) - \mu(r = s) \\
= & \; \textstyle \lim_{n \rightarrow \infty} \mu(\bigwedge_{i=1}^n((r \; t_i) = (s \; t_i))) - \mu(r = s) \\
= & \; \textstyle \mu(r = s) - \mu(r = s) & \text{[$\mu$ is Gaifman]} \\
= & \;  0.
\end{align*}
Hence $\mu^*(\I \setminus \widehat\I) = 0$.

Note that $\widehat\B \subseteq \B$, since $\widehat\I$ is measurable.
Define $\widehat\mu^* : \widehat\B \rightarrow \SetR$ to be
the restriction of $\mu^*$ to $\widehat\B$.
Then, for each $\ph \in \S$,
\begin{align*}
  & \; \widehat\mu^*(\sepmodel(\ph)) \\
= & \; \mu^*(\model(\ph) \cap \widehat\I) \\
= & \; \mu^*(\model(\ph)) - \mu^*(\model(\ph) \cap (\I \setminus \widehat\I)) \\
= & \;  \mu^*(\model(\ph))  & \text{[$\mu^*(\I \setminus \widehat\I) = 0$]} \\
= & \;  \mu(\ph).
\end{align*}
Also $\widehat\mu^*(\widehat\I) = \mu^*(\widehat\I) = \mu^*(\I) - \mu^*(\I \setminus \widehat\I) = \mu^*(\I) = 1$,
so that $\widehat\mu^*$ is a probability.
\qed
\end{proof}

Propositions~\ref{prop:mustartomuseparating} and \ref{prop:mutomustarseparating} and imply

\begin{corollary}[$\mu^*(\I\setminus\widehat\I)=0\Leftrightarrow\mu$ Gaifman]\label{cor:mueqmuseparating}
For countable alphabet and any probability $\mu:\S\to\SetR$ on
sentences and probability $\mu^*:\B\to\SetR$ on
interpretations (one-to-one) related by $\mu^*(\model(\ph))=\mu(\ph)$ it
holds that: $\mu^*(\I\setminus\widehat\I)=0\Leftrightarrow\mu$ Gaifman.
\end{corollary}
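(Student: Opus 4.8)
The plan is to read off both implications directly from the two propositions immediately preceding the corollary, exploiting the restriction/extension correspondence between probabilities on $\B$ that vanish on the non-separating interpretations and probabilities on $\widehat\B$, as set up in the paragraph before Proposition~\ref{prop:mustartomuseparating}. Throughout, $\mu$ and $\mu^*$ are tied together by $\mu^*(\model(\ph))=\mu(\ph)$; by Propositions~\ref{prop:mustartomu} and~\ref{prop:mutomustar} this is a genuinely bijective correspondence for a countable alphabet, so it is legitimate to pass freely between the two objects.

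For the direction $\mu$ Gaifman $\Rightarrow \mu^*(\I\setminus\widehat\I)=0$, there is essentially nothing left to do: the proof of Proposition~\ref{prop:mutomustarseparating} establishes exactly $\mu^*(\I\setminus\widehat\I)=0$ as an intermediate step, using that $\widehat\I$ is $\B$-measurable (also shown there) together with the Gaifman identity $\mu(r=s)=\lim_n\mu(\bigwedge_{i=1}^n((r\;t_i)=(s\;t_i)))$. So I would simply invoke that step.

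For the converse $\mu^*(\I\setminus\widehat\I)=0 \Rightarrow \mu$ Gaifman, I would restrict $\mu^*$ to the trace $\sigma$-algebra $\widehat\B\subseteq\B$, obtaining $\widehat\mu^*(B):=\mu^*(B)$ for $B\in\widehat\B$; this is a probability precisely because $\mu^*(\I\setminus\widehat\I)=0$. The key computation is that $\widehat\mu^*$ reproduces $\mu$ on basic sets: since $\sepmodel(\ph)=\model(\ph)\cap\widehat\I$, one has $\widehat\mu^*(\sepmodel(\ph))=\mu^*(\model(\ph)\cap\widehat\I)=\mu^*(\model(\ph))-\mu^*(\model(\ph)\cap(\I\setminus\widehat\I))=\mu^*(\model(\ph))=\mu(\ph)$, where the middle term drops because it is bounded by $\mu^*(\I\setminus\widehat\I)=0$. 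Having identified $\widehat\mu^*$ as a probability on $\widehat\B$ satisfying $\widehat\mu^*(\sepmodel(\ph))=\mu(\ph)$, Proposition~\ref{prop:mustartomuseparating} applies verbatim and yields that $\mu$ is Gaifman.

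I expect no genuinely hard step: the whole content has already been extracted in the two propositions, and the corollary is just their conjunction read through the $\mu\leftrightarrow\mu^*$ dictionary. The only point deserving care is the bookkeeping around measurability --- one needs $\widehat\I\in\B$ so that the trace algebra $\widehat\B$ sits inside $\B$ and the restriction $\mu^*|_{\widehat\B}$ is well defined --- but this measurability was already proved inside Proposition~\ref{prop:mutomustarseparating}, so I would cite it rather than reprove it.
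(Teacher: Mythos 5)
Your proposal is correct and takes essentially the same route as the paper: the paper offers no separate proof, stating only that the corollary follows from Propositions~\ref{prop:mustartomuseparating} and~\ref{prop:mutomustarseparating}, and your argument is precisely the intended combination --- the forward direction is the $\mu^*(\I\setminus\widehat\I)=0$ step already established inside the proof of Proposition~\ref{prop:mutomustarseparating} (legitimately transferred via the uniqueness in Proposition~\ref{prop:mutomustar}), and the converse is the restriction-to-$\widehat\B$ correspondence described in the paper just before Proposition~\ref{prop:mustartomuseparating}, followed by that proposition. No gaps.
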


There is a concept of being strongly Cournot for probabilities on sets
of interpretations that corresponds to that of being
strongly Cournot for probabilities on sentences.

\begin{definition}[strongly Cournot $\mu^*$]\label{def:stronglyCournotmustar}
A probability $\mu^*:\B\to\SetR$ is
{\em strongly Cournot} if, for each $\ph \in \S$, $\ph$ is
satisfiable implies $\mu^*(\model(\ph)) > 0$.
\end{definition}

\begin{proposition}[strongly Cournot $\mu^*\Leftrightarrow\mu$]\label{prop:stronglyCournotMSeqM}
Let $\S$ be the set of sentences and $\I$ the set of interpretations.
Suppose that $\mu^*: \B \to \SetR$, a probability on the Borel
$\sigma$-algebra on $\I$, and $\mu:\S\to\SetR$, a probability on sentences, are
related by
\beqn
  \mu(\ph) = \mu^*(\model(\ph)),
\eeqn
for each $\ph \in \S$.
Then $\mu$ is a strongly Cournot probability on sentences
iff $\mu^*$ is a strongly Cournot probability on sets of interpretations.
\end{proposition}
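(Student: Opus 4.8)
The plan is to unwind the two definitions and observe that they coincide verbatim under the stated relation, so that no real work is required. Both Definition~\ref{def:stronglyCournot} and Definition~\ref{def:stronglyCournotmustar} range over exactly the same object, the set $\S$ of sentences, and for each $\ph\in\S$ both refer to the identical hypothesis, namely that $\ph$ is satisfiable. The sole difference lies in the conclusion: one demands $\mu(\ph)>0$ while the other demands $\mu^*(\model(\ph))>0$. By the hypothesis $\mu(\ph)=\mu^*(\model(\ph))$, these two conclusions are literally the same inequality between real numbers.

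Concretely, I would argue both directions by direct substitution. First, suppose $\mu$ is strongly Cournot and let $\ph\in\S$ be satisfiable; then $\mu^*(\model(\ph))=\mu(\ph)>0$, so $\mu^*$ is strongly Cournot. Conversely, suppose $\mu^*$ is strongly Cournot and let $\ph\in\S$ be satisfiable; then $\mu(\ph)=\mu^*(\model(\ph))>0$, so $\mu$ is strongly Cournot. This establishes the biconditional.

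There is essentially no obstacle here: the two notions of being strongly Cournot were deliberately set up (via Definitions~\ref{def:stronglyCournot} and~\ref{def:stronglyCournotmustar}) to correspond under the identity $\mu(\ph)=\mu^*(\model(\ph))$, so the argument is an immediate transcription. The one point worth keeping in mind is that the quantification in \emph{both} definitions runs over sentences $\ph\in\S$, not over arbitrary Borel sets $B\in\B$; this is precisely why the pointwise identity $\mu(\ph)=\mu^*(\model(\ph))$ on basic sets suffices, and no further measure-theoretic reasoning about general elements of $\B$ (nor any appeal to Gaifman-type conditions) is needed.
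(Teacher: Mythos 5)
Your proof is correct and is essentially identical to the paper's own proof: both directions are immediate substitutions using the relation $\mu(\ph)=\mu^*(\model(\ph))$, exactly as the paper does. Your added remark that both Definitions~\ref{def:stronglyCournot} and~\ref{def:stronglyCournotmustar} quantify only over sentences $\ph\in\S$ (so no reasoning about general Borel sets is needed) is a correct observation of why the argument is trivial.
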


\begin{proof}
Suppose that $\mu$ is a strongly Cournot probability on sentences.
Let $\ph$ be a satisfiable sentence.
Then $\mu^*(\model(\ph)) = \mu(\ph) > 0$,
and so $\mu^*$ is a strongly Cournot probability.

Conversely, suppose that $\mu^*$ is a strongly Cournot probability on sets of interpretations.
Let $\ph$ be a satisfiable sentence.
Then $\mu(\ph) = \mu^*(\model(\ph)) > 0$,
and so $\mu$ is a strongly Cournot probability.
\qed
\end{proof}

As with probabilities on sentences, we can also define a Cournot condition for probabilities on sets
of separated interpretations.

\begin{definition}[Cournot $\mu^*$]\label{def:Cournotmustar}
A probability $\mu^*: \B\to\SetR$ is
{\em Cournot} if, for each $\ph \in \S$, $\ph$
has a separating model implies $\mu^*(\model(\ph)) > 0$.
\end{definition}

Clearly every strongly Cournot probability is Cournot.

\begin{proposition}[Cournot $\mu^*\Leftrightarrow\mu$]\label{prop:CournotMSeqM}
Let $\S$ be the set of sentences and $\I$ the set of interpretations.
Suppose that $\mu^*: \B \to \SetR$, a probability on the Borel
$\sigma$-algebra $\B$ on $\I$, and $\mu:\S\to\SetR$, a probability on sentences, are
related by
\beqn
  \mu(\ph) = \mu^*(\model(\ph)),
\eeqn
for each $\ph \in \S$.
Then $\mu$ is a Cournot probability on sentences
iff $\mu^*$ is a Cournot probability on sets of interpretations.
\end{proposition}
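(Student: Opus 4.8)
The plan is to mirror the proof of Proposition~\ref{prop:stronglyCournotMSeqM} essentially verbatim, since Definitions~\ref{def:Cournot} and~\ref{def:Cournotmustar} share the identical antecedent---that $\ph$ has a separating model---and differ only in whether they conclude $\mu(\ph) > 0$ or $\mu^*(\model(\ph)) > 0$. The linking hypothesis $\mu(\ph) = \mu^*(\model(\ph))$ renders these two conclusions interchangeable, so the equivalence reduces to transporting one Cournot condition across the identity that defines $\mu$ from $\mu^*$.

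First I would establish the forward implication. Assuming $\mu$ is Cournot, I take an arbitrary sentence $\ph$ having a separating model; Definition~\ref{def:Cournot} then yields $\mu(\ph) > 0$, and the linking equation gives $\mu^*(\model(\ph)) = \mu(\ph) > 0$, which is precisely what Definition~\ref{def:Cournotmustar} demands of $\mu^*$. The converse is symmetric: assuming $\mu^*$ is Cournot and taking any $\ph$ with a separating model, Definition~\ref{def:Cournotmustar} gives $\mu^*(\model(\ph)) > 0$, whence $\mu(\ph) = \mu^*(\model(\ph)) > 0$ and $\mu$ is Cournot.

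There is essentially no obstacle here. Unlike the Gaifman correspondence of Propositions~\ref{prop:mustartomuseparating} and~\ref{prop:mutomustarseparating}---which relied on compactness, countable additivity, and the measurability of $\I \setminus \widehat\I$---the present statement is purely a matter of unwinding definitions, because both Cournot conditions are stated directly in terms of the quantity $\mu(\ph) = \mu^*(\model(\ph))$. The only point worth remarking is that no separating-interpretation machinery actually enters the argument: although the word ``separating'' appears in both Cournot definitions, it sits inside the hypothesis ``$\ph$ has a separating model'' and is never probed by the reasoning, so the proof is formally identical to the strongly Cournot case with ``satisfiable'' replaced by ``has a separating model.''
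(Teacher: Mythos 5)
Your proof is correct and is essentially identical to the paper's own argument: both directions simply transport the positivity condition across the linking identity $\mu(\ph) = \mu^*(\model(\ph))$, exactly as in the strongly Cournot case. Your closing remark that the separating-model hypothesis is never actually probed by the reasoning matches the paper's treatment, which likewise reuses the proof of Proposition~\ref{prop:stronglyCournotMSeqM} verbatim with the hypothesis swapped.
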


\begin{proof}
Suppose that $\mu$ is a Cournot probability on sentences.
Let $\ph$ be a sentence having a separating model.
Then $\mu^*(\model(\ph)) = \mu(\ph) > 0$,
and so $\mu^*$ is a Cournot probability.

Conversely, suppose that $\mu^*$ is a Cournot probability on sets of interpretations.
Let $\ph$ be a sentence having a separating model.
Then $\mu(\ph) = \mu^*(\model(\ph)) > 0$,
and so $\mu$ is a Cournot probability.
\qed
\end{proof}

\section{Existence of Probabilities}\label{Existence of Probabilities}

Now we turn to the issue of the existence of probabilities.

\begin{definition}[discrete $\mu^*$]
A probability $\mu^* : \B \rightarrow \SetR$ is
{\em discrete} if there exists a countable set of
interpretations $\{ I_i \}_{i=1}^\infty$ and a set of
non-negative real numbers $\{ m_i \}_{i=1}^\infty$ such that
$\sum_{i=1}^\infty m_i = 1$ and, for each Borel set $B$,
$\mu^*(B) = \sum_{i: I_i \in B} m_i$.
\end{definition}

Each $m_i$ is called a {\em mass}\index{mass}.
Clearly, a discrete probability {\em is} a probability on the Borel $\sigma$-algebra $\B$.
The set $\{ I_i \}_{i=1}^\infty$ is called the {\em support}\index{support} of the probability.

\begin{theorem}[Cournot and Gaifman probability]\label{thm:CandG}
If the alphabet is countable, there exists a probability on sentences that is Cournot and Gaifman.
\end{theorem}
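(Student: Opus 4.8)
The plan is to construct explicitly a discrete probability $\mu^*$ on separating interpretations by placing positive mass on a countable family of separating models that is rich enough to witness the satisfiability of every sentence with a separating model, and then to transfer it to sentences via Proposition~\ref{prop:mustartomuseparating}. Concretely, I would proceed as follows.

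\textbf{Step 1 (enumerate the separating models to be charged).} Since the alphabet is countable, the set $\S$ of sentences is countable; enumerate it as $\ph_1,\ph_2,\ldots$. For each $\ph_k$ that has a separating model, fix one such model $I_k\in\widehat\I$; discard the indices for which $\ph_k$ has no separating model. This yields a countable set $\{I_k\}$ of separating interpretations with the key property that every sentence possessing a separating model is valid in at least one $I_k$ (namely, if $\ph$ has a separating model then $\ph=\ph_k$ for some $k$ in our list, and $I_k$ validates it).

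\textbf{Step 2 (assign positive masses and build $\mu^*$).} Choose strictly positive reals $m_k>0$ with $\sum_k m_k=1$ (e.g.\ $m_k=2^{-k}$ after reindexing). Define the discrete probability $\widehat\mu^*:\widehat\B\to\SetR$ by $\widehat\mu^*(B)=\sum_{k:\,I_k\in B}m_k$. By the remark following the definition of discreteness, this is a genuine (countably additive) probability on $\widehat\B$. Then set $\mu(\ph)=\widehat\mu^*(\sepmodel(\ph))$ for each $\ph\in\S$.

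\textbf{Step 3 (verify Gaifman and Cournot).} Gaifman comes for free: by Proposition~\ref{prop:mustartomuseparating}, because $\mu$ arises from a countably additive probability on the separating interpretations $\widehat\I$, it is automatically Gaifman. For Cournot, suppose $\ph$ has a separating model. By Step~1, $\ph$ appears as some $\ph_k$ in our enumeration and $I_k\in\sepmodel(\ph)$, so
\beqn
  \mu(\ph)=\widehat\mu^*(\sepmodel(\ph))\;\geq\; m_k \;>\;0,
\eeqn
which is exactly the Cournot condition of Definition~\ref{def:Cournot}. Hence $\mu$ is both Cournot and Gaifman, as required.

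\textbf{The main obstacle} is entirely contained in Step~1: one must know that a \emph{countable} family of separating models suffices to witness satisfiability (in the separating sense) of every relevant sentence. This is guaranteed precisely because the countable alphabet makes $\S$ countable, so a single separating model per sentence gives a countable charging set; the existence of such separating models where needed is supplied by Proposition~\ref{prop:existseparating} applied within the given alphabet (or by the hypothesis that $\ph$ has a separating model). The Gaifman property, which might look like the delicate part, is handled abstractly by the transfer Proposition~\ref{prop:mustartomuseparating} and requires no separate estimate here.
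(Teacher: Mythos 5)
Your proposal is correct and is essentially the paper's own proof: both enumerate the sentences having a separating model, place positive point masses (your $2^{-k}$ versus the paper's $\frac{1}{i(i+1)}$) on one chosen separating model per sentence, read off Cournot from the mass lower bound, and obtain Gaifman via Proposition~\ref{prop:mustartomuseparating}. The only cosmetic difference is that you define the discrete measure directly on $\widehat\B$, while the paper defines $\mu^*$ on $\B$ and then restricts to $\widehat\B$ using $\mu^*(\I\setminus\widehat\I)=0$; these are interchangeable by the correspondence the paper notes before Proposition~\ref{prop:mustartomuseparating}.
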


\begin{proof}
Consider an enumeration $\chi_1, \chi_2, ... $ of the countable
set of sentences which have a separating model.
Choose a separating
interpretation $I_i$ in $\model(\chi_i)$ and assign the mass
$m_i=\frac{1}{i(i+1)}$ to $I_i$, for $i=1,2,...$ .

Define $\mu^* : \B \rightarrow \SetR$ to be the discrete probability
defined by the masses assigned to this
countable set of interpretations. That is, for a Borel set
$B\in\B$, $\mu^*(B)=\sum_{i:I_i\in B}{1\over i(i+1)}$ is
the sum of the masses of the subset of separating interpretations
in $\{I_i\}_{i=1}^\infty$ that are members of $B$. It is
possible that the same interpretation is chosen for
more than one $\model(\chi_i)$; in this case, the masses
corresponding to each choice of that interpretation
are added together.
$\mu^*$ is a probability, since it is a countable sum
of point masses, and $\mu^*(\I)=\sum_{i=1}^\infty{1\over
i(i+1)}=1$.
Since, for all $i$, $\mu^*(\model(\chi_i)) \geq \frac{1}{i(i+1)}>0$, $\mu^*$ is Cournot.

Now define $\mu : \S \rightarrow \SetR$ by $\mu(\ph) = \mu^*(\model(\ph))$, for $\ph \in \S$.
By Proposition~\ref{prop:mustartomu}, $\mu$ is a probability on sentences.
Also, by Proposition~\ref{prop:CournotMSeqM}, $\mu$ is Cournot.
Finally, note that, if $\I$ is the set of interpretations and $\widehat\I$ the set of separating
interpretations, then $\mu^*(\I \setminus \widehat\I) = 0$.
Consequently, the restriction of $\mu^*$ to $\widehat\B$ is a probability on $\widehat\B$ and
$\mu(\ph) = \mu^*(\sepmodel(\ph))$, for $\ph \in \S$.
Thus, by Proposition~\ref{prop:mustartomuseparating}, $\mu$ is Gaifman.
\qed
\end{proof}

Note that the support of the discrete probability $\mu^*$
constructed in Theorem~\ref{thm:CandG} is a dense subset
of $\widehat\I$, since there is a point from the support of
the probability in each set in a basis for its topology.
Every class of separating models that can be characterized by a
finite number of axioms can also be characterized by a single
sentence, hence is assigned a non-zero probability.

\begin{proposition}[strongly Cournot probability]\label{thm:SC}
If the alphabet is countable, there exists a probability on sentences that is strongly Cournot.
\end{proposition}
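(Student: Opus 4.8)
The plan is to mirror the construction in Theorem~\ref{thm:CandG}, but now enumerate the \emph{satisfiable} sentences rather than only those with a separating model, and place point masses on ordinary (possibly non-separating) interpretations. Concretely, let $\chi_1,\chi_2,\ldots$ be an enumeration of the countable set of satisfiable sentences. For each $i$, since $\chi_i$ is satisfiable it has a model $I_i\in\model(\chi_i)$; choose one such $I_i$ and assign mass $m_i=\frac{1}{i(i+1)}$. Define $\mu^*:\B\to\SetR$ to be the discrete probability with these masses, so that $\mu^*(B)=\sum_{i:I_i\in B}\frac{1}{i(i+1)}$ for each Borel set $B\in\B$. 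As in Theorem~\ref{thm:CandG}, a discrete probability is automatically a genuine (countably additive) probability on $\B$, and $\mu^*(\I)=\sum_{i=1}^\infty\frac{1}{i(i+1)}=1$.

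Next I would verify that $\mu^*$ is strongly Cournot in the sense of Definition~\ref{def:stronglyCournotmustar}. For any satisfiable sentence $\ph$, it appears as $\chi_i$ for some $i$ in the enumeration, and since $I_i\in\model(\chi_i)=\model(\ph)$ we get $\mu^*(\model(\ph))\geq m_i=\frac{1}{i(i+1)}>0$. Hence $\mu^*$ is strongly Cournot. Then define $\mu:\S\to\SetR$ by $\mu(\ph)=\mu^*(\model(\ph))$ for each $\ph\in\S$. By Proposition~\ref{prop:mustartomu} (which needs only finite additivity), $\mu$ is a probability on sentences, and by Proposition~\ref{prop:stronglyCournotMSeqM} the strong Cournot property transfers from $\mu^*$ to $\mu$. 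This yields a strongly Cournot probability on sentences, establishing the statement.

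I expect no serious obstacle here, since this is a direct and slightly simpler variant of the preceding existence theorem; the only point requiring a little care is the same subtlety noted in Theorem~\ref{thm:CandG}, namely that a single interpretation may be selected for several $\model(\chi_i)$, in which case the corresponding masses are simply summed when evaluating $\mu^*$. The essential structural difference from Theorem~\ref{thm:CandG} is that here we do \emph{not} attempt to make $\mu$ Gaifman: there is no requirement that the chosen $I_i$ be separating, and indeed Example~\ref{ex:StronglyCournotNotGaifman} shows that the strong Cournot and Gaifman conditions are generally incompatible, so one should not expect both to hold simultaneously for this $\mu$.
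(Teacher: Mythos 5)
Your construction is exactly the paper's own proof: enumerate the satisfiable sentences, place mass $\frac{1}{i(i+1)}$ on a chosen model $I_i \in \model(\chi_i)$ for each $i$, and transfer the resulting discrete probability $\mu^*$ to sentences via Propositions~\ref{prop:mustartomu} and~\ref{prop:stronglyCournotMSeqM}. The argument is correct as written, including the observation that repeated choices of the same interpretation simply have their masses summed.
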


\begin{proof}
Consider an enumeration $\chi_1, \chi_2, ... $ of the countable
set of sentences which have a model.
Choose an interpretation $I_i$ in $\model(\chi_i)$ and assign the mass
$\frac{1}{i(i+1)}$ to $I_i$, for $i=1,2,...$ .

Define $\mu^* : \B \rightarrow \SetR$ to be the discrete probability
defined by $\mu^*(B)=\sum_{i:I_i \in B}{1\over i(i+1)}$ for $B \in \B$.
$\mu^*$ is a probability, since it is a countable sum
of point masses, and $\mu^*(\I)=\sum_{i=1}^\infty{1\over
i(i+1)}=1$.
Since, for all $i$, $\mu^*(\model(\chi_i)) \geq \frac{1}{i(i+1)}>0$, $\mu^*$ is strongly Cournot.

Now define $\mu : \S \rightarrow \SetR$ by $\mu(\ph) = \mu^*(\model(\ph))$, for $\ph \in \S$.
By Proposition~\ref{prop:mustartomu}, $\mu$ is a probability on sentences.
Also, by Proposition~\ref{prop:stronglyCournotMSeqM}, $\mu$ is strongly Cournot.
\qed
\end{proof}

Now we give some illustrative examples concerning the various
classes of probabilities that have been introduced.

\begin{example}[a probability which is not Gaifman]\label{ex:NotGaifman}
Choose an alphabet for which there exists a non-separating
interpretation. Construct $\mu^*$ by putting unit mass on some
non-separating interpretation. The probability on sentences
corresponding to $\mu^*$ is not Gaifman by
Corollary~\ref{cor:mueqmuseparating}.

Here is such an alphabet and interpretation.
Let there be no non-logical constants in the alphabet.
Let the interpretation $I$ be the standard model defined as follows.
The domain $\D_\imath = \{d\}$.
Each $\D_{\alpha \rightarrow \beta}$ consists of {\em all} functions from $\D_{\alpha}$ to
$\D_{\beta}$.
Note that $d$ is not the denotation of any closed term of type $\imath$.
Now consider $\lambda x. \top$ and $\lambda x. \bot$, each having type $\imath \rightarrow \tB$.
Clearly $\V(\lambda x. \top, I) \neq \V(\lambda x. \bot, I)$.
However, there does not exist a closed term $t$ of type $\imath$ such that
$\V((\lambda x. \top \; t), I) \neq \V((\lambda x. \bot \; t), I)$.
Hence $I$ is not a separating interpretation.
\eoe\end{example}

Theorem~\ref{thm:CandG} shows that, for any countable alphabet, there is always a probability which is Cournot and Gaifman.
The next example shows that it is not guaranteed that there is a probability which is strongly Cournot and Gaifman,
because these two concepts may conflict on non-separating interpretations.

\begin{example}[a probability which is strongly Cournot but not Gaifman]\label{ex:StronglyCournotNotGaifman}
Choose an alphabet for which there exists a non-separating interpretation.
Construct $\mu^*$ by forming an enumeration $\ph_1, \ph_2, \ldots $ of all satisfiable sentences,
and putting mass $\frac{1}{2}$ on some non-separating interpretation
and for each $i$ mass $\frac{1}{(i+1)(i+2)}$ on an interpretation in $\model(\ph_i)$.
The probability on sentences corresponding to $\mu^*$ is strongly Cournot, but not Gaifman.
\eoe\end{example}

\begin{example}[a probability which is Gaifman but not Cournot]\label{ex:GaifmanNotCournot}
Choose an alphabet for which there exist two disjoint sentences each having a separating model.
Construct $\mu^*$ by putting unit mass on a separating model of one of the sentences.
The probability on sentences corresponding to $\mu^*$ is Gaifman but not Cournot.

Here is such alphabet and pair of sentences.
Let $d$ be any element, $\D_\imath = \{d\}$, and, for definiteness, each domain $\D_{\alpha \rightarrow \beta}$
the set of all functions from $\D_\alpha$ to $\D_\beta$.
Each of the domains $\D_\alpha$ is finite.
Let there be a non-logical constant $a$ of type $\imath$ such that $V(a) = d$.
The domain $\D_{\imath \rightarrow\tB}$ consists of two functions, one that maps $d$ to $\mathsf{T}$
and is the denotation of $\lambda x. \top$, and one that maps $d$ to $\mathsf{F}$
and is the denotation of $\lambda x. \bot$.
For each element of each of the domains $\D_{\alpha \rightarrow \beta}$ (other than $\D_{\imath \rightarrow\tB}$)
introduce a non-logical constant of a suitable type into the alphabet in such a way that the denotation of the constant
is the corresponding function.
Note that every element of every domain is the denotation of a closed term.
Now introduce a non-logical constant $p$ of type $\imath \rightarrow\tB$.
For the interpretation $I_1$, take everything defined so far and give $p$ the denotation $d \mapsto \mathsf{T}$.
Then $I_1$ is a separating model of the sentence $(p \; a)$.
On the other hand, for the interpretation $I_2$ take everything defined so far except give $p$ the denotation $d \mapsto \mathsf{F}$.
Then $I_2$ is a separating model of the sentence $\neg (p \; a)$.
Finally, note that $(p \; a)$ and $\neg (p \; a)$ are disjoint.

Example~\ref{ex:SINat} below provides another such alphabet and sentence,
but with infinite domain $\D_\imath=\{0,1,2,...\}$:
There, $\forall x.(B~x)$ and
$\neg\forall x.(B~x)$ each have a separating model, say
$\widehat I$ and $\widehat I'$. Hence we can set
$\mu^*(\widehat I')=1$, which implies $\mu(\forall x.(B~x))=0$
and so $\mu$ cannot confirm $\forall x.(B~x)$. Note that $\mu$
is Gaifman by Corollary~\ref{cor:mueqmuseparating} but not
Cournot.
\eoe\end{example}

\begin{example}[a probability which is Cournot but not strongly Cournot]\label{ex:CournotNotStronglyCournot}
Choose an alphabet for which there is a sentence having a non-empty set of models all of which are non-separating.
Construct $\mu^*$ by forming an enumeration $\ph_1, \ph_2, \ldots $ of all sentences that have a separating model
and putting mass $\frac{1}{i(i+1)}$ on a separating interpretation in $\model(\ph_i)$, for each $i$.
The probability on sentences corresponding to $\mu^*$ is Cournot but not strongly Cournot.

Here is such an alphabet and sentence.
Let the alphabet contain the non-logical constants $a$ of type $\imath$ and $p$ of type $\imath \rightarrow\tB$.
Consider the sentence $\ph \equiv \exists x.(\neg (p \; x) \land (p \; a))$, which has a model.
Let $I$ be any model for $\ph$.
Then for $I$ the domain $\D_\imath$ must have at least two elements,
one of which is the denotation of $a$ and where none of the others is the denotation of a closed term of type $\imath$.
Clearly $\V(p, I) \neq \V(\lambda x. \top, I)$.
However, there does not exist a closed term $t$ of type $\imath$ such that
$\V((p \; t), I) \neq \V((\lambda x. \top \; t), I)$.
Hence $I$ is not a separating interpretation.
\eoe\end{example}

\begin{example}[standard interpretation of $\Nat$]\label{ex:SINat}
This continues Example~\ref{ex:Nat}. As non-logical constants
in our theory we consider $\tfn{0}:\Nat$ and $S:\Nat\to\Nat$,
and abbreviate $\tfn{n}\equiv
S^n(\tfn{0})=(S\;(S\;(S \cdots(S\;\tfn{0}))))$.
The standard interpretation $I$ is defined as follows:
The domain $\D_\Nat=\{0,1,2,...\}$,
and each domain $\D_{\alpha \rightarrow \beta}$
is the set of all functions from $\D_\alpha$ to $\D_\beta$.
We interpret $\V(\tfn{n},I)=n$ and
$V(S):\D_\Nat\to\D_\Nat$ is the successor function mapping $n$ to $n+1$.
This interpretation satisfies the Peano axioms
$\forall x.(S~x)\neq\tfn{0}$ and $\forall x.\forall y.((S~x)=(S~y))\to (x=y)$
and $\forall p.(((p~\tfn{0})\wedge\forall x.((p~x)\to(p~(S~x))))\to\forall x.(p~x))$.
We can add to our logic any number of constants of type
$\Nat\to\tB$. Let $\cal J$ be the set of interpretations
obtained by augmenting $I$ with any valuation of these new
constants. Every interpretation in $\cal J$ (still) satisfies
the Peano axioms. Here and in later examples we only add one
such predicate $B:\Nat\to\tB$, used for induction.
For any probability $\mu^*$ that concentrates on $\cal J$,
i.e.\ $\mu^*({\cal J})=1$, $\mu(\forall x.(\ph~x)) = \lim_{n
\rightarrow \infty}
\mu((\ph~\tfn{0})\wedge...\wedge(\ph~\tfn{n}))$ holds for every
closed term $\ph$ of type $\Nat\to\tB$, and in particular for
$B$.
\eoe\end{example}

\begin{example}[non-standard interpretation of $\Nat$]\label{ex:NSINat}
Consider Example~\ref{ex:SINat} and modify the interpretation $I$ to $I'$ as follows:
Expand $\D_\Nat$ to $\D_\Nat=\{0,1,2,...\}\cup\{...,-\tilde 2,-\tilde 1,\tilde 0,\tilde 1,\tilde 2,...\}$
and $V(S)$ mapping $\tilde n\mapsto\widetilde{\smash{n\!+\!1}}$ in addition to $n\mapsto n+1$.
We call $\tilde n\in\{...,-\tilde 2,-\tilde 1,\tilde 0,\tilde 1,\tilde 2,...\}$,
non-standard numbers.
%
As before, augment $I'$ by an interpretation of $B$. Here we
only consider valuations $V(B)$ that are true everywhere,
except on a single non-standard number, say $\tilde c$. This leads to
a non-separating interpretation $I'$, since $\exists x.\neg(B~x)$ is valid in $I'$
but there is no closed term $t$ for which $\neg(B~t)$ is. Note that every
closed term of type $\Nat$ has some standard number $n$ as denotation.
%
For a point probability $\mu^*$ that concentrates on $I'$ we
therefore have $\mu(\forall x.(B~x))=0$ but
$\mu((B~t))=1$ for all closed terms $t$ of type $\Nat$.
Hence $\mu$ is not Gaifman and cannot confirm $\forall
x.(B~x)$. Note that $I'$ even satisfies the ``Peano'' axioms if
either $\forall p$ is replaced by ``for all closed terms $p$ of type
$\Nat\to\tB$'' or a suitable subset of
$\{\mathsf{T},\mathsf{F}\}^{\D_\Nat}$ is chosen for $\D_{\Nat\to\tB}$.
(this is due to the absence of $+$ and $\times$).
\eoe\end{example}

\begin{example}[the description operator $\iota$]\label{ex:iota}
We can use the previous Example~\ref{ex:NSINat} to illustrate
the complications a description operator $\iota$ causes. Let
constant $\iota_{(\Nat\rightarrow\tB)\rightarrow\Nat}$ denote a
function that selects the unique member of a singleton set
($(\iota~(\lambda x.(y=x)))=y$). Since $\V((\iota~\neg
B),I')=\tilde c$, $(\iota~\neg B)=\tfn{n}$ is not valid in $I'$
for any standard number, and $\mu(B~(\iota~\neg B))=0$. Indeed,
$\iota$ makes accessible all non-standard numbers via
$\widetilde{\smash{c\!+\!k}}=S^k(\iota~\neg B)$ and
$\widetilde{\smash{c\!-\!k}}=(\iota~\lambda x.(\neg B~S^k(x)))$.
Hence $I'$ is now separating for type $\Nat$ and all
non-standard numbers must be included in the enumeration of
terms in the Gaifman condition, even if we only care about the
standard interpretation. We do not know how to avoid this
problem, e.g.\ adding additional axioms that constrain $\iota$.
On the other hand, $\iota$ can easily be eliminated from the
logic (the basic idea is that formulas like $(p~(\iota~B))$ can
be replaced by something like
$(\exists!x.(B~x)\wedge(p~x))\vee(\neg\exists!x.(B~x)\wedge(p~\tfn{0}))$).
\eoe\end{example}

At least asymptotically, the Cournot and Gaifman
probabilities constructed in the proof of
Theorem~\ref{thm:CandG} are good priors for sentences,
since they are non-dogmatic \cite{gaifman-snir}. We will use them
in Sections~\ref{sec:entropy} and \ref{sec:Extension}, called
$\prior$ there, to construct minimally more informative
distributions given some background knowledge like
non-logical axioms.

After having seen various examples of (non)Cournot and
(non)Gaifman probabilities, we now give a general
characterization of Gaifman and Cournot probabilities.

\begin{definition}[rigid mixture representation]\label{def:mixture}
Let $\chi_1,\chi_2,...$ be an enumeration of all sentences that have a separating model.
We say that a probability $\mu:\S\to\SetR$ on sentences has a
mixture representation {\em iff}
$\mu(\ph)=\sum_{i=1}^\infty m_i\mu_i(\ph)$ for some $\{m_i>0\}$
and $\sum_i m_i=1$ and probabilities $\mu_i$ satisfying
$\mu_i(\chi_i)=1$ (hence $\mu_i(\neg\chi_i)=0$).
\end{definition}

\begin{theorem}[probability characterization - Gaifman and Cournot]\label{thm:charGandC}\hfill\\
Let $\mu$ be a probability on sentences. Then\vspace{-0.5ex}
\beqn
  \displaystyle{\text{$\mu$ is Cournot}\atop\text{(and Gaifman)}}
  \quad\Leftrightarrow\quad
  \displaystyle{\text{$\mu$ has a rigid mixture representation}\atop
  \text{(and all $\mu_i$ in Definition~\ref{def:mixture} are Gaifman)}}
\eeqn
\end{theorem}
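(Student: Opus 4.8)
The plan is to prove the two biconditionals in parallel, exploiting the machinery connecting probabilities on sentences to probabilities on (separating) interpretations. For the direction $(\Leftarrow)$, suppose $\mu(\ph)=\sum_i m_i\mu_i(\ph)$ is a rigid mixture representation. To show $\mu$ is Cournot, let $\ph$ have a separating model. Since $\chi_1,\chi_2,\ldots$ enumerates \emph{all} sentences with a separating model, some $\chi_j$ is (logically equivalent to, or implied by) a sentence asserting that very model, so that $\mu_j(\ph)>0$; more directly, I would argue that for any such $\ph$ there is an index $j$ with $\chi_j\to\ph$ valid and $\mu_j(\chi_j)=1$, whence $\mu_j(\ph)=1$ by Proposition~\ref{prop:PPS}.4, giving $\mu(\ph)\geq m_j\mu_j(\ph)=m_j>0$. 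This requires pinning down which $\chi_j$ to pick; the cleanest choice is to take a separating model $\widehat I$ of $\ph$ and select the $\chi_j$ that pins down $\widehat I$ finely enough, or to re-index so that the separating model witnessing $\ph$'s satisfiability appears. For the ``and Gaifman'' refinement, each $\mu_i$ is assumed Gaifman, and since the Gaifman condition (in any of the equivalent forms of Proposition~\ref{prop:GCA}, expressed through $\inf$/$\lim$ of $\mu$ of finite conjunctions) is preserved under the countable convex combination $\sum_i m_i\mu_i$ — because $\lim$ and $\sum$ interchange for monotone bounded sequences by dominated/monotone convergence — the mixture $\mu$ is Gaifman as well.

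For the forward direction $(\Rightarrow)$, suppose $\mu$ is Cournot. The strategy is to pass to the associated probability $\mu^*$ on interpretations via Proposition~\ref{prop:mutomustar}, and in the Gaifman case to the probability $\widehat\mu^*$ on separating interpretations via Proposition~\ref{prop:mutomustarseparating}. The task is then to decompose $\mu$ as a countable mixture with all weights strictly positive and with each component $\mu_i$ supported on $\model(\chi_i)$. I would construct the $\mu_i$ by conditioning: roughly, $\mu_i(\cdot):=\mu(\cdot\,|\,\chi_i)$ whenever $\mu(\chi_i)>0$, which is a probability by Proposition~\ref{prop:PPS}.9 and satisfies $\mu_i(\chi_i)=1$. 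The weights $m_i$ would be obtained from a disjointification of the cover $\{\model(\chi_i)\}$: since every separating model satisfies some $\chi_i$ and $\mu$ is Cournot so that $\mu(\chi_i)>0$ for all $i$, one partitions $\widehat\I$ into the pieces ``first $\chi_i$ satisfied'' and sets $m_i:=\mu^*(\{\,\widehat I:\ i=\min\{k:\widehat I\models\chi_k\}\,\})$, discarding the null complement. Positivity of each $m_i$ needs care — a naive disjointification can give $m_i=0$ for some $i$ — so I would instead argue that by relabelling one may assume each retained block has positive mass and absorb the rest, or appeal to the density of the separating-model cover to guarantee every equivalence class of models is charged.

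The main obstacle I anticipate is exactly this construction of a \emph{rigid} representation — i.e.\ securing $m_i>0$ for \emph{every} $i$ simultaneously with $\mu_i(\chi_i)=1$. The enumeration $\chi_1,\chi_2,\ldots$ of separating-model sentences is fixed in advance by Definition~\ref{def:mixture}, but a given $\mu$ may assign, say, conditional-zero weight to some $\chi_i$ after disjointification, so one cannot simply read off the blocks. The honest route is likely to mirror the explicit construction in Theorem~\ref{thm:CandG}: realize that Cournot guarantees $\mu(\chi_i)>0$ for each $i$, then define $\mu_i(\cdot):=\mu(\cdot\,|\,\chi_i)$ and choose positive weights $m_i$ summing to $1$ so that the mixture reproduces $\mu$. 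Matching $\sum_i m_i\mu(\cdot\,|\,\chi_i)$ to $\mu(\cdot)$ exactly is the delicate algebraic step, and may require that the $\mu_i$ be defined via an orthogonalized (disjointified) cover on the interpretation side, transported back to sentences through $\mu(\ph)=\mu^*(\model(\ph))$, using $\mu^*(\I\setminus\widehat\I)=0$ (Corollary~\ref{cor:mueqmuseparating}) in the Gaifman case to ensure the mixture lives on separating interpretations and each component inherits the Gaifman property through Proposition~\ref{prop:mustartomuseparating}.
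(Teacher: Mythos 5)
Your $(\Leftarrow)$ direction is sound but over-engineered: since $\chi_1,\chi_2,\ldots$ enumerates \emph{all} sentences having a separating model, a sentence $\ph$ with a separating model is literally some $\chi_i$ in the enumeration, so $\mu(\ph)=\mu(\chi_i)\geq m_i\mu_i(\chi_i)=m_i>0$ immediately; there is no need to hunt for a $\chi_j$ with $\chi_j\to\ph$ valid or to encode a particular model. Your argument that a countable convex combination of Gaifman $\mu_i$ is Gaifman (interchanging the monotone limits of Proposition~\ref{prop:GCA} with the sum) is correct and is exactly the content of the paper's one-line remark.

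The genuine gap is in $(\Rightarrow)$, and it sits precisely where you flag it. The ``first $\chi_i$ satisfied'' disjointification cannot yield a \emph{rigid} representation: the enumeration comprises all sentences with a separating model and is therefore massively redundant --- e.g.\ any valid sentence such as $\top$ occurs as some $\chi_k$ (as soon as a separating interpretation exists at all), every separating interpretation satisfies it, so every block with index $>k$ is empty and $m_i=0$ there. ``Relabelling'' is not available, because Definition~\ref{def:mixture} fixes the enumeration and demands $m_i>0$ and $\mu_i(\chi_i)=1$ for \emph{every} $i$; and your detour through $\mu^*$ on $\widehat\I$ presupposes that $\mu^*$ charges only separating interpretations, which you have only in the Gaifman case, not for plain Cournot. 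Your fallback (``choose positive weights $m_i$ so that $\sum_i m_i\mu(\cdot\,|\,\chi_i)$ reproduces $\mu$'') names the crux but not the mechanism --- and in fact no weights make that particular ansatz work in general, since for $\mu(\chi_i)<1$ each $\mu(\cdot\,|\,\chi_i)$ distorts $\mu$. The paper's missing idea is a total-probability \emph{pairing} trick: split the indices into ${\cal T}=\{i:\mu(\chi_i)=1\}$ and the rest, pair each remaining $\chi_i$ with its negation $\chi_{c(i)}=\neg\chi_i$ elsewhere in the enumeration, and use the exact two-term decomposition $\mu=\mu(\chi_i)\,\mu(\cdot\,|\,\chi_i)+\mu(\neg\chi_i)\,\mu(\cdot\,|\,\neg\chi_i)$, where Cournot gives $\mu(\chi_i)>0$ and $i\notin{\cal T}$ gives $\mu(\neg\chi_i)>0$; for $i\in{\cal T}$ one sets $\mu_i:=\mu\equiv\mu(\cdot\,|\,\chi_i)$. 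Averaging an arbitrary positive convex combination of these pair decompositions with the ${\cal T}$-part reproduces $\mu$ exactly with all $m_i>0$ and $\mu_i(\chi_i)=1$, entirely on the sentence side; each $\mu(\cdot\,|\,\chi_i)$ then inherits the Gaifman property from $\mu$ directly, with no measure-theoretic transfer needed. So your components $\mu(\cdot\,|\,\chi_i)$ are the right ones, but without the negation-pairing your construction does not close.
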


This result eases the construction of Cournot $\mu$, in that it
reduces the problem of finding a single $\mu$ that simultaneously
satisfies the infinitely many conditions $\mu(\chi_i)>0$ $\forall\chi_i$ to the
problem of finding infinitely many probabilities $\mu_i$ with
each only satisfying one constraint $\mu_i(\chi_i)>0$.

For instance, as in the proof of Theorem~\ref{thm:CandG}, for
any $I_i\in\sepmodel(\chi_i)$,
$\mu_i(\ph):=\ind{I_i\in\sepmodel(\ph)}$ satisfies
$\mu_i(\chi_i)=1$. This also shows that some Cournot (and
Gaifman) $\mu$ can be built purely from deterministic measures
$\mu_i\in\{0,1\}$, i.e.\ sets of models.
%
Corollary~\ref{cor:GandC} below illustrates more generally how
Theorem~\ref{thm:charGandC} can help.

\begin{proof} With the notation of Definition~\ref{def:mixture} we have:\\
({\it Cournot$\Leftarrow$}) Assume $\ph$ has a separating model.\\
Then $\ph=\chi_i$ for some $i$, and hence $\mu(\ph)=\mu(\chi_i)\geq m_i\mu_i(\chi_i)>0$.\\
({\it\&Gaifman$\Leftarrow$}) A linear combination $\mu$ of Gaifman $\mu_i$ is itself Gaifman.\\
({\it Cournot$\Rightarrow$})
Consider $\SetN$-partition \\
${\cal T}:=\{i\in\SetN:\mu(\chi_i)=1\}$,\\
${\cal E}:=\{i\not\in{\cal T}: \chi_i$ starts with an even (incl.\ zero) number of negations $\neg\}$,\\
${\cal O}:=\{i\not\in{\cal T}: \chi_i$ starts with an odd number of negations $\neg\}$.\\
and let $c:{\cal E}\to{\cal O}$ biject $\chi_{c(i)}=\neg\chi_i$.
Let $\ph$ be an arbitrary sentence.
\beqn
  \mbox{For } i\in{\cal E}:~~~\mu(\ph) \;=\;
    \underbrace{\mu(\ph|\chi_i)}_{=:\mu_i(\ph)}
    \underbrace{\mu(\chi_i)}_{=:p_i>0} +
    \underbrace{\mu(\ph|\neg\chi_i)}_{=:\mu_{c(i)}(\ph)}
    \underbrace{\mu(\neg\chi_i)}_{=1-p_i>0}
\eeqn
Let $\sum_{i\in\cal E}r_i=1$ and $r_i>0$ and $m_i=\fr12r_i p_i>0$ and $m_{c(i)}=\fr12 r_i(1-p_i)>0$ for $i\in\cal E$.
Then
\beqn
  \mu(\ph) \;=\; \sum_{i\in\cal E}r_i\mu(\ph)
     \;=\; \sum_{i\in\cal E} r_i[p_i\mu_i(\ph)+(1-p_i)\mu_{c(i)}(\ph)]
     \;=\; \sum_{i\in{\cal E}\dot\cup{\cal O}} 2m_i\mu_i(\ph)
\eeqn
For $i\in\cal T$ define $\mu_i(\ph):=\mu(\ph)\equiv\mu(\ph|\chi_i)$ and $\sum_{i\in\cal T}m_i=\fr12$ with $m_i>0$.
Then $\mu(\ph)=\sum_{i\in\cal T}2m_i\mu_i(\ph)$.
Adding both representations gives
\beqn
   \mu \;=\; \fr12[\mu+\mu]
    \;=\; \fr12[\sum_{i\in{\cal E}\dot\cup{\cal O}}2m_i\mu_i(\ph)+\sum_{i\in\cal T}2m_i\mu_i(\ph)]
    \;=\; \sum_{i=1}^\infty m_i\mu_i
\eeqn
with $\sum_{i=1}^\infty m_i=1$, $m_i>0$, $\mu_i(\chi_i)=1$ as needed.\\
({\it\&Gaifman$\Rightarrow$})
$\mu$ Gaifman implies $\mu_i=\mu(\cdot|\chi_i)$ Gaifman.
\qed
\end{proof}

The next theorem is a complete characterization of general
and (strongly) Cournot or Gaifman probabilities on sentences.
It is based on a tree construction: Consider a sequence of
(some or all) sentences $\ph_1, \ph_2, \ph_3, ...$,
arranged in a finite or infinite complete binary tree with all
left (right) children at depth $n$ labeled by $\neg\ph_n$
($\ph_n$) as depicted below. Furthermore, each node stores
the $\mu$-probability of the conjunction $\psi_{n,S}$ of
sentences along the edges from the root to this node.

\begin{center}
\unitlength=3ex
\begin{picture}(14,6)(0,0)
\thicklines
\put(8,5){\makebox(0,0)[bc]{$\psi_{0,\emptyset} \equiv \top$}}
\put(8,5){\circle*{0.2}}
\put(8,5){\line(-2,-1){4.0}}
\put(6,4){\makebox(0,0)[br]{$\neg\ph_1$}}
\put(8,5){\line(2,-1){4.0}}
\put(10,4){\makebox(0,0)[bl]{$\ph_1$}}

\put(4,3){\makebox(0,0)[br]{$\psi_{1,\emptyset}$}}
\put(12,3){\makebox(0,0)[bl]{$\psi_{1,\{1\}}$}}
\multiput(0,0)(8,0){2}{
\put(4,3){\circle*{0.2}}
\put(4,3){\line(-1,-1){2.0}}
\put(4,3){\line(1,-1){2.0}}
\put(3,2){\makebox(0,0)[br]{$\neg\ph_2$}}
\put(5,2){\makebox(0,0)[bl]{$\ph_2$}}
}
\put(2,1){\makebox(0,0)[br]{$\psi_{2,\emptyset}$}}
\put(6,1){\makebox(0,0)[bl]{$\psi_{2,\{2\}}$}}
\put(10,1){\makebox(0,0)[br]{$\psi_{2,\{1\}}$}}
\put(14,1){\makebox(0,0)[bl]{$\psi_{2,\{1,2\}}$}}

\thinlines
\multiput(0,0)(4,0){4}{
\put(2,1){\circle*{0.2}}
\put(2,1){\line(-1,-1){1.0}}
\put(2,1){\line(1,-1){1.0}}
}
\end{picture}
\end{center}

\begin{proposition}[$\psi_S\ph$-tree]\label{prop:psiphi} 
For $i=1, ..., n$, let $\ph_i$ be a sentence. For each
$S\subseteq\{1\!:\!n\}\equiv\{1,...,n\}$, define the sentence
$\psi_{n,S}$ by
\beqn
  \psi_{n,S} \equiv (\bigwedge_{i \in S} \ph_i) \wedge (\bigwedge_{j \in \{1:n\} \setminus S} \neg \ph_j).
\eeqn
Then the following hold.
\begin{enumerate}
\item The $\psi_{n,S}$'s are pairwise disjoint.
\item $\bigvee_{S\subseteq\{1:n\}} \psi_{n,S}$ is valid.
\item For each $i=1, ..., n$,
      $\ph_i$ is logically equivalent to $\displaystyle\smash{\bigvee_{S\subseteq\{1:n\}: i \in S} \psi_{n,S}}$.
\end{enumerate}
\end{proposition}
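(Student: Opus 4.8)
The plan is to treat the $\psi_{n,S}$ as the complete conjunctions (``minterms'') built from $\ph_1,\dots,\ph_n$, and to lean throughout on one unifying observation. Since each $\ph_j$ is a sentence, its denotation $\V(\ph_j,I)\in\{\mathsf{T},\mathsf{F}\}$ is a fixed truth value in every interpretation $I$, independent of any variable assignment; here the only semantic facts I need are that the truth-value domain is two-valued (Definition~\ref{def:frame}) and the usual truth tables for $\neg,\wedge,\vee$. To each $I$ I associate the index set $S_I:=\{j\in\{1:n\}:\V(\ph_j,I)=\mathsf{T}\}$. By construction $\psi_{n,S_I}$ is valid in $I$: every conjunct $\ph_i$ with $i\in S_I$ is true in $I$, and every conjunct $\neg\ph_j$ with $j\notin S_I$ is true in $I$ because $\ph_j$ is then false. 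This one fact, together with its uniqueness, drives all three parts.

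For Part 1 I would argue directly from the syntactic shape of the minterms. Given $S\ne S'$, choose an index $k$ in their symmetric difference, say $k\in S\setminus S'$; then $\ph_k$ is a conjunct of $\psi_{n,S}$ while $\neg\ph_k$ is a conjunct of $\psi_{n,S'}$, so $\psi_{n,S}\wedge\psi_{n,S'}$ entails $\ph_k\wedge\neg\ph_k$ and is therefore unsatisfiable; equivalently $\neg(\psi_{n,S}\wedge\psi_{n,S'})$ is valid, which is precisely pairwise disjointness in the sense of Definition~\ref{def:disjoint}. For Part 2 I would take an arbitrary interpretation $I$, form $S_I$ as above, and observe that $\psi_{n,S_I}$ is one of the disjuncts of $\bigvee_{S\subseteq\{1:n\}}\psi_{n,S}$ and is valid in $I$; hence the whole disjunction is valid in $I$, and as $I$ was arbitrary it is valid.

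For Part 3 I would combine the first two parts. Together they show that in every interpretation $I$ exactly one minterm is true, namely $\psi_{n,S_I}$ (at least one by Part 2, at most one by Part 1). Fixing $i$, the disjunction $\bigvee_{S:\,i\in S}\psi_{n,S}$ is then true in $I$ iff its unique true disjunct $\psi_{n,S_I}$ is among those listed, i.e.\ iff $i\in S_I$, i.e.\ iff $\V(\ph_i,I)=\mathsf{T}$. Thus $\ph_i$ and $\bigvee_{S:\,i\in S}\psi_{n,S}$ share their denotation in every interpretation, which is exactly what it means for $\ph_i=\bigvee_{S:\,i\in S}\psi_{n,S}$ to be valid, i.e.\ for them to be logically equivalent. (If one prefers the two implications separately: $i\in S$ makes $\ph_i$ a conjunct of $\psi_{n,S}$, yielding $\bigvee_{S:\,i\in S}\psi_{n,S}\to\ph_i$; and $\ph_i$ true in $I$ forces $i\in S_I$, yielding the converse.)

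There is no real obstacle here: the content is elementary propositional reasoning, and only two bookkeeping points deserve attention. First, I would fix the convention that the empty conjunction is $\top$, so that the extreme cases $\psi_{n,\emptyset}=\bigwedge_{j}\neg\ph_j$ and $\psi_{n,\{1:n\}}=\bigwedge_i\ph_i$ are well defined. Second, I would read ``logically equivalent'' in Part 3 as ``$\ph_i=\bigvee_{S:\,i\in S}\psi_{n,S}$ is valid'', matching the paper's use elsewhere of validity of an equality of type $\tB$ for logical equivalence; everything then reduces to the definitions of validity and satisfiability (Definition~\ref{def:satifiable valid}).
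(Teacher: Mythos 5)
Your proof is correct, and it fills in exactly the standard minterm/truth-table argument that the paper itself leaves implicit: the paper's entire proof of this proposition is the single word ``Straightforward.'' Your semantic bookkeeping via $S_I$, the syntactic disjointness argument, and the reading of ``logically equivalent'' in Part 3 as validity of the equality $\ph_i=\bigvee_{S:\,i\in S}\psi_{n,S}$ (which matches the paper's usage, e.g.\ for $\forall x.\ph$ and $\lambda x.\ph=\lambda x.\top$) are all accurate, so there is nothing to correct.
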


\begin{proof}
Straightforward.
\qed
\end{proof}

The following is our main characterization theorem. It
states necessary and sufficient conditions on the labels
$\a_{n,S}:=\mu(\psi_{n,S})$, for general $\mu$, as well as
(strongly) Cournot $\mu$, and sufficient conditions for Gaifman
$\mu$. We do not yet have a complete tree characterization of
Gaifman probabilities, which is a major open problem. The
characterization can easily be converted to a procedure that
assigns probabilities to one sentence after the other, but it
is not an algorithm, since satisfiability is not decidable.

\begin{theorem}[tree characterization of general/Cournot/Gaifman probabilities]\label{thm:char}
Let the alphabet be countable and $\ph_1, \ph_2, \ph_3, ...$ an enumeration of all sentences.
For each $n \geq 1$ and each $S\subseteq\{1\!:\!n\}$, define the sentence $\psi_{n,S}$ by
\beqn
  \psi_{n,S} \equiv (\bigwedge_{i \in S} \ph_i) \wedge (\bigwedge_{j \in \{1:n\} \setminus S} \neg \ph_j).
\eeqn
\begin{enumerate}
\item Let $\mu$ be a probability on sentences.
      Then, for each $n \geq 1$,
      \beq
         \mu(\ph_n) = \sum_{S\subseteq\{1:n\}: n \in S} \mu(\psi_{n,S}). \label{psinS}
      \eeq
      Furthermore, $\mu$ is Cournot (resp., strongly Cournot) iff, for each $n \geq 1$ and $S\subseteq\{1\!:\!n\}$,
      $\psi_{n,S}$ has a separating model (resp., is satisfiable) implies $\mu(\psi_{n,S}) > 0$.
\item For each $n \geq 1$ and $S\subseteq\{1\!:\!n\}$, let $\a_{n,S} \in \SetR$ satisfy the following conditions.
      \begin{enumerate}
      \item $\a_{n,S} \geq 0$.
      \item If $\psi_{n,S}$ is unsatisfiable, then $\a_{n,S} = 0$.
      \item $\a_{n,S} = \a_{n+1,S} + \a_{n+1, S \cup \{n+1\}}$.
      \item $\sum_{S\subseteq\{1:n\}} \a_{n,S} = 1$.
      \end{enumerate}
      Then there exists a probability $\mu$ on sentences such that, for each $n \geq 1$ and each $S\subseteq\{1\!:\!n\}$,
      \beqn
        \mu(\psi_{n,S}) = \a_{n,S}.
      \eeqn
\item Suppose that, in addition to the conditions in Part 2, the following condition also holds:
      for each $n \geq 1$ and $S\subseteq\{1\!:\!n\}$, $\psi_{n,S}$ has a separating model (resp., is satisfiable) implies $\a_{n,S} > 0$.
      Then $\mu$ is Cournot (resp., strongly Cournot).
\item Suppose that, the conditions of Part 2 hold.
      Strengthen 2b by demanding that if $\psi_{n,S}$ has no separating model, then $\a_{n,S}=0$.
      Further, assume that enumeration $\ph_1,\ph_2,...$ is such that
      if $\ph_{n+1}=[r=s]$ for terms $r$ and $s$ having the same function type, then
      $\ph_{n+2}=\bigvee_{S\subseteq\{1:n\}}\psi_{n,S}\wedge\ph\{x/t_S\}$,
      where $\ph:=[(r\;x)=(s\;x)]$ and
      $t_S$ is such that $\psi_{n,S}\wedge\neg\ph\{x/t_S\}$ has a separating
      model (if no such $t_S$ exists, choose $t_S$ arbitrarily or drop this contribution from $\bigvee$).
      For $\ph_{n+1}=[r=s]$ also set $\a_{n+2,S}=\a_{n+1,S}$.
      Then $\mu$ is Gaifman.
\item For every probability $\mu$, $\a_{n,S}:=\mu(\psi_{n,S})$ satisfies 2(a)-(d).
\end{enumerate}
\end{theorem}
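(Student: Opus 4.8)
The plan is to route every part through the cell partition furnished by Proposition~\ref{prop:psiphi}: for each fixed $n$ the sets $\model(\psi_{n,S})$, $S\subseteq\{1:n\}$, are pairwise disjoint and cover $\I$, and by Proposition~\ref{prop:psiphi}.3 every $\model(\ph_m)$ equals $\bigcup_{S:\,m\in S}\model(\psi_{n,S})$ for each $n\ge m$. Parts~1 and~5 are then immediate. For \req{psinS}, $\ph_n\equiv\bigvee_{S:\,n\in S}\psi_{n,S}$ with disjoint disjuncts, so Parts~5 and~6 of Proposition~\ref{prop:PPS} give the stated sum. For the Cournot half of Part~1 the forward direction is Definition~\ref{def:Cournot} applied to the sentence $\psi_{n,S}$; conversely, if $\ph=\ph_m$ has a separating model $I$, then $I$ lies in the unique cell $\psi_{m,S}$ with $m\in S$, that cell has $I$ as a separating model, so $\mu(\ph_m)\ge\mu(\psi_{m,S})>0$ by the cell hypothesis and \req{psinS}; the strongly Cournot case is identical with ``satisfiable'' in place of ``separating model''. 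Part~5 reads (a)--(d) straight off Proposition~\ref{prop:PPS}: non-negativity gives (a), Proposition~\ref{prop:PPS}.3 gives (b), the disjoint split $\psi_{n,S}\equiv\psi_{n+1,S}\vee\psi_{n+1,S\cup\{n+1\}}$ with Proposition~\ref{prop:PPS}.6 gives (c), and $\sum_S\mu(\psi_{n,S})=\mu(\bigvee_S\psi_{n,S})=1$ gives (d).

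For Part~2 I would construct a finitely additive probability $\mu^*$ on the algebra $\B_\S$ and invoke Proposition~\ref{prop:mustartomu}. Since each member of $\B_\S$ is a finite union of level-$n$ cells once $n$ is large enough, set $\mu^*(\bigcup_{S\in\mathcal C}\model(\psi_{n,S}))=\sum_{S\in\mathcal C}\a_{n,S}$; condition (c) makes this independent of $n$ (a parent's value equals the sum of its two children), and (d) gives $\mu^*(\I)=1$. The only delicate point is well-definedness when $\model(\ph_m)=\model(\ph_{m'})$ for different sentences: since distinct nonempty cells are disjoint and (b) annihilates the empty ones, the two sentences are unions of exactly the same nonempty cells, so their $\a$-sums coincide. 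Finite additivity is then automatic, as disjoint unions of cells index disjoint $S$-families whose overlaps are empty (zero-mass) cells. Proposition~\ref{prop:mustartomu} turns $\mu^*$ into a probability $\mu(\ph):=\mu^*(\model(\ph))$ on sentences with $\mu(\psi_{n,S})=\a_{n,S}$. Part~3 is then immediate: the hypothesis makes $\a_{n,S}=\mu(\psi_{n,S})>0$ precisely on the cells required by the characterization in Part~1, so $\mu$ is Cournot (resp.\ strongly Cournot).

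Part~4 is the crux. Fix $r,s$ of type $\a\to\beta$; the enumeration places $[r=s]=\ph_{n+1}$ for some $n$, followed by the prescribed $\ph_{n+2}$ built from witnesses $t_S$ and $\ph:=[(r\;x)=(s\;x)]$. On a cell $\psi_{n,S}$ the sentence $\ph_{n+2}$ collapses to $(r\;t_S)=(s\;t_S)$ (cells being disjoint), so the ``$r\neq s$'' sub-cell where $(r\;t_S)=(s\;t_S)$ carries mass $\a_{n+2,S\cup\{n+2\}}$; combining condition (c), $\a_{n+1,S}=\a_{n+2,S}+\a_{n+2,S\cup\{n+2\}}$, with the imposed identity $\a_{n+2,S}=\a_{n+1,S}$ forces $\a_{n+2,S\cup\{n+2\}}=0$. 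Hence all mass in the $r\neq s$ branch sits on the witnessed-disagreement cells, giving $\mu(r\neq s)=\sum_{S\subseteq\{1:n\}}\a_{n+1,S}=\sum_S\mu(\psi_{n,S}\wedge(r\;t_S)\neq(s\;t_S))=\mu(\neg\ph_{n+2})$, where the middle step also uses that $(r\;t_S)\neq(s\;t_S)$ already entails $r\neq s$. Since $\neg\ph_{n+2}\to\bigvee_S(r\;t_S)\neq(s\;t_S)$ and the $2^n$ terms $\{t_S\}$ form a \emph{finite} set, Proposition~\ref{prop:PPS}.4 yields $\mu(r\neq s)\le\mu(\bigvee_S(r\;t_S)\neq(s\;t_S))\le\sup_{\{t_1,\dots,t_m\}}\mu(\bigvee_i(r\;t_i)\neq(s\;t_i))$; the reverse inequality always holds because $\bigvee_i(r\;t_i)\neq(s\;t_i)\to r\neq s$. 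Equality then gives the Gaifman condition in the form of Proposition~\ref{prop:Gaifman}.2, for every pair $r,s$ since the enumeration realizes every equality sentence.

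The main obstacle is decoding Part~4: seeing that the purely bookkeeping identity $\a_{n+2,S}=\a_{n+1,S}$ is exactly what annihilates the ``phantom disagreement'' mass (interpretations with $r\neq s$ but agreement at the chosen witness $t_S$), and that gathering the finitely many $t_S$ into one term set converts $\mu(r\neq s)$ into a finite-disjunction bound matching the Gaifman supremum. I would also verify the harmless boundary case flagged in the statement: if a cell $\psi_{n,S}$ admits no witness $t_S$, then by the separating property $\psi_{n,S}\wedge(r\neq s)$ has no separating model, so $\a_{n+1,S}=0$ by the strengthened~(b), and dropping that cell from $\ph_{n+2}$ affects neither the sums above nor $\mu(r\neq s)$.
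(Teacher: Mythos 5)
Your proposal is correct, and it diverges from the paper's proof in two substantive places. For Part~2, the paper builds $\mu$ directly on sentences: it defines $\mu(\ph_n)=\sum_{S\ni n}\a_{n,S}$, proves level-consistency by a double induction using 2(c), and verifies the probability axioms syntactically (the validity and disjointness cases each require locating auxiliary sentences like $\ph_k=\ph_n\vee\ph_m$ in the enumeration and killing spurious cells via 2(b)); you instead define a finitely additive set function on the algebra $\B_\S$ via cell representations and invoke Proposition~\ref{prop:mustartomu}, which outsources exactly that verification --- the same bookkeeping (2(c) for refinement, disjointness plus 2(b) for well-definedness across representations), but packaged more cleanly. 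The bigger difference is Part~4. The paper proves the Gaifman property in the limit form of Proposition~\ref{prop:GCA}: it conditions on the cells $\psi_{n,S}$, shows $\mu(\neg\ph\{x/t_S\}\,|\,\neg\forall x.\ph\wedge\psi_{n,S})=\a_{n+2,S}/\a_{n+1,S}=1$, and concludes $\mu(\bigwedge_{i=1}^m\ph\{x/t_i\})=\mu(\forall x.\ph)$ once $m$ is large enough that every witness $t_S$ (or an equivalent term) appears among $t_1,\dots,t_m$; this needs care with zero-probability cells and with ``sufficiently large $m$''. You instead prove the supremum form of Proposition~\ref{prop:Gaifman}.2 directly, observing that the imposed identity $\a_{n+2,S}=\a_{n+1,S}$ forces $\a_{n+2,S\cup\{n+2\}}=0$, so that all $r\neq s$ mass sits on witnessed-disagreement cells, whence $\mu(r\neq s)\leq\mu(\bigvee_S (r\;t_S)\neq(s\;t_S))$ with the supremum actually \emph{attained} at the finite set $\{t_S\}$; this avoids term enumerations, limits, and conditional probabilities altogether, and your boundary analysis (no witness $\Rightarrow$ $\psi_{n,S}\wedge r\neq s$ has no separating model $\Rightarrow$ $\a_{n+1,S}=0$ by the strengthened 2(b)) correctly patches the one place where the intermediate identity $\mu(r\neq s)=\mu(\neg\ph_{n+2})$ would otherwise fail in the ``drop the disjunct'' variant. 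One small thing the paper does that you omit: it verifies that the prescription $\a_{n+2,S}=\a_{n+1,S}$ is \emph{consistent} with the strengthened 2(b) when a witness exists (the witnessed-disagreement cell itself has a separating model, so it may legitimately carry the mass), i.e., that Part~4's hypotheses are non-vacuous and genuinely instantiate Part~2; this is not needed for the implication as stated, but is worth adding if you want your construction to be usable as in Corollary~\ref{cor:GandC}.
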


Items 1,2,3,and 5 are rather natural. The somewhat ugly item 4
requires explanation: First, the assumption on the enumeration
$\ph_i$ can easily be satisfied by inserting appropriate
$\ph_{n+2}$ at the required $n$. The intuition behind the
construction for $n=0$ is that if $I$ is a model of
$\neg\ph_1$, i.e.\ of $\exists x.\neg\ph$, Gaifman requires a
witness $t$, which exists by the extensionality axiom. We can
guarantee such a witness by putting $\ph_2=\ph\{x/t\}$ and
following exclusively the $\neg\ph_2$ branch by setting
$\alpha_{2,\{2\}}=0$. For general $n$, the witnesses $t$ and hence
$\ph_{n+2}=\ph\{x/t\}$ may depend on $S$; this would lead to a
branch-dependent enumeration of sentences. There is nothing
wrong with this, and is probably even the preferred solution.
In order to keep things simple, we kept the enumeration branch
independent by or-ing $\ph_{n+2}$ over all $2^n$ branches,
which makes it formally independent of the branch $S$.

\begin{proof}
{\bf 1.} The first part follows immediately from Parts 1 and 3
of Proposition~\ref{prop:psiphi} and
Proposition~\ref{prop:PPS}.6.

The second part for strongly Cournot follows immediately from the definition of a
strongly Cournot probability, Proposition~\ref{prop:psiphi}.3,
and Proposition~\ref{prop:PPS}.4:
That strongly Cournot implies $\mu(\psi_{n,S})>0$ for satisfiable
$\psi_{n,S}$ is trivial. For the other direction, $\ph_n$
is satisfiable implies that there exists an $S \ni n$ for which
$\psi_{n,S}$ is satisfiable.
Hence $\mu(\ph_n)\geq\mu(\psi_{n,S})>0$.
Thus $\mu(\ph)>0$, for all satisfiable $\ph$, and so $\mu$ is strongly Cournot.
The proof for the Cournot case is similar.

{\bf 2.} First define $\mu_0:\{\psi_{n,S}\}_{n\geq 1,S\subseteq\{1\!:\!n\}}\to\SetR$ by
\beqn
  \mu_0(\psi_{n,S}) \;=\; \a_{n,S},
\eeqn
for each $n \geq 1$ and $S\subseteq\{1\!:\!n\}$.
We prove by induction that, for $m \geq n$,
\beqn
  \mu_0(\psi_{n,S}) \;=\; \sum_{R: S \subseteq R \subseteq S \cup \{n+1, ..., m \}} \a_{m, R}.
\eeqn
The result is obvious when $m = n$.
Suppose now it holds for $m$.
Then
\begin{align*}
  & \; \mu_0(\psi_{n,S}) \\
= & \; \textstyle\sum_{R: S \subseteq R \subseteq S \cup \{n+1, ..., m \}} \a_{m, R} & \text{[Induction hypothesis]} \\
= & \; \textstyle\sum_{R: S \subseteq R \subseteq S \cup \{n+1, ..., m \}} (\a_{m+1, R} + \a_{m+1, R \cup \{m+1\}}) \\
= & \; \textstyle\sum_{R: S \subseteq R \subseteq S \cup \{n+1, ..., m+1 \}} \a_{m+1, R}.
\end{align*}
This completes the induction argument.

Now define $\mu:\S\to\SetR$ by
\beqn
  \mu(\ph_n) \;=\; \sum_{S\subseteq\{1:n\}: n \in S} \a_{n,S}.
\eeqn
for each $n \geq 1$.
We prove by induction that, for $m \geq n$,
\beqn
\mu(\ph_n) = \sum_{S\subseteq\{1:m\}: n \in S} \a_{m,S}.
\eeqn
The result is obvious when $m = n$.
Suppose now it holds for $m$.
Then
\begin{align*}
  & \; \mu(\ph_n) \\
= & \; \textstyle\sum_{S\subseteq\{1:m\}: n \in S} \a_{m,S} & \text{[Induction hypothesis]} \\
= & \; \textstyle\sum_{S\subseteq\{1:n\}: n \in S} (\a_{m+1,S} + \a_{m+1, S \cup \{m+1\}}) \\
= & \; \textstyle\sum_{S\subseteq\{1:m+1\}: n \in S} \a_{m+1,S}.
\end{align*}
This completes the induction argument.

We show that $\mu$ extends $\mu_0$.
Suppose that  $\psi_{n,S}$, for some $n \geq 1$ and $S\subseteq\{1\!:\!n\}$, is $\ph_k$, for some $k \geq 1$.
Let $m=\max \{k, n\}$ and
$\mathcal{A} = \{R: k\in R\subseteq\{1, ..., m\}\}$ %
and $\B = \{ R:S \subseteq R \subseteq S \cup \{n+1, ..., m \} \}$. %
Then $\bigvee_{R \in \mathcal{A}} \psi_{m, R}$ is logically
equivalent to $\ph_k$ which is equal to $\psi_{n,S}$ which
is logically equivalent to $\bigvee_{R \in \B}
\psi_{m, R}$. Also the $\psi_{m, R}$ are pairwise disjoint.
Hence $\psi_{m,R}$ is unsatisfiable (and so $\a_{m,R} = 0$)
for each $R \in (\mathcal{A} \setminus \B) \cup
(\B \setminus \mathcal{A})$. This implies
\beqn
  \mu(\psi_{n,S}) \;=\; \mu(\ph_k)
  \;=\; \sum_{R\in\mathcal{A}} \a_{m,R}
  \;=\; \sum_{R\in\B} \a_{m,R}
  \;=\; \mu_0(\psi_{n,S}).
\eeqn
In summary, $\mu:\S\to\SetR$ is well-defined and satisfies
\beqn
  \mu(\ph_n) = \sum_{S\subseteq\{1:n\}: n \in S} \mu(\psi_{n,S}),
\eeqn
for each $n \geq 1$.

We show that $\mu$ is a probability.
Clearly, $\mu$ is non-negative.
Now suppose that, for some $n \geq 1$, $\ph_n$ is valid.

Then, for $n \not\in S$, $\psi_{n,S}$ is a conjunction that
contains $\neg\ph_n$, hence is not satisfiable and
therefore $\a_{n,S}=0$ for $n\not\in S$. This implies
\beqn
  \mu(\ph_n) \;=\; \sum_{S\subseteq\{1:n\}: n \in S} \a_{n,S}
  \;=\; \sum_{S\subseteq\{1:n\}} \a_{n,S} \;=\; 1.
\eeqn

Finally, suppose that $\neg (\ph_n \wedge \ph_m)$ is valid.
There exists $k \geq 1$ such $\ph_k$ is $\ph_n \vee \ph_m$.
Choose any $p$ greater than $n$, $m$ and $k$.
Consider $\mathcal{A}:=\{S\subseteq\{1\!:\!p\}: k \in S\}$ and
$\B:=\{S\subseteq\{1\!:\!p\}: n \in S\}$ and
$\mathcal{C}:=\{S\subseteq\{1\!:\!p\}: m \in S\}$.

$\a_{p,S}=0$ for $S\in\B\cap\mathcal{C}$, since
$\psi_{n,S}$ is a conjunction containing $\ph_n\wedge\ph_m$.

$\a_{p,S}=0$ for $\mathcal{A}\setminus(\B\cup\mathcal{C})$, since
$\psi_{n,S}$ is a conjunction containing $\ph_k\wedge\neg\ph_n\wedge\neg\ph_m$.

$\a_{p,S}=0$ for $(\B\cup\mathcal{C})\setminus\mathcal{A}$, since
$\psi_{n,S}$ is a conjunction containing $\neg\ph_k\wedge\ph_n\wedge\ph_m$.

\noindent Together this implies
\beqn
  \mu(\ph_n \vee \ph_m)
  \;=\; \mu(\ph_k)
  \;=\; \sum_{S\in\mathcal{A}} \a_{p,S}
  \;=\; \sum_{S\in\B} \a_{p,S} + \sum_{S\in\mathcal{C}} \a_{p,S}
  \;=\; \mu(\ph_n) + \mu(\ph_m).
\eeqn
Thus $\mu$ is a probability on sentences.

{\bf 3.} For the strongly Cournot case, suppose that, for some $n \geq 1$, $\ph_n$ is satisfiable.
   Thus $\psi_{n, S'}$ is satisfiable for some $S'\subseteq\{1:n\}$ for which $n\in S'$.
   By the condition, $\mu(\psi_{n, S'}) > 0$.
   Hence $\mu(\ph_n) = \sum_{S\subseteq\{1:n\}: n \in S} \mu(\psi_{n,S}) > 0$.
   The Cournot case is similar.

{\bf 4.}
$\exists x.\ph$ has separating model (s.m.) {\em iff} there exists $t$
such that $\ph\{x/t\}$ has s.m.
The $\Rightarrow$ direction follows from Definition~\ref{def:separating} with
$r=\lambda x.\ph$ and $s=\lambda x.\top$.
The $\Leftarrow$ direction follows from $\ph\{x/t\}\to \exists x.\ph$.

We need to show the Gaifman condition in Definition~\ref{def:Gaifman}.
This is equivalent to:
For all terms $r$ and $s$ having the same function type,
$\mu(r=s) \;=\; \lim_{m\to\infty}\mu(\bigwedge_{i=1}^m ((r\;t_i)=(s\;t_i)))$.
Fix $r$ and $s$, define $\ph:=[(r\;x)=(s\;x)]$.
Using the extensionality axiom we hence have to show
\beqn
  \mu(\forall x.\ph) \;=\; \lim_{m\to\infty}\mu(\bigwedge_{i=1}^m \ph\{x/t_i\})
\eeqn
Consider $n$ such that $\ph_{n+1}=[r=s]\equiv\forall x.\ph$.
By assumption,
$\ph_{n+2}=\bigvee_{S\subseteq\{1:n\}}\psi_{n,S}\wedge\ph\{x/t_S\}$.

We first prove that setting $\a_{n+2,S}=\a_{n+1,S}$ is allowed:\\
Assume $\psi_{n,S}\wedge\neg\ph_{n+1}\equiv\exists x.(\psi_{n,S}\wedge\neg\ph)$ has s.m.\\
$\Rightarrow$ There exists $t_S$  s.th.\ $\psi_{n,S}\wedge\neg\ph\{x/t_S\}$ has s.m.\\
$\Rightarrow$ $\psi_{n,S}\wedge\neg\ph_{n+1}\wedge\neg\ph\{x/t_S\}$ has s.m., since $\neg\ph\{x/t_S\}$ implies $\neg\ph_{n+1}$.\\
The last expression is logically equivalent to $\psi_{n,S}\wedge\neg\ph_{n+1}\wedge\neg\ph_{n+2}$,
since for $\psi_{n,S}=\bot$, both expressions are false, and for $\psi_{n,S}=\top$,
$\psi_{n,S'}=\bot$ for all $S'\neq S$, hence $\bigvee_S$ in $\ph_{n+2}$ collapses to $\ph\{x/t_S\}$.
Since $\psi_{n,S}\wedge\neg\ph_{n+1}\wedge\neg\ph_{n+2}$ has s.m.,
$\a_{n+2,S}=\a_{n+1,S}$ is allowed.
Assume now that $\psi_{n,S}\wedge\neg\ph_{n+1}$ has no s.m.
Then $\psi_{n,S}\wedge\neg\ph_{n+1}\wedge\neg\ph_{n+2}$ has neither,
and $\a_{n+2,S}=\a_{n+1,S}=0$.
Hence (4.) is a consistent instantiation of (2.) and generates a probability on
sentences $\mu$ with $\mu(\psi_{n',S'})=\a_{n',S'}$ for all $n'$ and $S'$.
We now prove that it is Gaifman.

For $\mu(\forall x.\ph\wedge\psi_{n,S})>0$, trivially
\beqn
  \mu(\bigwedge_{i=1}^m \ph\{x/t_i\}\,|\,\forall x.\ph\wedge\psi_{n,S}) \;=\; 1
  \;=\; \mu(\forall x.\ph\,|\,\forall x.\ph\wedge\psi_{n,S})
\eeqn
For $\mu(\neg\forall x.\ph\wedge\psi_{n,S})>0$ and sufficiently large $m$,
\beqn
  \mu(\bigwedge_{i=1}^m \ph\{x/t_i\}\,|\,\neg\forall x.\ph\wedge\psi_{n,S}) \;=\; 0
  \;=\; \mu(\forall x.\ph\,|\,\neg\forall x.\ph\wedge\psi_{n,S})
\eeqn
since
$\mu(\neg\ph\{x/t_S\}|\neg\forall x.\ph\wedge\psi_{n,S})
=\mu(\neg\ph_{n+2}|\neg\forall x.\ph\wedge\psi_{n,S})=
\a_{n+2,S}/\a_{n+1,S}=1$,
and $\bigwedge_{i=1}^m \ph\{x/t_i\}$ will eventually contradict $\neg\ph\{x/t_S\}$.

Since both displayed equalities hold for all $S\subseteq\{1\!:\!n\}$,
for sufficiently large $m$ this implies
$\mu(\bigwedge_{i=1}^m \ph\{x/t_i\}) = \mu(\forall x.\ph)$.

{\bf 5.} Straightforward.
\qed
\end{proof}

Unfortunately items 3 and 4 in Theorem~\ref{thm:char} cannot be combined.
The $\mu$ in item 4.\ is not Cournot, since e.g.\
$\neg\ph_{n+1}\wedge\ph_{n+2}$ has a separating model if there
is more than one possible witness $t_S$, but is assigned zero
probability. We can do something else though.

The following corollary boosts Gaifman $\mu$ constructed in Theorem~\ref{thm:char}.4 with
the rigid mixture representation to a Gaifman and Cournot $\mu$,
and this without having to choose interpretations $I$ as required in Theorem~\ref{thm:charGandC}.

\begin{corollary}[Gaifman and Cournot probability]\label{cor:GandC}
Let $\chi_1,\chi_2,...$ be an enumeration of all sentences that have a separating model.
For each $i$, let $\ph_1:=\chi_i,\ph_2,\ph_3,...$ be different (in the first sentence) enumerations of all sentences,
and $\mu_i$ be a corresponding Gaifman probability constructed in Theorem~\ref{thm:char}.4,
choosing $\mu_i(\chi_i)\equiv\a_{1,\{1\}}:=1$. Then by Theorem~\ref{thm:charGandC},
the rigid mixture $\mu$ of Definition~\ref{def:mixture} is Gaifman and Cournot.
\end{corollary}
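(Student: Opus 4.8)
The plan is to realize $\mu$ as a rigid mixture of the $\mu_i$ and then quote the ``$\Leftarrow$'' directions of Theorem~\ref{thm:charGandC}, so that essentially all the genuine work is delegated to the two ingredients the corollary already names. Everything else is bookkeeping about the mixture.

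First I would verify that, for each fixed $i$, Theorem~\ref{thm:char}.4 genuinely applies to the enumeration $\ph_1=\chi_i,\ph_2,\ph_3,\dots$ together with the stipulation $\a_{1,\{1\}}:=1$. Since $\chi_i$ has a separating model, the root child $\psi_{1,\{1\}}=\chi_i$ has a separating model, so putting $\a_{1,\{1\}}=1$ (and hence $\a_{1,\emptyset}=\mu_i(\neg\chi_i)=0$) is consistent with the strengthened conditions 2(a)--(d): the normalisation 2(d) holds at the root, the strengthened 2(b) is respected because all mass sits on a separating node, and the mass can be propagated downward by 2(c) while staying on separating nodes. The last point uses the elementary fact that a separating model of $\psi_{n,S}$ validates exactly one of $\ph_{n+1}$ and $\neg\ph_{n+1}$, so at least one child of every separating node is again separating. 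The forced witness rule at nodes of the form $[r=s]$ and the assignment $\a_{n+2,S}=\a_{n+1,S}$ are precisely what the proof of Theorem~\ref{thm:char}.4 certifies as admissible. Consequently the theorem delivers a Gaifman probability $\mu_i$ with $\mu_i(\chi_i)=\a_{1,\{1\}}=1$ and $\mu_i(\neg\chi_i)=0$.

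Next I would assemble the mixture. Choosing any weights $m_i>0$ with $\sum_i m_i=1$ (for concreteness $m_i=\tfrac{1}{i(i+1)}$, as in Theorem~\ref{thm:CandG}), set $\mu(\ph):=\sum_{i=1}^\infty m_i\mu_i(\ph)$. A termwise convex combination of probabilities on sentences is again a probability on sentences: if $\ph$ is valid then each $\mu_i(\ph)=1$, so $\mu(\ph)=1$, and additivity over disjoint $\ph,\psi$ passes through the sum. Hence $\mu$ is well defined and nonnegative, and by construction it has exactly the shape demanded in Definition~\ref{def:mixture}, namely $m_i>0$, $\sum_i m_i=1$, and $\mu_i(\chi_i)=1$. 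Thus $\mu$ has a rigid mixture representation and, moreover, every $\mu_i$ in it is Gaifman.

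Finally I would invoke Theorem~\ref{thm:charGandC}. Its ``Cournot$\Leftarrow$'' direction converts the rigid mixture representation into Cournotness—any $\ph$ with a separating model equals some $\chi_i$, whence $\mu(\ph)\geq m_i\mu_i(\chi_i)>0$—while its ``\&Gaifman$\Leftarrow$'' direction records that a linear combination of Gaifman measures is Gaifman. Together these yield that $\mu$ is Cournot and Gaifman. I expect the only delicate point to be the first step: confirming that fixing $\a_{1,\{1\}}=1$ never clashes with the forced assignments at the $[r=s]$ witness nodes deeper in the tree. This is not a new obstacle, however, as it is subsumed by the consistency argument already carried out inside the proof of Theorem~\ref{thm:char}.4 (``setting $\a_{n+2,S}=\a_{n+1,S}$ is allowed''), so the corollary contributes nothing beyond the mixing.
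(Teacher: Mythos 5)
Your proposal is correct and takes essentially the same route as the paper: the corollary's entire proof there is the cited combination of Theorem~\ref{thm:char}.4 (yielding Gaifman $\mu_i$ with $\mu_i(\chi_i)\equiv\a_{1,\{1\}}=1$) with the two $\Leftarrow$ directions of Theorem~\ref{thm:charGandC} applied to the rigid mixture of Definition~\ref{def:mixture}. Your added check that the stipulation $\a_{1,\{1\}}=1$ extends to a full admissible assignment --- since any separating model of $\psi_{n,S}$ validates exactly one of $\ph_{n+1},\neg\ph_{n+1}$, every node carrying mass has a child with a separating model, and the forced witness assignments are exactly those certified admissible inside the proof of Theorem~\ref{thm:char}.4 --- is a detail the paper leaves implicit, and it is sound.
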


\section{Relative Entropy of Probabilities on Sentences}\label{sec:entropy}

Assume we ``know'' the probabilities $\mu_0(\ph_i)$ of
sentences $\ph_1,...,\ph_n$. Note that
$\mu_0:\{\ph_1,...,\ph_n\}\to[0,1]$ is {\em not} a probability
on all sentences, but only a partial specification.
In the next section (Proposition~\ref{prop:ExtProb}) we derive
conditions under which $\mu_0$ can be extended to a
probability over all sentences.

However, if there are any solutions at all,
then there are many. It then makes sense to ask whether some
distributions that meet our constraints are ``better'', in some
sense, than others.

A natural idea is to choose $\mu$ in such a way as to be ``as
uninformative as possible'', consistent with our constraints as
defined by $\mu_0$.  Unfortunately it is not possible to define
``as uninformative as possible'' in absolute terms, but we can
define it relative to a prior distribution, $\prior$. We will
formalise this using the concept of relative entropy, or
Kullback-Leibler divergence. We now show that this selection of
$\mu$ has exactly the form of a piecewise re-scaled $\prior$,
and show how to find the optimal rescaling constants, the
various $\a_S$ introduced in the next section, under this criterion.
Natural choices for the prior $\prior$ are the non-dogmatic
probabilities constructed in Theorem~\ref{thm:CandG}.

We start by introducing the relevant concepts on general
measure spaces before constructing the new distribution $\mu$
that meets our constraints while being uninformative relative
to our prior, $\prior$.

From \cite[p.21]{ihara93}\cite{csiszar75}:
Let $\mu^*$ and $\prior^*$ be probabilities on a measurable
space $(\mathbf{X}, \B(\mathbf{X}))$.  We say that
$\mu^*$ is {\em absolutely continuous} with respect to
$\prior^*$, $\mu^* \prec \prior^*$, if $\mu^*(A) = 0$ for every $A \in
\B(\mathbf{X})$ such that $\prior^*(A) = 0$.
By the Radon-Nikodym theorem \cite[Theorem 5.5.4]{dudley}, if $\mu^*$ is absolutely
continuous with respect to $\prior^*$, then there exists a
$\prior^*$-integrable function $\psi(x)$ such that
\beqn
  \mu^*(A) \;=\; \int_A \psi(x) d\prior^*(x), \;\;\; \forall A \in \B(\mathbf{X}).
\eeqn
The function $\psi(x)$ is called the Radon-Nikodym derivative
and is written in the form
\beqn
  \psi(x) \;=\; \frac{d\mu^*}{d\prior^*}(x).
\eeqn

For probabilities $\mu^*$ and $\prior^*$ on $(\mathbf{X},
\B(\mathbf{X}))$, the {\em relative entropy}
$\KL(\mu^*||\prior^*)$ of $\mu^*$ with respect to $\prior^*$ is defined by
\beqn
  \KL(\mu^*||\prior^*) \;:=\; \begin{cases} \int_{\mathbf{X}}
  \log \frac{d\mu^*}{d\prior^*}(x) d\mu^*(x) & \text{if} \; \mu^* \prec \prior^*, \\
  \infty & \text{otherwise.} \end{cases}
\eeqn
The measure $\prior^*$ is referred to as the {\em reference measure}.

By reference to this general definition for relative entropy,
we can define the relative entropy for probabilities on sentences
in two ways:

\begin{definition}[relative entropy on sentences]\label{def:relentropy}
For a countable alphabet and for probabilities $\mu$ and $\prior$
defined on some set of sentences $\S$, the {\em relative
entropy} $\KL(\mu||\prior)$ of $\mu$ with respect to $\prior$ is defined
by
\beqn
  \KL(\mu||\prior) \;:=\; \lim_{n\to\infty} \sum_{S\subseteq\{1:n\}} \mu(\psi_S)\log{\mu(\psi_S)\over\prior(\psi_S)}
  \;=\; \KL(\mu^*||\prior^*)
\eeqn
where $0\log{0\over\prior}:=0$ and $\mu\log{\mu\over 0}:=\infty$ if $\mu>0$.
The last equality holds true if
$\mu^*$ and $\prior^*$ are the probabilities in
Proposition~\ref{prop:mutomustar} on interpretations that
correspond to $\mu$ and $\prior$ respectively.
\end{definition}

The first definition is more general and useful and
conceptually easier. Since the relative entropy increases with
refinement, the limit always exists and is independent of the
order of enumeration of sentences. The second definition is the
``obvious'' choice for a definition, but is more restrictive
and based on much heavier machinery. Equivalence follows from
exchanging limits with integrals, which requires some justification.

\begin{proof} (sketch)
{\bf (i)} {\em Order independence:} Let $\Phi$ be a finite set of
sentences, and $\KL_{\Phi}(\mu||\prior)$ be the relative
entropy of the sentences in $\Phi$. Then by the
monotonicity of the relative entropy under refinement,
$\Phi\subseteq\Phi'$ implies $\KL_{\Phi}\leq\KL_{\Phi'}$.
It is now routine to establish independence of the limit on the
order of enumeration of the sentences.

{\bf (ii)} {\em Equivalence of both definitions:}
For $\mu^*\not\prec\prior^*$ one can show that the limit diverges,
which implies equality. We will only prove the interesting
case when $\mu^*\prec\prior^*$.
Let $\ph_1,\ph_2,...$ be an enumeration of all sentences. For
an interpretation $I\in\cal I$, let $S$ be such that
$\psi_{n,S}$ is valid in $I$, i.e.\ $S\equiv S(n,I) :=
\{i\in\{1,...,n\}:I\in\model(\ph_i)\}$.
Using $\mu(\psi_{n,S})=\mu^*(\model(\psi_{n,S}))=\int_{\model(\psi_{n,S})} d\mu^*$, let
\begin{align*}
  \KL_n(\mu||\prior) \; &:= \; \sum_{S\subseteq\{1:n\}} \mu(\psi_{n,S})\log{\mu(\psi_{n,S})\over\prior(\psi_{n,S})}
  \;=\; \int_{\cal I} \log{\mu(\psi_{n,S})\over\prior(\psi_{n,S})}\, d\mu^* \;\geq\; 0 \\
  \KL^*(\mu||\prior) \; &:= \; \int_{\cal I} \log{d\mu^*\over d\prior^*}\, d\mu^* \;\geq\; 0
\end{align*}
%
Elementary algebra (telescoping property of $\KL$)
allows us to split $\KL^*$ into a finitary and a tail part
\beqn
  \KL^*(\mu||\prior) \;=\; \KL_n(\mu||\prior) \;+
  \sum_{S\subseteq\{1:n\}} \mu(\psi_{n,S})\,\KL^*(\mu(\cdot|\psi_{n,S})\,||\,\prior(\cdot|\psi_{n,S}))
\eeqn
which shows that $\KL^*\geq\KL_n$.

For the other direction,
let ${\cal F}_n$ be the Borel $\sigma$-algebra generated by
$\{\model(\psi_{n,S}):S\subseteq\{1\!:\!n\}\}$.
%
Then ${\cal F}_1\subseteq{\cal F}_2\subseteq...$ is a
filtration with ${\cal F}_\infty=\B$ the Borel $\sigma$-algebra
generated by $\bigcup_{n=1}^\infty{\cal F}_n$. Define
\beqn
  Z_n(I)
  \;:=\; {\mu^*(\model(\psi_{n,S})) \over \prior^*(\model(\psi_{n,S}))}
  \;=\; {\mu(\psi_{n,S}) \over \prior(\psi_{n,S})}
\eeqn
$Z_n:{\cal I}\to\SetR$ is an ${\cal F}_n$ measurable function,
well-defined with $\prior$-probability 1 (w.$\prior$.p.1).
$Z_1,Z_2,...$ forms a $\prior$-martingale sequence, since
\begin{align*}
  \E_\prior[Z_{n+1}|{\cal F}_n] \; =& \; {\mu(\psi_{n+1,S}) \over \prior(\psi_{n+1,S})}\,\prior(\psi_{n+1,S}|\psi_{n,S})
  + {\mu(\psi_{n+1,S\cup\{n+1\}}) \over \prior(\psi_{n+1,S\cup\{n+1\}})}\,\prior(\psi_{n+1,S\cup\{n+1\}}|\psi_{n,S}) \\
  =& \; {\mu(\psi_{n,S}) \over \prior(\psi_{n,S})} \;=\; Z_n
\end{align*}
Since $\mu^*\prec\prior^*$, by \cite[VII\textsection 8]{Doob:53} the sequence
converges to the Radon-Nikodym derivative
\beqn
  \lim_{n\to\infty} Z_n \;=\; {d\mu^*\over d\prior^*} \qquad\text{w.$\prior$.p.1}
\eeqn
%
Now consider
\beqn
  \KL_n(\mu||\prior) \;= \sum_{S\subseteq\{1:n\}} {\mu(\psi_{n,S})\over\prior(\psi_{n,S})} \log{\mu(\psi_{n,S})\over\prior(\psi_{n,S})}\,\prior(\psi_{n,S})
  \;=\; \int_{\cal I} Z_n\log Z_n\,d\prior^*
\eeqn
By Fatou's lemma applied to $1+Z_n\log Z_n$, which is
non-negative, and the existence of the pointwise limit
$Z_n$ w.$\prior$.p.1, we get
\begin{align*}
  & \liminf_{n\to\infty}\KL_n(\mu||\prior)
  \; \geq\; \int_{\cal I} \liminf_{n\to\infty} Z_n\log Z_n\,d\prior^* \\
  \;&=\; \int_{\cal I} {d\mu^*\over d\prior^*}\log{d\mu^*\over d\prior^*}d\prior^*
  \;=\; \int_{\cal I} \log{d\mu^*\over d\prior^*}d\mu^*
  \;=\; \KL^*(\mu||\prior)
\end{align*}
%
Since $\KL_n$ is monotone increasing and together with
$\KL^*\geq\KL_n$, we have
$\lim_{n\to\infty}\KL_n=\KL^*$. This shows the
equivalence of both definitions in
Definition~\ref{def:relentropy}. \qed
\end{proof}

Given some base measure $\prior^*$, we are interested in finding a
measure $\widehat\mu^*$ that minimizes $\KL(\mu^*||\prior^*)$ under some
\beq\label{eq:contraint}
  \text{constraints}\quad \int_{\mathbf{X}} f_i(x)d\widehat\mu^*(x) \;=\; a_i, \quad i=1,...,n.
\eeq
We assume that these constraints are satisfiable for some $\widehat\mu^*\prec\prior^*$.

\cite{ihara93} defines the $\KL$-projection of a
probability under some constraints as the measure that
minimises the relative entropy subject to those constraints. In
practice, the $\KL$-projection is defined by giving a
Radon-Nikodym derivative that re-scales the original
probability to meet the constraints.  This is similar
to the rescaling used in the proof of
Proposition~\ref{prop:ExtProb} below.

\cite[pp.104-5]{ihara93} proves the following:
Define functions $\theta_i(\lambda)$, $i=1,...,n$, of
$\lambda = (\lambda_1, ..., \lambda_n) \in \SetR^n$ by
\begin{align*}
  \theta_i(\lambda) = & \frac{1}{\Phi(\lambda)} \int_{\mathbf{X}} f_i(x)
  \mexp{ \sum_{j=1}^n \lambda_j f_j(x)} \; d\prior^*(x), \qquad i=1,...,n,
\\
  \text{where} \quad & \Phi(\lambda) = \int_{\mathbf{X}} \mexp{ \sum_{j=1}^n
  \lambda_j f_j(x)} \; d\prior^*(x).
\end{align*}
We denote by $\Lambda$ the set of all $\lambda$ for which the
integrals above converge, and define a set
$\mathcal{A}\subseteq(\SetR\cup\{-\infty\})^n$ by
\begin{align*}
  \mathcal{A} &= \{(\theta_1(\lambda), ..., \theta_n(\lambda)); \lambda \in \Lambda \}.
\end{align*}

Let $\mathbf{M}_1$ be the set of all probabilities on
$(\mathbf{X},\B(\mathbf{X}))$, $\prior^* \in \mathbf{M}_1$
be a fixed reference measure, and $f_i(x),\,i=1,...,n$ be
real functions defined on $\mathbf{X}$.  Assume that
$\mathbf{F} \subset \mathbf{M}_1$ is a set of the form
\begin{align*}
  \mathbf{F} & \;=\; \{\mu^* \in \mathbf{M}_1:  \int_\mathbf{X} f_i(x) \; d\mu^*(x) = a_i, \; i=1,...,n\},
\end{align*}
where $a_i$, $i=1,...,n$ are given constants such
that $(a_1,...,a_n) \in \mathcal{A}$.  Then the
$\KL$-projection $\widehat\mu^*$ on $\mathbf{F}$ is given by
\beq\label{eq:mredexp}
  \frac{d\widehat\mu^*}{d\prior^*}(x) \;=\; \frac{1}{\Phi(\lambda)}\mexp{\sum_{i=1}^n \lambda_i f_i(x)},
\eeq
where $\lambda = (\lambda_1,...,\lambda_n) \in
\Lambda$ is a vector uniquely determined by solving
\beqn
  \frac{1}{\Phi(\lambda)} \int_{\mathbf{X}} f_i(x)
  \mexp{\sum_{j=1}^{n} \lambda_j f_j(x)} \; d\prior^*(x) \;=\; a_i, \qquad i=1,...,n.
\eeqn
The corresponding minimum relative entropy is given by
\begin{align*}
  \KL(\widehat\mu^*||\prior^*) &= \sum_{i=1}^n \lambda_i a_i - \log \Phi(\lambda).
\end{align*}

We will construct, where possible, a function $\mu$ that has
minimum relative entropy with respect to $\prior$ while still
satisfying our constraints as represented by $\mu_0$ and the
$\ph_i$, $i=1,...,n$. First, we construct a
function $\mu^*$ on interpretations that will end up meeting
our constraints while minimising the relative entropy to
$\prior^*$.

Choose the $f_i=\ind{\model(\ph_i)}$ as indicator function on $\mathbf{X}=\I$,
which is 1 on models of $\ph_i$ and zero elsewhere. Set $a_i =
\mu_0(\ph_i)$, $i=1,...,n$. The constraints
\req{eq:contraint} then reduce to
\beqn
  \mu(\ph_i) \;=\; \mu^*(\model(\ph_i))
  \;=\; \int_\I\ind{\model(\ph_i)}d\mu^*
  \;=\; a_i \;=\; \mu_0(\ph_i)
\eeqn
as intended.

Equation~\req{eq:mredexp} then tells us that the scaling
function, $\frac{d\mu^*}{d\prior^*}$, between $\prior^*$ and
$\mu^*$ is piecewise constant. In particular,
$\frac{d\mu^*}{d\prior^*}$ is constant across each of the sets
$\model(\psi_S)$ related to the sentences, $\psi_S$,
constructed in Proposition~\ref{prop:psiphi}.
\begin{align*}
  & \mu^*(\model(\ph)) = \int_{\model(\ph)} \frac{d\mu^*}{d\prior^*} \; d\prior^*
  = \sum_{S\subseteq\{1:n\}} \int_{\model(\ph \wedge \psi_S)} \frac{d\mu^*}{d\prior^*} \; d\prior^*  \\
  &= \sum_{S\subseteq\{1:n\}} \int_{\model(\ph \wedge \psi_S)} \frac{1}{\Phi(\lambda)}\mexp{\sum_{i=1}^n \lambda_i f_i(x)}\, d\prior^*(x)
     & \text{[Equation \req{eq:mredexp}]}\\
  &= \sum_{S\subseteq\{1:n\}} \frac{1}{\Phi(\lambda)}\mexp{\sum_{i=1}^n \lambda_i f_i(\model(\psi_S))}\, \prior^*(\model(\ph \wedge \psi_S))
     & \text{[$f_i$ constant on $\model(\psi_S)$]}\\
  &= \frac{1}{\Phi(\lambda)} \sum_{S\subseteq\{1:n\}} \mexp{\sum_{i \in S} \lambda_i} \prior(\ph \wedge \psi_S)
     & \text{[$f_i=1$ iff $i\in S$]}\\
\end{align*}
This leads to the following definition for $\hat\mu$:

\begin{definition}[minimally more informative probability]\label{def:mmimu}
Let $\prior$ be an arbitrary probability on sentences,
and $\mu_0:\{\ph_1,...,\ph_n\}\to[0,1]$ constrain the probability $\hat\mu$ of
the sentences $\ph_1,...,\ph_n$. Let
\begin{align*}
  \hat\mu(\ph) & \;:=\; \sum_{S\subseteq\{1:n\}} w_S\, \prior(\ph \wedge \psi_S)
  & [\text{Defining equation}] \\
  w_S & \;:=\; \frac{1}{\Phi(\lambda)}\mexp{\sum_{j \in S} \lambda_j}
  & [\text{Weights}] \\
  \Phi(\lambda) & \;:=\; \sum_{S\subseteq\{1:n\}} \mexp{\sum_{j \in S} \lambda_j} \prior(\psi_S)
  & [\text{Normalizing constant}]\\
  \mu_0(\ph_i) & \;=\; \sum_{S\subseteq\{1:n\}} w_S\, \prior(\ph_i \wedge \psi_S)
  \;\equiv\; \sum_{S\ni i} w_S\prior(\psi_S)
  & \smash{\left[{\text{Consistency equations}\atop
     \text{for $\lambda_i\in\SetR\cup\{-\infty\}$}}\right]}
\end{align*}
if the expressions are well-defined and a solution exists.
Otherwise $\hat\mu$ is undefined. We call $\hat\mu$ {\em
minimally more informative} than $\prior$ given $\mu_0$ (if it
exists).
\end{definition}

For $\ph=\psi_{S'}$, only the term $S=S'$ contributes to the
defining equations, which gives the useful relation
$\hat\mu(\psi_{S'})=w_{S'}\,\prior(\psi_{S'})$. So indeed,
$w_S=\hat\mu(\psi_S)/\prior(\psi_S)$ is the local scaling factor.
Inserting this back into the defining equation, gives
\beq\label{eq:mremup}
  \hat\mu(\ph) \;=\; \sum_{S:\prior(\psi_S)>0} \hat\mu(\psi_S) \prior(\ph|\psi_S).
\eeq
This also implies that if $\prior$ is Gaifman, then
$\hat\mu(\ph)$ is Gaifman. Furthermore, $\hat\mu(\ph)>0$
whenever consistently with $\mu_0$ possible and $\xi(\ph)>0$,
i.e.\ for (strongly) Cournot $\xi$, $\hat\mu$ is as
``Cournot'' as possible.

\begin{proposition}[minimally more informative probability]\label{prop:mmimu}
If $\mu_0$ can be extended to a probability on $\S$, %
and prior $\prior(\psi_{n,S})>0$ for all satisfiable $\psi_{n,S}$, %
then $\hat\mu$ in Definition~\ref{def:mmimu} is the unique
minimum of the relative entropy w.r.t.\ $\prior$ under the constraints
$\hat\mu(\ph_i)=\mu_0(\ph_i)$, $i=1,...,n$:
\begin{align*}
  \min_{\mu:\mu(\ph_i)=\mu_0(\ph_i),i=1..n} \{\KL(\mu||\prior)\}
  & \;=\; \KL(\hat\mu||\prior) \\
  \;=\; \sum_{S\subseteq\{1:n\}}\hat\mu(\psi_S)\log{\hat\mu(\psi_S)\over\prior(\psi_S)}
  & \;=\; \sum_{i=1}^n\lambda_i\mu_0(\ph_i)-\log \Phi(\lambda)
\end{align*}
\end{proposition}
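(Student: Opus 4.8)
The plan is to lift the whole optimization from sentences to interpretations, invoke the Csisz\'ar--Ihara $\KL$-projection theorem quoted just above Definition~\ref{def:mmimu}, and then translate the optimum back. By Proposition~\ref{prop:mutomustar} (recall the alphabet is countable, as in Definition~\ref{def:relentropy}) every probability $\mu$ on $\S$ corresponds to a unique probability $\mu^*$ on $\B$ with $\mu(\ph)=\mu^*(\model(\ph))$, and this correspondence is injective; by Definition~\ref{def:relentropy} it preserves relative entropy, $\KL(\mu||\prior)=\KL(\mu^*||\prior^*)$. With the choices $f_i:=\ind{\model(\ph_i)}$ on $\mathbf X=\I$ and $a_i:=\mu_0(\ph_i)$ already made before Definition~\ref{def:mmimu}, the constraint $\mu(\ph_i)=\mu_0(\ph_i)$ is exactly $\int_\I f_i\,d\mu^*=a_i$. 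Hence $\min\{\KL(\mu||\prior):\mu(\ph_i)=\mu_0(\ph_i)\}$ is literally the problem of minimizing $\KL(\mu^*||\prior^*)$ over $\mathbf F=\{\mu^*:\int_\I f_i\,d\mu^*=a_i,\ i=1,\dots,n\}$, since any $\mu$ with $\mu^*\not\prec\prior^*$ contributes $\KL=\infty$ and so is irrelevant to the minimum.

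First I would verify that the two hypotheses supply precisely what the Ihara theorem needs. The assumption that $\mu_0$ extends to a probability on $\S$ furnishes a feasible point in $\mathbf F$, which is what places $(a_1,\dots,a_n)$ in the domain $\mathcal A$ of the projection (allowing the boundary values $\lambda_i=-\infty$ that Definition~\ref{def:mmimu} explicitly permits, i.e.\ the case where a constraint forces some atom to have probability zero). The assumption $\prior(\psi_{n,S})>0$ for every satisfiable $\psi_{n,S}$ guarantees $\mu^*\prec\prior^*$ for any feasible $\mu$: whenever $\mu(\psi_{n,S})>0$ the atom $\psi_{n,S}$ is satisfiable, so $\prior(\psi_{n,S})>0$, and the finitary sum defining $\KL$ never meets the forbidden term $\mu\log(\mu/0)$. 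The Ihara theorem then produces the unique minimizer $\widehat\mu^*$ via its Radon--Nikodym derivative \req{eq:mredexp} and evaluates the minimum as $\sum_{i=1}^n\lambda_i a_i-\log\Phi(\lambda)$; uniqueness among \emph{all} feasible $\mu$ (not merely those of exponential form) follows from strict convexity of $\KL(\cdot||\prior^*)$ on the affine set $\mathbf F$.

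Second I would identify $\widehat\mu^*$ with the $\hat\mu$ of Definition~\ref{def:mmimu}. The calculation carried out immediately before that definition already shows $\widehat\mu^*(\model(\ph))=\sum_{S\subseteq\{1:n\}} w_S\,\prior(\ph\wedge\psi_S)=\hat\mu(\ph)$, so $\hat\mu$ is exactly the probability on sentences attached to the optimal $\widehat\mu^*$ by Proposition~\ref{prop:mutomustar}, and the consistency equations of Definition~\ref{def:mmimu} are just $\int_\I f_i\,d\widehat\mu^*=a_i$ rewritten using Proposition~\ref{prop:psiphi}. This directly yields $\min_{\mu:\mu(\ph_i)=\mu_0(\ph_i),\,i=1..n}\KL(\mu||\prior)=\KL(\hat\mu||\prior)=\sum_{i=1}^n\lambda_i\mu_0(\ph_i)-\log\Phi(\lambda)$, which is two of the three claimed equalities.

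It remains to convert the integral $\KL(\hat\mu||\prior)$ into the finite sum $\sum_{S\subseteq\{1:n\}}\hat\mu(\psi_S)\log(\hat\mu(\psi_S)/\prior(\psi_S))$. Here I would use that $d\widehat\mu^*/d\prior^*$ depends only on $f_1,\dots,f_n$ and is therefore constant, equal to $w_S$, on each level-$n$ atom $\model(\psi_{n,S})$. In the martingale notation of the proof of Definition~\ref{def:relentropy}, $Z_m=w_S$ on every refinement $\model(\psi_{m,R})$ of $\model(\psi_{n,S})$ for all $m\ge n$, so $\KL_m=\int_\I Z_m\log Z_m\,d\prior^*=\sum_{S}w_S\log w_S\,\prior(\psi_{n,S})$ is already stationary in $m$; passing to the limit and substituting $w_S=\hat\mu(\psi_S)/\prior(\psi_S)$ (the relation $\hat\mu(\psi_{S'})=w_{S'}\prior(\psi_{S'})$ noted after Definition~\ref{def:mmimu}) gives the stated sum. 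The main obstacle I anticipate is the bookkeeping at the boundary $\lambda_i=-\infty$: one must confirm that extendability of $\mu_0$ genuinely lands $(a_1,\dots,a_n)$ in the (closed) region where the $\KL$-projection exists, and that the degenerate atoms with $\prior(\psi_S)=0$ or $\hat\mu(\psi_S)=0$ are absorbed by the conventions $0\log(0/\prior):=0$ and $\mu\log(\mu/0):=\infty$ without disturbing the identification of $\widehat\mu^*$ with $\hat\mu$.
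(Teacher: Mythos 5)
You take the measure-theoretic route that the paper's proof mentions in its opening sentence (``a measure-theoretic proof can be based on the second definition in Definition~\ref{def:relentropy} and Equation~\req{eq:mredexp}'') but then deliberately declines to carry out. The paper's actual proof is elementary and stays with the first definition of $\KL$: it splits $\KL(\mu||\prior)$ into the finitary term $\sum_{S}\mu(\psi_{n,S})\log\bigl(\mu(\psi_{n,S})/\prior(\psi_{n,S})\bigr)$ plus a non-negative tail of conditional relative entropies, notes that the two parameter sets are independent, annihilates the tail by setting $\mu(\cdot|\psi_{n,S})=\prior(\cdot|\psi_{n,S})$, and minimizes the finitary term as a strictly convex function on a compact $(2^n-1)$-dimensional simplex, recovering the consistency equations of Definition~\ref{def:mmimu} by Lagrange multipliers. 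Most of your lift is sound: the bijection $\mu\leftrightarrow\mu^*$ from Proposition~\ref{prop:mutomustar}, the $\KL$-preservation from Definition~\ref{def:relentropy}, the translation of the constraints, the identification $\widehat\mu^*(\model(\ph))=\hat\mu(\ph)$ via the computation preceding Definition~\ref{def:mmimu}, and the reduction of the integral to the finite sum using the piecewise constancy of $d\widehat\mu^*/d\prior^*$ on the atoms.

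The genuine gap is the pivot of your argument: the claim that extendability of $\mu_0$ ``places $(a_1,\dots,a_n)$ in the domain $\mathcal A$ of the projection.'' It does not. Since the $f_i$ are bounded, $\Lambda=\SetR^n$, and $\mathcal A=\theta(\SetR^n)$ is only the \emph{relative interior} of the convex hull of the vectors $\{(\ind{i\in S})_{i=1}^n:\prior(\psi_S)>0\}$; extendability merely puts $(a_1,\dots,a_n)$ in the closed polytope. Concretely, take $n=1$ with $\ph_1$ satisfiable but not valid and $\mu_0(\ph_1)=1$: the feasible extension $\prior(\cdot|\ph_1)$ exists, yet $\theta_1(\lambda)=e^{\lambda}\prior(\ph_1)/\bigl(e^{\lambda}\prior(\ph_1)+\prior(\neg\ph_1)\bigr)<1$ for every real $\lambda$, so $a_1=1\notin\mathcal A$ and the quoted Ihara theorem yields nothing. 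This is not fringe bookkeeping: it is precisely the ``certain knowledge'' case $\mu_0(\ph_i)=1$ on which Section~\ref{sec:Examples} relies, and it is why Definition~\ref{def:mmimu} admits infinite $\lambda_i$ at all. You flag this obstacle in your final sentence but leave it unresolved, so your proof as written covers only interior constraint vectors. To close it you need either a closure/limiting argument for the exponential family, or what the paper does: prove existence and uniqueness directly from compactness and strict convexity of the finitary term on the constraint polytope, which works equally well at boundary constraints. (A smaller slip: your statement that the hypotheses ``guarantee $\mu^*\prec\prior^*$ for any feasible $\mu$'' is false, since a feasible $\mu$ can put mass below level $n$ on a $\prior$-null set; it is harmless only because, as you note earlier, such $\mu$ have $\KL=\infty$ and cannot be minimizers.)
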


\begin{proof}
A measure-theoretic proof can be based on the second definition
in Definition~\ref{def:relentropy} and Equation~\req{eq:mredexp}.
Here we give an elementary proof based on the first definition:
First note that the sum over $S$ is well defined and finite,
since $\prior(\psi_S)=0$ implies $\psi_S$ unsatisfiable implies
$\hat\mu(\psi_S)=0$ by Proposition~\ref{prop:PPS}.3. Therefore,
wherever necessary or convenient, we interpret sums as
being restricted to those $S$ for which $\psi_S$ is
satisfiable. We have
\begin{align*}
  \KL(\mu||\prior) & \;=  \sum_{S\subseteq\{1:n\}} \mu(\psi_{n,S})\log{\mu(\psi_{n,S})\over\prior(\psi_{n,S})} \\
  & \;+ \lim_{m\to\infty} \!\! \sum_{S\subseteq\{1:n\}} \!\!\! \mu(\psi_{n,S})\nq
    \sum_{T\subseteq\{n+1:m\}} \nq\mu(\psi_{m,S\cup T}|\psi_{n,S})
    \log{\mu(\psi_{m,S\cup T}|\psi_{n,S})\over\prior(\psi_{m,S\cup T}|\psi_{n,S})}
\end{align*}
By multiplying the first term with
$1=\sum_{T\subseteq\{n+1:m\}} \mu(\psi_{m,S\cup T}|\psi_{n,S})$
and elementary algebra one can easily verify that this
expression indeed reduces to the first one in
Definition~\ref{def:relentropy}.
Now we need to minimize this w.r.t.\ to $\mu$. The first term
involves a constrained minimization over the $2^n-1$
``parameters'' $\mu(\psi_S):S\subseteq\{1\!:\!n\}$. The second
term (for fixed $m$) involves a free minimization over the
$2^n(2^{m-n}-1)$ parameters $\mu(\psi_{m,S\cup
T}|\psi_{n,S}):T\subseteq\{n\!+\!1\!:\!m\},S\subseteq\{1\!:\!n\}$.
Since the two parameter sets are independent, we can minimize
both terms separately.
Since there are no constraints for the second
minimization, and the second term is monotone increasing in
$m$, the unique solution is obviously $\mu(\psi_{m,S\cup
T}|\psi_{n,S})=\prior(\psi_{m,S\cup T}|\psi_{n,S})$.
The first term, since $\prior(\psi_{n,S})>0$ and the relative
entropy is non-negative and continuous and strictly convex and
the domain is finite-dimensional convex and compact (a $2^n-1$
dimensional probability simplex), it has a unique minimum on
the convex subspace generated by the linear constraints. With
Lagrange multipliers and differentiation one can derive the
consistency equations in Definition~\ref{def:mmimu}, which
uniquely determine the solution (this follows the same line of
reasoning as after Definition~\ref{def:relentropy}, but now in
finite sample spaces this is elementary). \qed
\end{proof}

The next section will develop necessary and sufficient
conditions under which $\mu_0$ can be extended to some $\mu$
and hence a minimally more informative $\mu$.

\section{Extension of Probabilities}\label{sec:Extension}

Maintaining consistency in large knowledge bases is a
non-trivial problem. Its probabilistic cousin studied in this
section is no easier: Given some probabilistic knowledge, does
this correspond to a coherent set of probabilistic beliefs?

More formally, suppose a finite set of sentences are given
pre-determined probabilities. An interesting, and practically
important, question is: what are necessary and sufficient
conditions for the existence of a probability on sentences that
gives precisely these probabilities on the finite set of
sentences?
The next result answers this question.

\begin{proposition}[extension of probabilities]\label{prop:ExtProb}\label{prop:mmiGmu}
Let the alphabet be countable alphabet, $\{\ph_1, ..., \ph_n\}$
be a finite set of sentences, and $\mu_0:\{\ph_1, ...,
\ph_n\}\to[0,1]$ a function. For each $S\subseteq\{1\!:\!n\}$,
let
\beqn
  \psi_S :=(\bigwedge_{i \in S} \ph_i) \wedge (\bigwedge_{j \in \{1:n\} \setminus S} \neg \ph_j).
\eeqn
Then $\mu_0$ can be extended to a (Gaifman) probability $\mu:\S\to\SetR$
{\em iff} the following set of equations for the $2^n$ variables $\a_S$, for $S\subseteq\{1\!:\!n\}$, has a solution:
\begin{gather*}
\sum_{S\subseteq\{1:n\}} \a_S = 1 \\*
\sum_{S\subseteq\{1:n\}: i \in S} \a_S = \mu_0(\ph_i),  \text{ for } i=1, ..., n \\*
\a_S \geq 0, \text{ for } S\subseteq\{1\!:\!n\} \\*
\a_S = 0 \text{ if } \psi_S \text{ has no (separating) model, for } S\subseteq\{1\!:\!n\}.
\end{gather*}
\end{proposition}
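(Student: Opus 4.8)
The plan is to prove both directions of the equivalence, running the plain-probability and Gaifman cases in parallel, since they differ only in replacing ``model'' by ``separating model''. Throughout I would lean on the partition structure from Proposition~\ref{prop:psiphi}: the $\psi_S$ for $S\subseteq\{1:n\}$ are pairwise disjoint, their disjunction is valid, and each $\ph_i$ is logically equivalent to $\bigvee_{S\ni i}\psi_S$. The forward direction will simply read a solution off a given extension, while the backward direction will synthesise an extension from a given solution by mixing point masses concentrated on suitable interpretations, so that no appeal to the heavier tree construction of Theorem~\ref{thm:char} is needed.

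For the forward direction ($\Rightarrow$), suppose $\mu$ extends $\mu_0$ and set $\a_S:=\mu(\psi_S)$. Non-negativity is immediate; $\sum_S\a_S=\mu(\bigvee_S\psi_S)=1$ by Proposition~\ref{prop:PPS}.6 together with Proposition~\ref{prop:psiphi}.1--2; and $\sum_{S\ni i}\a_S=\mu(\bigvee_{S\ni i}\psi_S)=\mu(\ph_i)=\mu_0(\ph_i)$ by Proposition~\ref{prop:psiphi}.3 and Proposition~\ref{prop:PPS}.5--6. For the last condition, if $\psi_S$ is unsatisfiable then $\a_S=\mu(\psi_S)=0$ by Proposition~\ref{prop:PPS}.3, which settles the plain case. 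In the Gaifman case, if $\psi_S$ has no separating model then $\sepmodel(\psi_S)=\emptyset$, and Proposition~\ref{prop:mutomustarseparating} supplies a $\widehat\mu^*$ on $\widehat\B$ with $\a_S=\mu(\psi_S)=\widehat\mu^*(\sepmodel(\psi_S))=\widehat\mu^*(\emptyset)=0$.

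For the backward direction ($\Leftarrow$), suppose $\{\a_S\}$ solves the system. For each $S$ with $\a_S>0$, the fourth condition (in contrapositive) guarantees that $\psi_S$ has a model (respectively, a separating model), so I would pick one, $I_S$, and let $\mu_S(\ph):=\ind{\ph\text{ is valid in }I_S}$. Each $\mu_S$ is a probability on sentences, being the point mass $\delta_{I_S}$ transported through Proposition~\ref{prop:mustartomu}; in the Gaifman case it is moreover Gaifman, since $I_S$ is separating and $\delta_{I_S}$ lives on $\widehat\B$ (Proposition~\ref{prop:mustartomuseparating}). I then set $\mu:=\sum_{S:\a_S>0}\a_S\,\mu_S$, a finite convex combination because $\sum_S\a_S=1$; this is again a probability, and again Gaifman in the Gaifman case, as a linear combination of Gaifman probabilities is Gaifman (noted in the proof of Theorem~\ref{thm:charGandC}, and valid here for the countable alphabet via the limit form of Proposition~\ref{prop:GCA}, since finite sums commute with those limits). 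Because $I_S$ is a model of $\psi_S$, the sentence $\ph_i$ is valid in $I_S$ exactly when $i\in S$, so $\mu(\ph_i)=\sum_{S:\a_S>0}\a_S\,\ind{i\in S}=\sum_{S\ni i}\a_S=\mu_0(\ph_i)$, proving that $\mu$ extends $\mu_0$. As a consistency check one also sees $\mu(\psi_{S'})=\a_{S'}$, since $\psi_{S'}$ is valid in $I_S$ iff $S=S'$.

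The routine parts are the bookkeeping with the partition identities and the verification that a convex combination of (Gaifman) probabilities is again such. The one step that genuinely uses the earlier machinery is the Gaifman-specific half of each direction, i.e.\ the interface between Gaifman probabilities and the separating interpretations: in $(\Rightarrow)$ one needs that a Gaifman probability is carried by $\widehat\I$, so a sentence with empty $\sepmodel$ receives probability zero; and in $(\Leftarrow)$ one needs that point masses on separating models are Gaifman and that Gaifmanness survives finite mixing. I expect this to be the only delicate point, with everything else reducing to elementary additivity.
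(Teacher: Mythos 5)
Your proof is correct, and the forward direction coincides with the paper's (read off $\a_S:=\mu(\psi_S)$, verify the equations via Propositions~\ref{prop:psiphi} and~\ref{prop:PPS}, and in the Gaifman case kill the cells with empty $\sepmodel$ via the $\widehat\mu^*$ of Proposition~\ref{prop:mutomustarseparating}). Your backward direction, however, is genuinely different: you mix point masses, choosing for each $S$ with $\a_S>0$ a (separating) model $I_S\models\psi_S$ and setting $\mu=\sum_{S:\a_S>0}\a_S\,\delta_{I_S}$ transported to sentences — in effect the discrete construction of Theorem~\ref{thm:CandG} localized to the cells of the partition. The paper instead fixes a single strongly Cournot prior $\prior$ (Cournot and Gaifman in the Gaifman case, from Proposition~\ref{thm:SC} resp.\ Theorem~\ref{thm:CandG}) and defines $\mu(\ph)=\sum_{S\in\Sat}\a_S\,\prior(\ph\,|\,\psi_S)=\sum_{S\in\Sat}w_S\,\prior(\ph\wedge\psi_S)$ with $w_S=\a_S/\prior(\psi_S)$, well-defined precisely because Cournotness makes $\prior(\psi_S)>0$ on the relevant cells. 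Both constructions verify the probability axioms and the extension property by the same bookkeeping, and your Gaifman bookkeeping is sound: point masses on separating interpretations are Gaifman by Proposition~\ref{prop:mustartomuseparating}, and Gaifmanness survives finite mixing since the limits in Proposition~\ref{prop:GCA} commute with finite sums (the paper uses the same fact in Theorem~\ref{thm:charGandC}). What your route buys is economy — no appeal to the existence theorems for Cournot priors, only direct model selection. What the paper's route buys is that its extension has exactly the piecewise-rescaled-prior form $\sum_S w_S\,\prior(\cdot\wedge\psi_S)$ of the minimally more informative probability (Definition~\ref{def:mmimu} and Equation~\req{eq:mremup}); this is why the proposition doubles as \textit{prop:mmiGmu} and feeds the remark following the statement and Proposition~\ref{prop:ERC}. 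Your extension, being a finite mixture of dogmatic point measures, proves the iff but would not serve that later entropy machinery, nor does it inherit non-dogmatism within each cell $\psi_S$ as the paper's conditional-prior construction does.
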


If the above conditions on $\a_S$ are met, then
Proposition~\ref{prop:mmimu} and the remark before it imply
that $\mu_0$ can in particular be extended to a probability
$\hat\mu$ that is minimally more informative than some prior
$\prior$, and $\mu$ is Gaifman if $\prior$ is.

\begin{proof}
{\boldmath{$(\Rightarrow)$}}
Suppose first that $\mu_0$ can be extended to a probability
$\mu:\S\to\SetR$.
We show that the set of equations has a solution.

Define $\a_S = \mu(\psi_S)$, for each $S\subseteq\{1\!:\!n\}$.
Since the $\psi_S$'s are pairwise disjoint, by the definition
of a probability, Proposition~\ref{prop:PPS}.6, and
Proposition~\ref{prop:psiphi}.2, $\sum_{S\subseteq\{1:n\}} \a_S
= 1$. Also $\sum_{S\subseteq\{1:n\}: i \in S} \a_S = \mu(\ph_i)
= \mu_0(\ph_i)$, by Propositions~\ref{prop:psiphi}.3
and~\ref{prop:PPS}.6. Since $\mu$ is a probability, $\a_S \geq
0$ for $S\subseteq\{1\!:\!n\}$. Finally, $\a_S=0$ if $\psi_S$
is unsatisfiable for $S\subseteq\{1\!:\!n\}$, by
Proposition~\ref{prop:PPS}.3.
(In case $\mu$ is Gaifman, we use $\widehat\mu^*$ of
Proposition~\ref{prop:mutomustar} to show that $\a_S =
\mu(\psi_S)=\widehat\mu^*(\sepmodel(\psi_S))=\widehat\mu^*(\emptyset)=0$
if $\psi_S$ has no separating model.)

{\boldmath{$(\Leftarrow)$}}
Conversely, suppose that the equations have a solution.
Let $\prior$ be a strongly Cournot probability on $\S$ (whose existence is
given by Proposition~\ref{thm:SC}).
Put
\beqn
  \Sat = \{ S\subseteq\{1\!:\!n\} \; | \; \psi_S \text{ is satisfiable} \}.
\eeqn
Define $\mu:\S\to\SetR$ by
\beq\label{eq:phiexp}
  \mu(\ph) \;:=\; \sum_{S\in\Sat} \a_S \, \prior(\ph | \psi_S)
  \;=\; \sum_{S\in\Sat} w_S \, \prior(\ph\wedge\psi_S)
\eeq
for $\ph \in \S$, where $w_S:=\a_S/\prior(\psi_S)$ for $S\in\Sat$.
The function $\mu$ is well-defined, since $\prior(\psi_S) > 0$, if $\psi_S$ is satisfiable.
We claim that $\mu$ is a probability on sentences.
Clearly, $\mu$ is non-negative.

Suppose that $\ph$ is valid.
Then
\begin{align*}
  & \; \mu(\ph) \\
= & \; \textstyle\sum_{S\in\Sat} \a_S \, \prior(\ph | \psi_S) \\
= & \; \textstyle\sum_{S\in\Sat} \a_S
                  & \text{[$\ph$ is valid and $\xi(\cdot|\psi_S)$ is a probability]} \\
= & \; 1. & \text{[$\a_S=0$ for $S\not\in\Sat$]}
\end{align*}

Suppose that $\neg(\ph \wedge \psi)$ is valid. Then
\begin{align*}
  & \; \mu(\ph \vee \psi)  \\
= & \; \textstyle\sum_{S\in\Sat} w_S \, \prior((\ph \vee \psi) \wedge \psi_S) & \text{[Equation~\req{eq:phiexp}]} \\
= & \; \textstyle\sum_{S\in\Sat} w_S \, \prior((\ph \wedge \psi_S) \vee (\psi \wedge \psi_S)) \\
= & \; \textstyle\sum_{S\in\Sat} w_S \, [\prior(\ph \wedge \psi_S) + \prior(\psi \wedge \psi_S)]
  & \text{[$\neg((\ph \wedge \psi_S) \wedge (\psi \wedge \psi_S))$ valid]} \\
= & \; \mu(\ph) + \mu(\psi).
\end{align*}

Thus $\mu$ is a probability on sentences.

Finally, $\mu$ extends $\mu_0$:
\beqn
  \mu(\ph_i)
\;=\; \sum_{S\in\Sat} \a_S \, \prior(\ph_i | \psi_S)
\;=\; \sum_{S\in\Sat: i \in S} \a_S
\;=\; \sum_{S\subseteq\{1:n\}: i \in S} \a_S
\;=\; \mu_0(\ph_i)
\eeqn
for $i=1, ..., n$, which completes the proof.
(To proof that $\mu$ is Gaifman, simply %
replace `is satisfiable' by `has a separating model' in particular in $\Sat$, and
`$\xi$ strongly Cournot' by `$\xi$ Cournot and Gaifman' %
in the above proof.)
\qed
\end{proof}

Next we study conditions on the set of sentences which guarantee that the
equations of Proposition~\ref{prop:ExtProb} have a solution.
First, a necessary condition is introduced.

\begin{definition}[subadditive $\mu_0$]\label{def:subadditive}
Let $\{ \ph_1, ..., \ph_n \}$ be a finite set of sentences and
$\mu_0:\{ \ph_1, ..., \ph_n \}\to[0,1]$ a function.
Then $\mu_0$ is {\em subadditive}\index{subadditive function on sentences}
if, for each
$i, i_1, ..., i_k \in \{ 1, ..., n \}$ such that the sentences
$\ph_{i_1}, ..., \ph_{i_k}$ are pairwise disjoint
and $\bigvee_{j = 1}^k \ph_{i_j} \rightarrow \ph_i$ is valid,
\begin{align*}
  \sum_{j = 1}^k \mu_0(\ph_{i_j}) &\leq \mu_0(\ph_i) & \text{and}
\\
  \sum_{j = 1}^k \mu_0(\ph_{i_j}) &= \mu_0(\ph_i)
  \text{ \; if additionally \; } \ph_i \rightarrow \bigvee_{j = 1}^k \ph_{i_j} \text{ is valid}.
\end{align*}
\end{definition}

Here is another necessary condition that will be needed.

\begin{definition}[eligible $\mu_0$]\label{def:eligible}
Let $\{ \ph_1, ..., \ph_n \}$ be a finite set of sentences and
$\mu_0:\{ \ph_1, ..., \ph_n \}\to[0,1]$ a function.
Then $\mu_0$ is {\em eligible}\index{eligible function} if, for each $i=1, ..., n$,
$\mu_0(\ph_i) = 0$ if $\ph_i$ is unsatisfiable.
\end{definition}

Now the conditions of subadditivity and eligibility are shown to be necessary.

\begin{proposition}[subadditive and eligible $\mu_0$]\label{prop:SAandE}
Let $\{ \ph_1, ..., \ph_n \}$ be a finite set of sentences and
$\mu_0:\{ \ph_1, ..., \ph_n \}\to[0,1]$ a function.
Suppose that $\mu_0$ can be extended to a probability on $\S$.
Then $\mu_0$ is subadditive and eligible.
\end{proposition}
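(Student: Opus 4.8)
The plan is to show that if $\mu$ is a probability on $\S$ extending $\mu_0$, then both necessary conditions follow almost immediately from the basic properties already established in Proposition~\ref{prop:PPS}. Throughout I let $\mu$ denote such an extension, so that $\mu(\ph_i)=\mu_0(\ph_i)$ for $i=1,\dots,n$, and I verify the defining inequalities/equalities of subadditivity (Definition~\ref{def:subadditive}) and the vanishing condition of eligibility (Definition~\ref{def:eligible}) directly for $\mu_0$ by transporting them through $\mu$.

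First I would handle eligibility, which is the easier of the two. Suppose $\ph_i$ is unsatisfiable. Then by Proposition~\ref{prop:PPS}.3 we have $\mu(\ph_i)=0$, and since $\mu$ extends $\mu_0$ this gives $\mu_0(\ph_i)=\mu(\ph_i)=0$. That settles eligibility in one line.

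Next I would prove subadditivity. Fix indices $i,i_1,\dots,i_k$ with $\ph_{i_1},\dots,\ph_{i_k}$ pairwise disjoint and $\bigvee_{j=1}^k \ph_{i_j}\rightarrow\ph_i$ valid. By Proposition~\ref{prop:PPS}.6 applied to the pairwise disjoint family, $\mu(\bigvee_{j=1}^k\ph_{i_j})=\sum_{j=1}^k\mu(\ph_{i_j})=\sum_{j=1}^k\mu_0(\ph_{i_j})$. Since $\bigvee_{j=1}^k\ph_{i_j}\rightarrow\ph_i$ is valid, Proposition~\ref{prop:PPS}.4 gives $\mu(\bigvee_{j=1}^k\ph_{i_j})\leq\mu(\ph_i)=\mu_0(\ph_i)$, yielding the required inequality $\sum_{j=1}^k\mu_0(\ph_{i_j})\leq\mu_0(\ph_i)$. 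For the additional equality case, if moreover $\ph_i\rightarrow\bigvee_{j=1}^k\ph_{i_j}$ is valid, then $\ph_i=\bigvee_{j=1}^k\ph_{i_j}$ is valid (the two implications together), so by Proposition~\ref{prop:PPS}.5 we get $\mu(\ph_i)=\mu(\bigvee_{j=1}^k\ph_{i_j})$, whence $\mu_0(\ph_i)=\sum_{j=1}^k\mu_0(\ph_{i_j})$.

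There is no real obstacle here: the entire proof is a direct transfer of Parts 3, 4, 5, and 6 of Proposition~\ref{prop:PPS} from $\mu$ to its restriction $\mu_0$. The only point requiring a word of care is the equality case of subadditivity, where one must observe that the two one-directional validities combine to make $\ph_i=\bigvee_{j=1}^k\ph_{i_j}$ valid so that Part 5 applies; everything else is routine.
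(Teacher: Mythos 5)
Your proof is correct and follows essentially the same route as the paper's: both arguments transfer Parts 3, 4, and 6 of Proposition~\ref{prop:PPS} from the extension $\mu$ back to $\mu_0$. The only cosmetic difference is in the equality case of subadditivity, where the paper obtains $\mu(\bigvee_{j=1}^k \ph_{i_j}) = \mu(\ph_i)$ directly from the two valid implications (monotonicity in both directions), while you merge the two implications into the valid equality $\ph_i = \bigvee_{j=1}^k \ph_{i_j}$ and invoke Part 5 --- an equally immediate step.
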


\begin{proof}
Let $\mu:\S\to\SetR$ be a probability that extends $\mu_0$.

Suppose that, for some $i, i_1, ..., i_k \in \{ 1, ..., n \}$, the sentences
$\ph_{i_1}, ..., \ph_{i_k}$ are pairwise disjoint
and $\bigvee_{j = 1}^k \ph_{i_j} \rightarrow \ph_i$ is valid.
Then
\begin{align*}
     & \; \textstyle\sum_{j=1}^k \mu_0(\ph_{i_j}) \\
   = & \; \textstyle\sum_{j=1}^k \mu(\ph_{i_j}) & \text{[$\mu$ extends $\mu_0$]} \\
   = & \; \textstyle\mu(\bigvee_{j=1}^k \ph_{i_j}) & \text{[Proposition~\ref{prop:PPS}.6]} \\
\leq & \; \textstyle\mu(\ph_i)  & \text{[Proposition~\ref{prop:PPS}.4]} \\
   = & \; \mu_0(\ph_i).
\end{align*}

Also
\begin{align*}
             & \; \textstyle\ph_i \rightarrow \bigvee_{j=1}^k \ph_{i_j} \text{ is valid} \\
\mathit{implies} & \;  \textstyle\mu(\bigvee_{j=1}^k \ph_{i_j}) = \mu(\ph_i)
                          & \text{[$\textstyle\bigvee_{j = 1}^k \ph_{i_j} \rightarrow \ph_i$ is valid]} \\
\mathit{implies} & \; \textstyle\sum_{j=1}^k \mu(\ph_{i_j}) = \mu(\ph_i) \\
\mathit{implies} & \; \textstyle\sum_{j=1}^k \mu_0(\ph_{i_j}) = \mu_0(\ph_i).
\end{align*}
Thus $\mu_0$ is subadditive.

For $i \in \{ 1, ..., n\}$, $\mu(\ph_i) = 0$ if $\ph_i$ is unsatisfiable,
since $\mu$ is a probability; and $\mu_0(\ph_i) = \mu(\ph_i)$.
Hence $\mu_0$ is eligible.
\qed
\end{proof}

Now a further structural condition on the set of sentences is
introduced that, together with subaddivity and eligibility,
will be sufficient to guarantee that there is a solution of the
equations.

\begin{definition}[hierarchical sentences]\label{def:HierarchicalS}
A finite set of sentences $\{ \ph_1, ..., \ph_n \}$
is {\em hierarchical}\index{hierarchical sentences} if, for
each $i \neq j$, exactly one of the following holds: $\neg
(\ph_i \wedge \ph_j)$ is valid or $\ph_i
\rightarrow \ph_j$ is valid or $\ph_j \rightarrow
\ph_i$ is valid.
\end{definition}

Intuitively, Definition~\ref{def:HierarchicalS}
states that, if $\ph_i$ and $\ph_j$ $(i \neq j)$ are
sentences, then either they are disjoint or one of them is
stronger than the other. An hierarchical set of
sentences is illustrated in
Figure~\ref{fig:HierarchicalS}. Each circle or oval
indicates the set of models of a particular sentence.

\begin{figure}[htbp]
\begin{center}
\framebox[4.5in]{
\footnotesize\unitlength=1pt
\begin{picture}(100, 170)(0, 0)
\put(80,140){\oval(110,40)}
\put(90,140){\circle{30}}
\put(50,140){\circle{35}}
\put(55,135){\circle{20}}
\put(90,145){\circle{15}}

\put(0,60){\oval(50,120)}
\put(5,90){\circle{30}}
\put(0,50){\circle{40}}

\put(-70,60){\circle{25}}

\put(60,40){\circle{20}}

\put(120,70){\circle{40}}
\put(115,75){\circle{7}}
\put(125,65){\circle{9}}
\end{picture}
}
\end{center}
\caption{An hierarchical set of sentences}
\label{fig:HierarchicalS}
\end{figure}
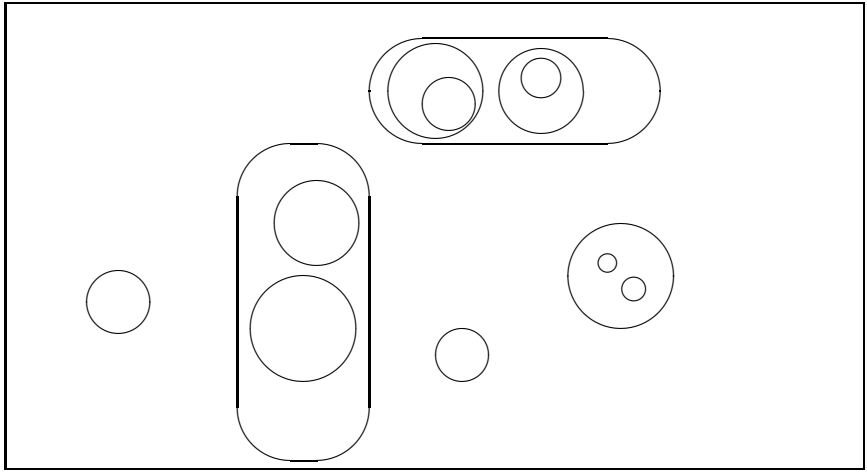

For the next result, the proof is by induction on the depth of an hierarchical set of sentences;
we now define the concept of depth.

\begin{definition}[depth of a sentence]
Let $\mathcal{H}$ be an hierarchical set of sentences. The {\em
depth}\index{depth of a sentence} of $\ph \in \mathcal{H}$
is defined to be the length $p$ of the unique sequence
$\ph_1, ..., \ph_p \equiv \ph$ of sentences in $\mathcal{H}$ such that
(a) $\ph_{i+1} \rightarrow \ph_i$ is valid, for $i=1, \ldots, p-1$;
(b) for each $\psi \in \mathcal{H}$, $\ph_{i+1} \rightarrow \psi$ and $\psi \rightarrow \ph_i$ are valid, for some some $i$,
implies $\psi = \ph_{i+1}$ or $\psi = \ph_i$;
and (c) for each $\psi \in \mathcal{H}$,
$\ph_{1} \rightarrow \psi$ is valid implies $\psi = \ph_1$.

The {\em depth}\index{depth of a set of sentences} of
$\mathcal{H}$ is the maximum depth of its sentences.
\end{definition}

An empty set of sentences has depth 0. The depth of the set of
sentences in Figure~\ref{fig:HierarchicalS} is 3.

\begin{proposition}[extending hierarchical constraints]\label{prop:ERC}
Let the alphabet be countable, $\{ \ph_1, ..., \ph_n \}$ a set
of sentences, and $\mu_0:\{ \ph_1, ..., \ph_n \}\to[0,1]$ a
subadditive eligible function. Suppose that $\{ \ph_1, ...,
\ph_n \}$ is hierarchical. Then $\mu_0$ can be extended to a
minimally more informative probability
$\mu:\S\to\SetR$ than some prior $\xi$ (see Definition~\ref{def:mmimu}),
which is Gaifman if $\xi$ is.
\end{proposition}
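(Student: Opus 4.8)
The plan is to reduce everything to the linear system of Proposition~\ref{prop:ExtProb} and then read off the remaining assertions from the machinery of Section~\ref{sec:entropy}. Concretely, once I exhibit nonnegative reals $\a_S$, for $S\subseteq\{1\!:\!n\}$, satisfying $\sum_S\a_S=1$, $\sum_{S\ni i}\a_S=\mu_0(\ph_i)$, and $\a_S=0$ whenever $\psi_S$ has no model, Proposition~\ref{prop:ExtProb} produces an extension of $\mu_0$, and the remark following it (via Definition~\ref{def:mmimu} and Proposition~\ref{prop:mmimu}) upgrades this to the minimally more informative $\hat\mu$ relative to a chosen prior $\prior$, which is Gaifman whenever $\prior$ is. So the entire content is the \emph{existence of a solution} $\{\a_S\}$, and this is where the hierarchy does the work.

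First I would make the combinatorial structure explicit. Order $\{\ph_1,\dots,\ph_n\}$ as a forest under logical implication: write $\ph_c\prec\ph_v$ when $\ph_c\to\ph_v$ is valid, and call $\ph_c$ a \emph{child} of $\ph_v$ when it is maximal among the sentences strictly below $\ph_v$. By the hierarchical property any two incomparable sentences are disjoint, so siblings are pairwise disjoint and the maximal (depth-$1$) sentences are pairwise disjoint \emph{roots}. The key observation is that a region $\psi_S$ is satisfiable only for very special $S$: a model $I$ satisfies an upward-closed, totally ordered set of the $\ph_i$, so $S=\{i:I\in\model(\ph_i)\}$ is either empty or the chain of ancestors running from a root down to a single node $v$. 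Hence the only candidate-positive regions are the ``outside'' region $S=\emptyset$ and one region $S_v$ per node $v$; I set $\a_S:=0$ for every other $S$, which is forced since those $\psi_S$ are unsatisfiable.

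For the remaining regions I would define $a_v:=\mu_0(\ph_v)-\sum_{c}\mu_0(\ph_c)$, the sum ranging over the children $c$ of $v$, and $a_\emptyset:=1-\sum_{r}\mu_0(\ph_r)$ over the roots $r$. The marginal equations then follow by a telescoping induction on depth (from the leaves upward): $\sum_{u\in\mathrm{subtree}(v)}a_u=\mu_0(\ph_v)$, and summing over roots gives $\sum_S\a_S=1$. Nonnegativity of each $a_v$ is exactly subadditivity applied to the child-family, which is pairwise disjoint with $\bigvee_c\ph_c\to\ph_v$ valid; moreover the \emph{equality} clause of subadditivity forces $a_v=0$ precisely when $\ph_v\to\bigvee_c\ph_c$ is valid, that is, precisely when $\psi_{S_v}$ is unsatisfiable, so the vanishing condition $\a_{S_v}=0$ is automatically respected. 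Eligibility handles the degenerate leaves where $\ph_v$ itself is unsatisfiable.

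The one delicate point --- and the step I expect to be the real obstacle --- is the nonnegativity of the outside mass $a_\emptyset=1-\sum_r\mu_0(\ph_r)$, since the roots have no common covering sentence inside $\{\ph_1,\dots,\ph_n\}$ and so subadditivity as stated does not directly bound their total. I would handle this by adjoining the tautology $\top$ (whose model set is all of $\I$) as a universal root with $\mu_0(\top):=1$: this keeps the family hierarchical, turns the forest into a single tree, makes $a_\emptyset$ an ordinary node mass, and realizes the bound $\sum_r\mu_0(\ph_r)\le 1$ as the subadditivity inequality at $\top$ together with $\mu_0(\top)\le 1$. Finally, for the Gaifman clause I would rerun the identical construction with ``has a separating model'' in place of ``is satisfiable'' throughout (so $\Sat$ and the zeroed regions are defined by separating models) and take $\prior$ to be a Cournot and Gaifman prior, which exists by Theorem~\ref{thm:CandG}; since $\hat\mu$ is then the finite mixture $\sum_{S}\hat\mu(\psi_S)\,\prior(\cdot\,|\,\psi_S)$ of conditionals of a Gaifman measure (Equation~\req{eq:mremup}), it is itself Gaifman, which gives the last assertion.
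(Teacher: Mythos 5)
Your construction is, at bottom, the same as the paper's: the paper proves Proposition~\ref{prop:ERC} by induction on the depth of the hierarchy, and the solution it builds assigns to the region of a node $\ph_i$ with children $\ph_{i_1},\dots,\ph_{i_k}$ exactly your mass $a_v=\mu_0(\ph_v)-\sum_c\mu_0(\ph_c)$, getting nonnegativity from the inequality clause of subadditivity and the vanishing condition from the equality clause together with eligibility, and then concluding via Propositions~\ref{prop:ExtProb} and~\ref{prop:mmimu} just as you do. Your one-shot presentation --- identifying the satisfiable regions $\psi_S$ as ancestor chains plus the outside region, verifying the marginals by telescoping over subtrees, and observing that $\psi_{S_v}$ is unsatisfiable exactly when $\ph_v\to\bigvee_c\ph_c$ is valid --- is correct and in fact more explicit than the paper's induction.

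However, the step you yourself flagged as delicate is a genuine gap, and your patch does not close it. Adjoining $\top$ with $\mu_0(\top):=1$ and invoking ``subadditivity at $\top$'' is circular: subadditivity is a hypothesis about the \emph{given} family $\{\ph_1,\dots,\ph_n\}$, and subadditivity of the augmented family at $\top$ is precisely the unproved claim $\sum_r\mu_0(\ph_r)\le 1$. That claim does not follow from the stated hypotheses: take $\ph_1\equiv(p\;a)$ and $\ph_2\equiv\neg(p\;a)$ with $\mu_0(\ph_1)=\mu_0(\ph_2)=0.9$. This family is hierarchical and eligible, and it is subadditive because every instance of Definition~\ref{def:subadditive} is trivial (no repeated indices can be pairwise disjoint, a single member implies only itself, and $\ph_1\vee\ph_2$ is valid but implies neither $\ph_1$ nor $\ph_2$, as neither is valid). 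Yet no probability extension exists at all, since Definition~\ref{def:prob_phi}.2 would force $\mu(\ph_1\vee\ph_2)=1.8$, contradicting Proposition~\ref{prop:PPS}.2. So the root-level bound (and, dually, the equality $\sum_r\mu_0(\ph_r)=1$ needed when $\psi_\emptyset$ is unsatisfiable) must be \emph{added} as a hypothesis, e.g.\ by extending subadditivity to an implicit top element, or by assuming the family contains a valid sentence carrying value $1$; it cannot be derived. It is worth noting that the paper's own proof shares this lacuna --- its induction step only treats depth-$(d+1)$ sentences ``contained'' in a depth-$d$ sentence, so the depth-$1$ roots are never constrained and the bound is silently assumed --- so you located the obstacle exactly, but then assumed rather than proved it. A smaller, analogous looseness (again shared with the paper) affects your Gaifman rerun: replacing ``satisfiable'' by ``has a separating model'' requires $a_v=0$ whenever $\psi_{S_v}$ is satisfiable but merely lacks a separating model, and the equality clause of subadditivity, which is triggered only by validity of $\ph_v\to\bigvee_c\ph_c$, does not deliver this.
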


\begin{proof}
The proof is by induction on the depth $d$ of the hierarchical set of sentences.

Suppose first that $d = 0$, that is, the set of sentences is
empty. To show that $\mu_0$ can be extended to a probability
$\mu:\S\to\SetR$, it suffices by
Proposition~\ref{prop:ExtProb} to show that the equations of
that proposition for this case have a solution. Since the index
set of the set of sentences is empty, its only subset is $S =
\emptyset$. Furthermore, $\psi_S$ is $\top$. Put $\a_S = 1$.
Then the first equation from Proposition~\ref{prop:ExtProb} is trivially satisfied.
The second set of equations does not appear in this case.
Finally, the third and fourth equations are trivially satisfied.
This completes the base case of the induction argument.

Now suppose the result holds for hierarchical sets of sentences having depth $d$.
Let $\{ \ph_1, ..., \ph_n \}$ be an hierarchical set of sentences with depth $d+1$.
Without loss of generality, we can assume that $\{ \ph_1, ..., \ph_p \}$, for $p < n$,
is an hierarchical set of sentences of depth $d$ and the sentences
$\ph_{p+1}, ..., \ph_n$ all have depth $d+1$.
By the induction hypothesis, $\mu_0$ restricted to $\{ \ph_1, ..., \ph_p \}$
can be extended to a probability $\mu:\S\to\SetR$.
Thus, by Proposition~\ref{prop:ExtProb}, the following set of equations has a solution:
\begin{align*}
 & \textstyle \sum_{S\subseteq\{1:p\}} \a_S = 1 \\*
 & \textstyle \sum_{S\subseteq\{1:p\}: i \in S} \a_S = \mu_0(\ph_i),  \text{ for } i=1, ..., p \\*
 & \textstyle \a_S \geq 0, \text{ for } S\subseteq\{1\!:\!p\} \\*
 & \textstyle \a_S = 0 \text{ if } \psi_S \text{ is unsatisfiable}, \text{ for } S\subseteq\{1\!:\!p\}.
\end{align*}

Consider a typical sentence $\ph_i$ of depth $d$ that `contains' sentences
$\ph_{i_1}, ..., \ph_{i_k}$ of depth $d+1$.
Thus $\ph_{i_1}, ..., \ph_{i_k}$ are pairwise disjoint and
$\bigvee_{j=1}^k \ph_{i_j} \rightarrow \ph_i$ is valid.
(See Figure~\ref{depth_d+1}.)
Since $\mu_0$ is subadditive,
\begin{align*}
  \textstyle\sum_{j = 1}^k \mu_0(\ph_{i_j}) &\leq \mu_0(\ph_i) & \text{and} \\
  \textstyle\sum_{j = 1}^k \mu_0(\ph_{i_j}) &= \textstyle \mu_0(\ph_i)
              \text{ \; if also \; } \ph_i \rightarrow \bigvee_{j = 1}^k \ph_{i_j} \text{ is valid}.
\end{align*}
Since $\mu_0$ is eligible, $\mu_0(\ph_{i_j}) = 0$ if $\ph_{i_j}$ is unsatisfiable,
for $j = 1, ..., k$.
It has to be shown that when the depth $d+1$ sentences are added to
$\ph_1, ..., \ph_p$, the corresponding set of equations has a solution.

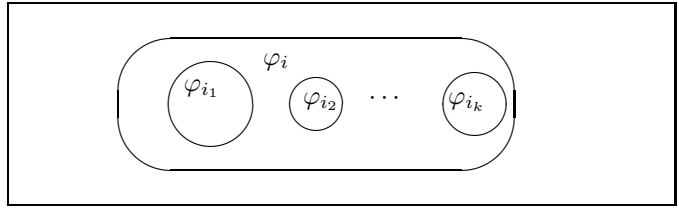
\begin{figure}[htbp]
\begin{center}
\framebox[3.5in]{
\footnotesize\unitlength=1pt
\begin{picture}(100, 70)(-40, 25)
\put(0,60){\oval(150,50)}
\put(-20,75){$\ph_i$}
\put(-40,60){\circle{30}}
\put(-50,65){$\ph_{i_1}$}
\put(0,60){\circle{20}}
\put(-5,60){$\ph_{i_2}$}
\put(20,60){$\cdots$}
\put(60,60){\circle{25}}
\put(50,60){$\ph_{i_k}$}
\end{picture}
}
\end{center}
\caption{Pairwise disjoint sentences $\ph_{i_1}, ..., \ph_{i_k}$ of depth $d+1$}
\label{depth_d+1}
\end{figure}

To simplify the notation, assume for the moment that
$\ph_{i_1}, ..., \ph_{i_k}$ are {\em all} the sentences of
depth $d+1$, so that the index set $\{1, ..., p\}$ for the
set of at-most-depth $d$ sentences is expanded to $\{1, ...,
n\}$ for the whole set of sentences.

Let $S_i \subseteq \{ 1, ..., p \}$ be the set of indices of
sentences in the `path' down to $\ph_i$ in the set of
at-most-depth $d$ sentences, so that $\psi_{S_i} = \ph_i$ is
valid. Now consider the full set of sentences. Then the
following are valid:
\begin{gather*}
\psi_{S_i \cup \{i_1\}} = \ph_{i_1} \\*
\qquad \vdots \\*
\psi_{S_i \cup \{i_k\}} = \ph_{i_k} \\*
\psi_{S_i} = \ph_i \wedge \bigwedge_{i=1}^k \neg \ph_{i_j}.
\end{gather*}

Included in the equations for the full set of sentences are the following:
\begin{gather*}
\a_{S_i \cup \{i_1\}} = \mu_0(\ph_{i_1}) \\*
\qquad \vdots \\*
\a_{S_i \cup \{i_k\}} = \mu_0(\ph_{i_k}) \\*
\a_{S_i} + \a_{S_i \cup \{i_1\}} + \cdots + \a_{S_i \cup \{i_k\}} = \mu_0(\ph_i).
\end{gather*}
(The first $k$ equations are new ones; the last equation replaces
$\a_{S_i}  = \mu_0(\ph_i)$ in the set of equations for the at-most-depth $d$ sentences.)

Furthermore, the term $\a_{S_i}$ in the first equation of
the set of equations for the at-most-depth $d$ sentences is
replaced by $\a_{S_i} + \a_{S_i \cup \{i_1\}} + \cdots
+ \a_{S_i \cup \{i_k\}}$ in the equations for the full set
of sentences. (This is the only change to the first equation
because all the other extra subsets $R$ of $\{ 1, ..., n \}$
that have to be considered lead to $\psi_R$ that are logically
equivalent to $\bot$ and hence have $\a_R = 0$.)

Because $\mu_0$ is subadditive and eligible, it is clear that
\begin{gather*}
\a_S \geq 0, \text{ for } S\subseteq\{1\!:\!n\} \\*
\a_S = 0 \text{ if } \psi_S \text{ is unsatisfiable}, \text{ for } S\subseteq\{1\!:\!n\}
\end{gather*}
are satisfied.

Thus the set of equations for the full set of sentences has a solution.
The case when there are extra sentences of depth $d+1$ `inside'
other $\ph_j$ is handled in a similar way.

Now use Propositions~\ref{prop:mmimu} and \ref{prop:ExtProb} to
conclude that $\mu_0$ can be extended to a minimally more
informative probability $\mu:\S\to\SetR$ than some prior $\xi$.
This completes the induction argument. \qed
\end{proof}

\section{User Manual}\label{sec:Examples}

This section is a brief outlook on how (approximations of) the
theory developed in this paper might be used in autonomous
reasoning agents. We discuss the special case of certain
knowledge and how it can be used to make inferences about
statements that are not logical implications of the knowledge
base. For instance, if our agent has observed a large number of
ravens which are all black without exception, how strongly
should it belief in the hypothesis that ``all ravens are
black''?
We construct an agent that can learn in the limit in the usual
time-series forecasting setting with an observation sequence
indexed by natural numbers.

\paradot{Certain knowledge}
A common case of knowledge is a set of sentences $\ph_i$, each
having degree of belief 1 (that is, $\mu_0(\ph_i) = 1$, for $i=1, \ldots, n$). In other words, there is certainty that each
$\ph_i$ is valid in the intended  interpretation. This
corresponds to non-logical axioms in a theory. Let $\prior$ be a
Cournot probability and suppose that $\mu$ is minimally more
informative than $\prior$ given $\mu_0$. In this case, each
$\mu(\psi_S)$, for $S\subseteq\{1\!:\!n\}$, is uniquely
determined.

To see this, suppose that $S \neq \{1:n\}$, say, $i
\notin S$, Then $\mu(\psi_S) \leq \mu(\neg \ph_{i}) = 1 -
\mu(\ph_{i}) = 1 -\mu_0(\ph_{i}) = 0$, so that
$\mu(\psi_S) = 0$. Hence $\mu(\psi_{\{1:n\}}) = 1$.
Thus, in this situation, by \req{eq:mremup} $\mu$ satisfies
\beq\label{eq:mumuprel}
  \mu(\ph) = \prior(\ph \, | \, \ph_1 \wedge \cdots \wedge\ph_n),
\eeq
for $\ph \in \S$. Consequently,
there is no optimisation to be done: either $\ph_1\wedge \cdots \wedge\ph_n$ is satisfiable (leading directly to
the above definition for $\mu$) or else it is not, in which
case there are no solutions and $\mu$ cannot be defined at all.

A further special case beyond the one just considered is when
$\ph$ is a logical consequence of $\ph_1\wedge \cdots \wedge\ph_n$. In this case,
\beqn
  \mu(\ph)
  \;=\; \prior(\ph \, | \, \ph_1\wedge \cdots \wedge\ph_n)
  \;=\; \frac{\prior(\ph_1\wedge \cdots \wedge\ph_n)}{\prior(\ph_1\wedge \cdots \wedge\ph_n)}
  \;=\; 1,
\eeqn
as one would expect. Similarly when $\neg \ph$ is logical
consequence, then $\mu(\ph) = 0$.

Note that, while it is important that the prior $\prior$ be
Cournot, it is just as important that the posterior $\mu$ be allowed not to
be Cournot. The prior should be Cournot so that the KL
divergence is as widely defined as possible or, more
intuitively, to make sure sentences having a separating model are {\em not
forced} to have $\mu$-probability 0. On the other hand, the
probability $\mu$ should be {\em allowed} to be 0 on
sentences having a separating model since the evidence in the form of the
probabilities on $\ph_1, \dots, \ph_n$ may imply this. This is
apparent, for example, for the case where each $\ph_i$ has
probability 1: according to this evidence, any sentence (even one having a separating model)
that is disjoint from $\ph_1\wedge \cdots \wedge\ph_n$ must have $\mu$-probability 0.

\paradot{Black ravens}
Consider the infamous problem of the black ravens which is one
of the most notorious problems in confirmation theory
\cite{Earman:93,Hutter:11uiphil}. Let the ravens be identified
by positive integers and $B(i)$ denote the fact that raven $i$
is black. The evidence consists of the sentences $B(1), \ldots,
B(n)$. (Thus $\ph_i \equiv B(i)$, for $i=1, \ldots, n$.) Let
$\mu_0:\{ B(1), \ldots, B(n) \}\to[0,1]$ be defined
by $\mu_0(B(i)) = 1$, for $i=1, \ldots, n$. Thus the degree
of belief that the $i$th raven is black is 1, for $i=1,
\ldots, n$. Suppose that $\prior$ is an uninformative prior that
is Cournot and Gaifman. Since a-priori there are no constraints
(on $B$), this implies that $\prior(\forall i.B(i))>0$. Let $\mu$
be a probability that is minimally more informative than
$\prior$ given $\mu_0$. Thus $\mu$ is given by \req{eq:mumuprel}.

Now consider the sentence $\forall i. B(i)$. This is clearly
not a logical consequence of the evidence, but one can use
$\mu$ to ascribe a degree of belief that it is true and,
furthermore, investigate what happens to this probability as
the number of black ravens increases.
Equation~\req{eq:mumuprel} and $\mu_0(B(i)) = 1$, for $i=1,
\ldots,n$, and then Theorem~\ref{thm:CUH} applied to Gaifman
and Cournot $\xi$ show that
\beqn
  \mu(\forall i.B(i))
  \;=\; \prior(\forall i.B(i)\,|\,B(1)\wedge\cdots\wedge B(n))
  \;\stackrel{n\to\infty}\longrightarrow 1
\eeqn
Thus, as the number of observed black ravens increases, the degree
of belief that all ravens are black approaches 1.
Of course this also implies the weaker statement that our belief in the next raven being
black tends to one:
\beqn
  \prior(B(n+1) \, | \, B(1)\wedge \cdots \wedge B(n)) \quad\stackrel{n\to\infty}\longrightarrow\quad 1
\eeqn

\paradot{Naive black ravens}
Continuing the preceding example, suppose given the evidence
$B(1), \ldots, B(n)$, each having probability 1, one wants to
know the degree of belief for $B(n+1)$.
Consider the tree construction in Theorem~\ref{thm:char} for $\xi$ but
with sentences $\ph_1,\ph_2,...$ only ranging over
$\ph_i=B(i)$ and uniform $\a_{n,S}=2^{-n}$. Then
\begin{align*}
  & \; \prior(B(n+1) \, | \, B(1)\wedge \cdots \wedge B(n)) \\
= & \; \frac{\prior(B(1)\wedge \cdots \wedge B(n) \wedge B(n+1))}{\prior(B(1)\wedge \cdots \wedge B(n))} \\
= & \; \frac{\a_{n+1,\{1:n+1\}}}{\a_{n,\{1:n\}}}  \;=\; 1/2. 
\end{align*}
Thus, for this prior, knowing the evidence so far, even for
large $n$, does not give any information about $B(n+1)$. But it
gets worse: Assume $\xi$ is somehow extended to a probability
on all $\S$. Then for any $m\geq n$,
\beqn
  \prior(\forall i.B(i) \, | \, B(1)\wedge\cdots\wedge B(n))
  \;\leq\; \prior(B(1)\wedge \cdots \wedge B(m) \, | \, B(1)\wedge \cdots \wedge B(n))
  \;=\; (\fr12)^{m-n}
\eeqn
hence $\prior(\forall i.B(i) \, | \, B(1)\wedge\cdots\wedge B(n))\equiv 0$ for all $n$,
i.e.\ universal hypotheses can not be confirmed. Even more seriously,
we would be absolutely sure that non-black ravens exist
\beqn
  \prior(\exists i.\neg B(i) \, | \, B(1)\wedge\cdots\wedge B(n))\equiv 1
\eeqn
and no number of observed black ravens $n$ without any counter
examples will ever convince us otherwise.
These conclusions qualitatively hold even when
$\ph_1,\ph_2,...$ ranges over all or any subset of quantifier-free/lambda-free
sentences. There seem to be no simple local rules for choosing $\a_{n,S}$
that allow confirmation of all universal hypotheses.
This shows that it is crucial to include quantified sentences
when constructing a prior and ensure it is Cournot (even when
only making inferences about unquantified sentences like
$B(n+1)$).

\begin{corollary}[learning in the limit]\label{cor:LIL}
Let $\imath \equiv \Nat$, $\ph$ be a closed term of type $\Nat\to\tB$, %
$\mu$ be a Gaifman probability on sentences, %
and $\mu(\forall x.(\ph~x))>0$. Then
\beqn
  \lim_{n\to\infty} \mu(\forall x.(\ph~x) \,|\, (\ph~\tfn{0})\wedge \cdots \wedge(\ph~\tfn{n}))
  \;=\; 1
\eeqn
\end{corollary}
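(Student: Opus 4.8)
The plan is to read the corollary off directly from Theorem~\ref{thm:CUH}, instantiated at the type $\a=\Nat$, with the formula $(\ph~x)$ and with the enumeration of closed terms of type $\Nat$ taken to be the numerals $\tfn{0},\tfn{1},\tfn{2},\ldots$. First I would observe that, since $\ph$ is a closed term of type $\Nat\to\tB$, the application $(\ph~x)$ is a formula whose single free variable is $x$ of type $\Nat$; hence $\forall x.(\ph~x)$ and the finite conjunctions $\bigwedge_{i=0}^n(\ph~\tfn{i})$ are exactly the quantified sentence and approximating conjunctions that figure in Theorem~\ref{thm:CUH}. The shift from the theorem's index range $t_1,\ldots,t_n$ to $\tfn{0},\ldots,\tfn{n}$ is a harmless reindexing (take $t_i:=\tfn{i-1}$), which leaves the $n\to\infty$ limit unchanged.

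It then remains only to check the two conditions on the right-hand side of the top equivalence in Theorem~\ref{thm:CUH}. Condition $A(\ph)$, that $\mu(\forall x.(\ph~x))>0$, is precisely our hypothesis. Condition $G(\ph)$, that $\mu(\bigwedge_{i=0}^n(\ph~\tfn{i}))\to\mu(\forall x.(\ph~x))$ as $n\to\infty$, is the Gaifman condition written in the form of Proposition~\ref{prop:GCA}.5 (with the numerals as the chosen enumeration of representatives, justified below), and so holds because $\mu$ is assumed Gaifman. With $A(\ph)$ and $G(\ph)$ both in hand, the $\Leftarrow$ half of the top equivalence in Theorem~\ref{thm:CUH} yields $L(\ph)$, namely
\[
  \mu\big(\forall x.(\ph~x)\,\big|\,(\ph~\tfn{0})\wedge\cdots\wedge(\ph~\tfn{n})\big)\stackrel{n\to\infty}{\longrightarrow}1,
\]
which is the assertion of the corollary.

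The single genuinely substantive step --- and the one I expect to be the main (if minor) obstacle --- is justifying that the numerals $\tfn{0},\tfn{1},\ldots$ may serve as the enumeration of (representatives of) all closed terms of type $\Nat$ required by Proposition~\ref{prop:GCA} and Theorem~\ref{thm:CUH}. The reduction to numerals is not automatic for an arbitrary type, but for type $\Nat$ it is exactly the observation recorded in Example~\ref{ex:Nat}, that it is sufficient to take $t_n=\tfn{n}$ in Proposition~\ref{prop:GCA}; I would invoke that fact directly rather than re-deriving it. This argument also tacitly uses that the alphabet is countable, so that Proposition~\ref{prop:GCA} and Theorem~\ref{thm:CUH} apply; this is in force in the $\Nat$ setting of Examples~\ref{ex:Nat} and~\ref{ex:SINat}. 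Granting the numeral enumeration, no further computation is needed: the corollary is simply the specialization of Theorem~\ref{thm:CUH} to $\a=\Nat$ and the formula $(\ph~x)$, with its two side conditions furnished by the Gaifman and the positivity hypotheses respectively.
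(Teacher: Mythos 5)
Your proposal is correct and matches the paper's own argument: the paper proves Corollary~\ref{cor:LIL} exactly by specializing Theorem~\ref{thm:CUH} (its top $\Leftarrow$ direction, with $G$ supplied by the Gaifman hypothesis and $A$ by the positivity hypothesis), noting that the proof ``crucially exploits that $\tfn{0},\tfn{1},\tfn{2},\ldots$ are representatives of all terms of type $\Nat$'' as in Example~\ref{ex:Nat}. You even correctly isolate that representative-enumeration step as the one substantive point, which is the same point the paper flags (and contrasts with the breakdown under the description operator $\iota$).
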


This generalizes the black raven example
and follows from Theorem~\ref{thm:CUH}. In particular,
learning in the limit is possible for the Gaifman and Cournot
probability constructed in the proof of
Theorem~\ref{thm:CandG}, provided $\forall x.(\ph~x)$ has a
separating model.

The proof crucially exploits that $\tfn{0},\tfn{1},\tfn{2},...$
are representatives of all terms of type $\Nat$. As discussed
in Example~\ref{ex:iota}, this would no longer be true had we
introduced a description operator into our logic.
Corollary~\ref{cor:LIL} would break down and universal
hypotheses over the natural numbers could not be inductively
confirmed, not even asymptotically.

\paradot{Approximations}
The construction of Cournot and Gaifman $\mu$
in the proof of Theorem~\ref{thm:CandG} required to determine
particular separating models for $\chi_i$ and to determine whether
they are also models of other sentences $\ph$.
This has been eased by Corollary~\ref{cor:GandC}, which only
requires determining wether sentences $\psi_{n,S}$ have (no)
separating model. Still this is non-decidable.

Assume we had some calculus to determining whether sentences
have (no) separating model. Even an asymptotic or approximate
or incomplete calculus may be of use.
Fix a sequence on-the-fly of all sentences
$\ph_2,\ph_3,...$ satisfying Theorem~\ref{thm:char}.4 (once and for all).
Determine the subsequence of all sentences
$\chi_1=\ph_{j_1},\chi_2=\ph_{j_2},...$ with separating models (on the fly).

In order to determine $\mu$ to accuracy $\eps>0$ for some
finite number of sentences $\{\ph_{i_1},...,\ph_{i_n}\}$ of
interest, we have to perform the tree construction ``only'' for
$\ph_1\in\{\chi_1,...,\chi_m\}$, where
$\sum_{i=m+1}^\infty<\eps$ and up to depth
$d=\max\{i_1,...,i_n\}$, i.e.\ determine finitely many cases.
If a new sentence $\ph_{i_{n+1}}$ of interest ``arrives'' or
higher precision is needed, $d$ respectively $m$ can be
increased appropriately (that's what was meant with
on-the-fly). It is important to expand the already existing
trees with assigned probabilities, rather than restarting the
procedure with a larger $d$, since this can lead to wrong
inductive limits if different choices are made every time.

\paradot{Work flow example for a simple inductive reasoning agent}
Below we present an example of a fictitious inductive reasoning agent.
It is fictitious, since many operations are incomputable.
In practice one needs to employ approximations at various steps.
How to do this is an open problem.

1. Assume the agent has been endowed with some background
knowledge e.g.\ about kinetics, colors, biology, birds, etc.
Its knowledge is represented in the form of a hierarchical
(Definition~\ref{def:HierarchicalS}) set of sentences
$\{\ph_1, \ldots,\ph_n\}$ that hold for sure ($\mu_0(\ph_i)=1$ for
some $i$) or with some probability $0<\mu_0(\ph_i)<1$ for the
other $i$. Our expressive higher-order logic provides a
convenient way of doing so \cite{lloyd-ng-JAAMAS}.

2. Assume $\mu_0$ is subadditive and eligible
(Definitions~\ref{def:subadditive} and~\ref{def:eligible}).
This may not be so easy to achieve, and is akin to the
general problem of maintaining consistent knowledge bases.

3. Next, use an approximation of a Gaifman and Cournot $\prior$
prior, e.g.\ as defined in the proof of
Theorems~\ref{thm:CandG} or Theorem~\ref{thm:charGandC} or
Corollary~\ref{cor:GandC} or and approximation thereof as
outlined above. The agent now constructs via
Definition~\ref{def:mmimu} the minimally more informative
probability $\mu$, which exists by Proposition~\ref{prop:ERC}
and is Gaifman by Proposition~\ref{prop:mmiGmu}
and the remark after Equation~\req{eq:mremup}.

4. Let $o_0,o_1,o_2,...$ be the agent's life-time sequence of past and future
observations of all kinds of objects, ravens and otherwise, all
it has/will ever observe, e.g.\ $o_n$ is what the agent sees $n$
seconds after it has been switched on.

5. Assume current time is $n$, and the agent needs to
hypothesize about the world to decide its next action, e.g.\
whether some observed regularity is ``real''. For instance,
``if observation at time $k$ is a raven, is it also black?''.
We can formalize this with a predicate $\ph$ of type
$\Nat\to\tB$ with the intended interpretation of
$(\ph~\tfn{k})$ as ``if observation at time $k$ is a raven, it is
black''.

6. Of course the answer to $(\ph~\tfn{0}),...,(\ph~\tfn{n})$ is
immediate, since $o_0,...,o_n$ have already been
observed. If they are all true, the agent may start to wonder
whether ``all ravens are black'', or formally, whether $\forall
x.(\ph~x)$ is true. Note that non-raven observations in the
sequence are allowed.

7. If the agent is equipped with our inductive reasoning
system, its degree of belief in this hypothesis is %
$\mu(\forall x.(\ph~x)|(\ph~\tfn{0})\wedge \cdots \wedge(\ph~\tfn{n}))$.

8. This result can be the basis for some decision process
maximizing some utilities resulting in an informed action.

Is the degree of belief derived in Step 7 and used in Step 8
reasonable? At least asymptotically Corollary~\ref{cor:LIL}
ensures that in the limit the agent's belief tends to 1, which
is very reasonable. So our system of inductive reasoning at
least passes this test. Most other inductive reasoning systems
have difficulties in getting this right \cite{Hutter:11uiphil}.

\paradot{The Monty Hall Problem}
The Monty Hall problem is based on a US game show.
A contestant is presented with three doors.  Behind one of the doors is a prize.  The other two doors have nothing behind them.  The contestant is asked to select a door.  After the contestant selects a door, but before that door is opened, the game host selects and opens one of the other two doors.  At this point the contestant is again asked to select their preferred door and will win whatever is behind this final selection.

It is expected that the host will not reveal the prize.  This constraint means that the host will always open a door to reveal nothing behind it.  This limits the contestant's second choice to either persisting with the door selected originally, or switching to the remaining door.  It is a known, if counterintuitive, result that the best strategy for the contestant is to switch doors.

Let $\imath \equiv {\it Door}$. We introduce the constants $D_1, D_2, D_3 : {\it Door}$ and
\begin{gather*}
{\it playerFirstSelection}, {\it hostSelection}, {\it prizeDoor} : {\it Door} \to\tB \\
{\it unique} : ({\it Door} \to\tB) \to\tB.
\end{gather*}
As we shall see, the function ${\it unique}$ is used to capture the constraint on the preceding three predicates that exactly
one door makes each of them true.
With those, we can now define a set of sentences:
\begin{align*}
  \ph_1 := \; & (\textit{unique} = \lambda p.\exists d.((p \; d) \land \forall x.((p\;x) \longrightarrow x = d))) \; \land \\
    & (\textit{unique} \; \textit{playerFirstSelection}) \land (\textit{unique} \; \textit{hostSelection}) \land (\textit{unique} \; \textit{prizeDoor}) \\
  \ph_2 := \; & (\textit{prizeDoor} \; d_1) \\
  \ph_3 := \; & (\textit{prizeDoor} \; d_2) \\
  \ph_4 := \; & (\textit{playerFirstSelection} \; d_1) \\
  \ph_5 := \; & (\textit{playerFirstSelection} \; d_2) \\
  \ph_6 := \; & \forall d. ((hostSelection \; d) \longrightarrow \left( \lnot (playerFirstSelection \; d) \land \lnot (prizeDoor \; d) \right)) \\
  \ph_7 := \; & (\textit{hostSelection} \; d_1) \\
  \ph_8 := \; & (\textit{hostSelection} \; d_2) \\
  \ph_9 := \; & \exists d. \left( (\textit{playerFirstSelection} \; d) \land (\textit{prizeDoor} \; d) \right)
\end{align*}

Selection of the correct prior is very important for this problem.  We require that the prior be symmetric in which door makes the $\textit{prizeDoor}$ predicate true, and which door makes the $\textit{playerFirstSelection}$ predicate true.  We also require that there be no correlation between the doors that make these predicates true in the prior.

We now perform the tree construction from Proposition~\ref{prop:psiphi} using the set of sentences above.
We leave out any branches that will have prior probability 0.
Because of the requirements that the prior be symmetric and uncorrelated, each of the leaf nodes of the following tree will have equal probability.
Note that these requirements mean that $\prior(\ph_2) = 1/3$ rather than $0.5$ as usual.

\begin{center}
\begin{tikzpicture}[scale=0.3,st/.style={circle,fill=black,inner sep=0pt,minimum size=1mm}]
  \node[st] (a) at (10,15) {};
  \node     (b) at (0,12) {};
  \node[st] (c) at (20,12) {};
  \node[st] (d) at (29,9) {};
  \node[st] (e) at (11,9) {};
  \node[st] (f) at (29,6) {};
  \node[st] (g) at (17,6) {};
  \node[st] (h) at (5,6) {};
  \node[st] (i) at (32,3) {};
  \node[st] (j) at (26,3) {};
  \node[st] (k) at (20,3) {};
  \node[st] (l) at (14,3) {};
  \node[st] (m) at (8,3) {};
  \node[st] (n) at (2,3) {};
  \node (o) at (32,0) {};
  \node (p) at (28,0) {};
  \node (q) at (24,0) {};
  \node (r) at (20,0) {};
  \node (s) at (16,0) {};
  \node (t) at (12,0) {};
  \node (u) at (8,0) {};
  \node (v) at (4,0) {};
  \node (w) at (0,0) {};

  \node at (42,14) {Predicates unique};
  \node at (42,10) {Prize location};
  \node at (42,4)  {Player selection};

  \draw[auto]
        (a) to node [swap] {$\lnot\ph_1$}      (b)
        (a) to node {$\ph_1$}                  (c)
        (c) to node {$\ph_2$}                  (d)
        (c) to node [swap] {$\lnot\ph_2$}      (e)
        (d) to node {$\lnot\ph_3$}             (f)
        (e) to node {$\ph_3$}                  (g)
        (e) to node [swap] {$\lnot\ph_3$}      (h)
        (f) to node {$\ph_4$}                  (i)
        (f) to node [swap] {$\lnot\ph_4$}      (j)
        (g) to node {$\ph_4$}                  (k)
        (g) to node [swap] {$\lnot\ph_4$}      (l)
        (h) to node {$\ph_4$}                  (m)
        (h) to node [swap] {$\lnot\ph_4$}      (n)
        (i) to node {$\lnot\ph_5$}             (o)
        (j) to node {$\ph_5$}                  (p)
        (j) to node [swap] {$\lnot\ph_5$}      (q)
        (k) to node {$\lnot\ph_5$}             (r)
        (l) to node {$\ph_5$}                  (s)
        (l) to node [swap] {$\lnot\ph_5$}      (t)
        (m) to node {$\lnot\ph_5$}             (u)
        (n) to node {$\ph_5$}                  (v)
        (n) to node [swap] {$\lnot\ph_5$}      (w);
\end{tikzpicture}
\end{center}

Assume, without loss of generality, that the prize is located behind door 1.
This allows us to zoom in onto the right-most sub-tree rooted at $\lnot\ph_3$ and add the Host door selection predicates.
These predicates are the host constraints, $\ph_6$, which we will require to be true with probability 1, and then predicates that elicit the host's selection.
As not all host selections are legal, some branches here have probability 0 (shown dashed).

\begin{center}
\begin{tikzpicture}[scale=0.3,st/.style={circle,fill=black,inner sep=0pt,minimum size=1mm}]
  \node     (a) at (20,17) {};
  \node[st] (c) at (20,15) {};
  \node[st] (d) at (29,12) {};
  \node[st] (e) at (11,12) {};
  \node[st] (f) at (29,9) {};
  \node[st] (g) at (17,9) {};
  \node[st] (h) at (5,9) {};
  \node[st] (f') at (29,6) {};
  \node[st] (g') at (17,6) {};
  \node[st] (h') at (5,6) {};
  \node[st] (i) at (32,3) {};
  \node[st] (j) at (26,3) {};
  \node[st] (k) at (20,3) {};
  \node[st] (l) at (14,3) {};
  \node[st] (m) at (8,3) {};
  \node[st] (n) at (2,3) {};
  \node (o) at (32,0) {};
  \node (p) at (28,0) {};
  \node (q) at (24,0) {};
  \node (r) at (20,0) {};
  \node (s) at (16,0) {};
  \node (t) at (12,0) {};
  \node (u) at (8,0) {};
  \node (v) at (4,0) {};
  \node (w) at (0,0) {};

  \node at (42,13) {Player selection};
  \node at (42,7)  {Host constraints};
  \node at (42,4)  {Host selection};

  \draw (a) to (c);

  \draw[auto]
        (c) to node {$\ph_4$}                  (d)
        (c) to node [swap] {$\lnot\ph_4$}      (e)
        (d) to node {$\lnot\ph_5$}             (f)
        (e) to node {$\ph_5$}                  (g)
        (e) to node [swap] {$\lnot\ph_5$}      (h)
        (f) to node {$\ph_6$}                  (f')
        (g) to node {$\ph_6$}                  (g')
        (h) to node {$\ph_6$}                  (h')
        (f')  to node [swap] {$\lnot\ph_7$}      (j)
        (g')  to node [swap] {$\lnot\ph_7$}      (l)
        (h')  to node [swap] {$\lnot\ph_7$}      (n)
        (j) to node {$\ph_8$}                  (p)
        (j) to node [swap] {$\lnot\ph_8$}      (q)
        (l) to node [swap] {$\lnot\ph_8$}      (t)
        (n) to node {$\ph_8$}                  (v);
  \draw[auto,dashed]
        (f')  to node {$\ph_7$}              (i)
        (g')  to node {$\ph_7$}                  (k)
        (h')  to node {$\ph_7$}                  (m)
        (i) to                                 (o)
        (k) to                                 (r)
        (l) to node {$\ph_8$}                  (s)
        (m) to                                 (u)
        (n) to node [swap] {$\lnot\ph_8$}      (w);
\end{tikzpicture}
\end{center}

Each of the three major branches with non-zero probability has
equal prior probability.  Of these the left-hand two
($\lnot\ph_4$) each have the host forced to open one particular
door, and hence the remaining door has the prize -- the player
is better off swapping. Only on
the left hand branch when the player correctly guessed the
prize initially is the player better off not swapping, but this
is a less likely outcome than the other.  Hence the player is
better off swapping.

\section{Discussion}\label{Discussion}

A key goal of this research is that of integrating logic and
probability, a problem that has a history going back around 300
years and for which three main threads can be discerned. The
oldest by far is the philosophical/mathematical thread that can
be traced via Boole \cite{boole-laws,boole-studies} back to
Jacob Bernoulli in 1713. An extensive historical account of
this thread can be found in \cite{hailperin}; the idea of
putting probabilities on sentences goes back to before
\cite{los55} which contains references to even earlier
material; the important Gaifman condition appeared in
\cite{gaifman64} and was further developed in
\cite{gaifman-snir}; in \cite{scott-krauss} the theory is
developed for infinitary logic; overviews of more recent work
from a philosophical perspective can be found in
\cite{hajek,williamson02,williamson-philosophy-of-mathematics}.
The second thread is that of the knowledge representation and
reasoning community in artificial intelligence, of which
\cite{nilsson86,halpern90,fagin-halpern,halpern-uncertainty,shirazi-amir}
are typical works. The third thread is that of the machine
learning community in artificial intelligence, of which
\cite{muggleton95stochastic,DeRaedt-Kersting,milch05blog,richardson06,milch07,braz07thesis,kersting07blp,pfeffer07,Church-language}
are typical works.

An important and useful technical distinction that can be made
between these various approaches is that the combination of
logic and probability can be done externally or internally
\cite{williamson-philosophy-of-mathematics}: in the external
view, probabilities are attached to sentences in some logic; in
the internal view, sentences incorporate statements about
probability. One can even mix the two cases so that
probabilities appear both internally and externally.
We now examine each of these in turn.

\paradot{Probabilities inside sentences}
In the internal view, the uncertainty is modeled inside the
sentences of a theory.
For this to be possible, we must make a careful choice of
logic; in particular, first-order logic (alone) is not
expressive enough for this purpose.
There has been a tradition of extending first-order logic with
probabilistic extensions
\cite{halpern-uncertainty,hajek,williamson02}.
A good alternative approach, studied in
\cite{ng-lloyd-JAL,ng-lloyd-uther-AMAI}, is to simply adopt higher-order logic.
The most crucial property of higher-order
logic that we exploit is that it admits so-called higher-order
functions which take functions as arguments and/or return
functions as results. It is this property that allows the
modelling of, and reasoning about, probabilistic concepts
directly in higher-order theories.

\paradot{Probabilities outside sentences}
In contrast to the internal view, almost all other approaches
to integrating logic and probability model uncertainty by
putting probabilities outside sentences.
This natural idea has been taken up by many
researchers and has a large body of theoretical support.
Here we follow the lead of Gaifman and Snir for first-order
logic in \cite{gaifman-snir} (that builds on earlier work in
\cite{gaifman64}). They showed that, under certain conditions,
there is a probability on sentences that is strictly positive
on consistent sentences (that is, those that have a model).
This is an important property of any probability that is
intended to be used as a prior in Bayesian inference.
An accessible account of this material can be found in
\cite{paris-book}.

Other such systems, and there are now many of these, include Bayesian logic
programs \cite{kersting07blp}, Markov logic networks
\cite{richardson06}, and stochastic logic programs
\cite{muggleton95stochastic}. While the intention is usually
that the probabilities define (or at least constrain) a
distribution on the set of interpretations, some systems take
other approaches. For example, the probabilities can be used to
define a distribution on proofs or a distribution on programs.
For a taxonomy of such systems, see \cite{milch07}.

\paradot{Conclusion}
This paper provides much of the foundation for the design of an
integrated probabilistic reasoning system that can handle
probabilities both inside and outside sentences.
The main challenge for the future lies in the discovery of reasonable
approximation schemes for the different currently incomputable
aspects of the general theory.

\paradot{Acknowledgements}
The research was partly supported by the Australian Research Council Discovery Project DP0877635
``Foundations and Architectures for Agent Systems''.
NICTA is funded by the Australian Government as represented by
the Department of Broadband, Communications and the Digital
Economy and the Australian Research Council through the ICT
Centre of Excellence program.


\addcontentsline{toc}{section}{\refname}
\bibliographystyle{alpha}

\newcommand{\etalchar}[1]{$^{#1}$}

\newpage\appendix
\section{List of Notation}\label{secApp}

\begin{tabbing}
  \hspace{2cm} \= \hspace{11cm} \= \kill
  $x,y,z$            \> variables                                                            \\[0.5ex]
  $t,r,s$            \> terms                                                                \\[0.5ex]
  $\alpha,\beta$     \> type of a term                                                       \\[0.5ex]
  $\tB$              \> type of the booleans                                                 \\[0.5ex]
  $\imath$           \> type of individuals                                                  \\[0.5ex]
  $\top$             \> Truth                                                                \\[0.5ex]
  $\bot$             \> Falsity                                                              \\[0.5ex]
  $\ph,\chi,\psi$    \> formula = term of type $\tB$, called sentence if closed              \\[0.5ex]
  $\S$               \> set of all sentences                                                 \\[0.5ex]
  $\I$               \> set of interpretations                                               \\[0.5ex]
  $\widehat\I$       \> set of separating interpretations                                    \\[0.5ex]
  $I$                \> interpretation                                                       \\[0.5ex]
  $\model{(\ph)}$ \> $\{I\in\I|\ph$ is valid in $I\}$ = set of models of $\ph$               \\[0.5ex]
  $\sepmodel{(\ph)}$ \> $\{I\in\widehat\I|\ph$ is valid in $I\}$ = set of separating models of $\ph$     \\[0.5ex]
  $\B$               \> Borel $\sigma$-algebra generated by $\{\model{(\ph)}|\ph\in\S\}$, if alphabet countable     \\[0.5ex]
  $\widehat\B$       \> Borel $\sigma$-algebra generated by $\{\sepmodel{(\ph)}|\ph\in\S\}$, if alphabet countable     \\[0.5ex]
  $\mu,(\hat\mu)$    \> (estimated) probability on sentences                                 \\[0.5ex]
  $\mu^*,(\widehat\mu^*)$ \> probability on sets of (separating) interpretations             \\[0.5ex]
  $i,j,k,n$          \> natural numbers used for indexing                                    \\[0.5ex]
  $\ph_1,\ph_2,...$ \> enumeration of some or all sentences                                  \\[0.5ex]
  $S$               \> $\subseteq\{1\!:\!n\} \equiv \{1,...,n\}$ = index of ``positive'' $\ph$ in ... \\[0.5ex]
  $\psi_S \equiv \psi_{n,S}$ \> $(\bigwedge_{i \in S} \ph_i) \wedge (\bigwedge_{j \in \{1:n\} \setminus S} \neg \ph_j)$ = hierarchical basis \\[0.5ex]
  $\a_{n,S}$        \> $\mu(\psi_{n,S})$ = base probabilities                                \\[0.5ex]
  $\prior$           \> prior probability  (usually Gaifman and Cournot)                     \\[0.5ex]
  $\mu_0(\ph_i)$    \> $:\{\ph_1,...,\ph_n\}\to[0,1]$ = constraints on $\mu$: $~\mu(\ph_i)=\mu_0(\ph_i)$ \\[0.5ex]
\end{tabbing}

\section{List of Definitions, Theorems, Examples, ...}

\theoremlisttype{allname}
{
\listtheorems{theorem,corollary,lemma,definition,example,proposition,principle,table,figure}
}

\end{document}